\documentclass{article}
\usepackage[english]{babel}
\usepackage[letterpaper,top=2cm,bottom=2cm,left=3cm,right=3cm,marginparwidth=1.75cm]{geometry}
\usepackage{amsfonts}
\usepackage{amsmath}
\usepackage{amsthm}
\usepackage{graphicx}
\usepackage{tikz}
\usetikzlibrary{arrows.meta,positioning,calc}
\usepackage{needspace}
\usepackage{longtable,array}
\usepackage{microtype}
\usepackage{xcolor}
\usepackage{enumitem}
\usepackage[colorlinks=true, allcolors=blue]{hyperref}
\usepackage[numbers,sort&compress]{natbib}

\newtheorem{thm}{Theorem}[section]
\newtheorem{prop}[thm]{Proposition}
\newtheorem{lem}[thm]{Lemma}

\newcommand{\E}{\mathop{\mathbb{E}}}
\newcommand{\mc}{\mathcal}

\newcommand{\Prob}{\mathop{\mathbb{P}}}

\newcommand{\tail}{\operatorname{tail}}
\newcommand{\KL}{\operatorname{KL}}
\newcommand{\ind}{\mathbf{1}}
\newcommand{\keep}{\mathrm{keep}}
\newcommand{\del}{\mathrm{del}}

\title{No Truly Sublogarithmic-Round Algorithm for Distributed MIS in the LOCAL Model}
\author{
  Seri Khoury\thanks{Cognifence. \texttt{seri.khoury@cognifence.ai}}
  \and
  Aaron Schild\thanks{Google Research. \texttt{aaron.schild@gmail.com}}
}
\hypersetup{pdftitle={No Truly Sublogarithmic-Round Algorithm for Distributed MIS in the LOCAL Model},
pdfauthor={Seri Khoury, Aaron Schild}}
\begin{document}
\maketitle

\begin{abstract}
We prove that $o(\frac{\log n}{\log\log n})$-round randomized LOCAL algorithms find a maximal independent set (MIS) with subconstant probability on some input. This shows that the 40-year-old $O(\log n)$-round randomized CONGEST algorithm of Luby and Alon, Babai, and Itai is nearly optimal. Our lower bound improves upon the $\Omega(\sqrt{\log n})$-round lower bound of Khoury and Schild and nearly closes the 20+ year-old gap between the $O(\log n)$-round algorithm and the $\Omega(\sqrt{\frac{\log n}{\log\log n}})$-round lower bound of Kuhn, Moscibroda, and Wattenhofer.

To close this gap, one must show a lower bound for a graph that is not locally treelike. We do this by starting with a $D$-regular, $\Theta(\log_D n)$-girth graph, with $D = \Theta(\log n)$. Crucially, this is not our final graph. We obtain our hard instance by \emph{replacing} each vertex of this treelike graph with a clique on $n^c$ vertices for some constant $c \in (0,1)$. We then replace each edge with a bipartite Erd\H{o}s--R\'enyi graph between the corresponding cliques, with each edge present independently with probability $1/2$.

Intuitively, finding an MIS is challenging in this graph because it blocks the classic shattering phenomenon that underpins many algorithms in distributed computing. Roughly speaking, one round of Luby/ABI does not appreciably affect the structure of our hard instance. A clique is occupied by an independent set vertex with probability $\Theta(1/D)$, causing only an \emph{additive} decrease in the maximum degree of the connectivity graph between cliques. This is in contrast with its effect on $\Delta$-regular trees, where it achieves a constant-factor \emph{multiplicative} reduction in the maximum degree of the tree.

We extend this intuition to arbitrary LOCAL algorithms via a novel round elimination approach on our non-treelike hard instance. We describe algorithms that output a set of vertices in this graph using two metrics: their \emph{occupancy probability} and their \emph{conflict probability}. Our round elimination framework takes an $r$-round algorithm and produces an $(r-1)$-round algorithm where occupancy and conflict probabilities collectively slightly degrade, but individually can significantly improve or degrade. This round elimination result is obtained by exploiting the lack of large biclique complements in dense bipartite Erd\H{o}s--R\'enyi graphs, which in turn allows us to bound the entropy of the output distribution for any independent set algorithm.
\end{abstract}

\newpage

\tableofcontents

\newpage

\section{Introduction}
\label{sec:introduction}

In the LOCAL model of distributed computing, each vertex is an independent processor
that is given (a) an $O(\log n)$-bit node ID and (b) an infinite tape of random bits,
where tapes are independent for different vertices and IDs are distinct.
There are $r$ communication rounds, after which all vertices have received
all information from vertices within distance $r$. The vertices
must solve the problem of interest using only this local information.

One of the most extensively studied problems in distributed computing is the problem of
finding a maximal independent set (MIS): a set of pairwise nonadjacent
vertices that cannot be enlarged by adding another vertex ~\cite{Luby86,AlonBI86,RozhonG20,2013Barenboim,BalliuBHORS21,Ghaffari16,BarenboimEPS16,KuhnMW16,BarenboimE10,Linial92,Naor91,0001GHIR23,0001G23,GhaffariP19,GG24,BalliuBO22,Balliu0KO22,Balliu0KO21,LenzenW11,MetivierRSZ11,wattenhofer2020mastering,SchneiderW08}. Classical
randomized algorithms of Luby~\cite{Luby86} and Alon, Babai, and
Itai~\cite{AlonBI86} from 1986 find an MIS in $O(\log n)$ communication rounds with high
probability in the LOCAL model, where $n$ is the number of vertices.
Although later algorithms substantially improved the runtime on special graph classes~\cite{BarenboimEPS16,Ghaffari16,KS26},
$O(\log n)$ remains the best known worst-case bound in terms of $n$ alone
for general graphs. Kuhn, Moscibroda, and Wattenhofer~\cite{KuhnMW04,KuhnMW16}
proved an $\Omega(\sqrt{\log n/\log\log n})$ randomized lower bound.
Recently, Khoury and Schild~\cite{KS25,BCDO26} improved it to
$\Omega(\sqrt{\log n})$, using a lower bound for maximal matching on
trees and a reduction to MIS.

Despite 40 years of progress, whether MIS admits an $o(\log n)$-round algorithm
on general graphs remains open. In this work, we nearly close the gap: our lower bound
rules out $o(\log n/\log\log n)$ rounds and, in particular,
$O(\log^{1-\varepsilon}n)$ rounds for every fixed $\varepsilon>0$.
It leaves a factor of $\log\log n$ between the upper and lower bounds.

\begin{thm}[Randomized LOCAL lower bound]
\label{thm:local-main}
For every fixed $\sigma\in(0,1]$, there are constants $c_\sigma>0$
and $n_\sigma$ such that the following holds for every $n\ge n_\sigma$.
Any randomized LOCAL algorithm that outputs an MIS with probability
at least $\sigma$ on every $n$-vertex graph requires at least
\[
 c_\sigma\frac{\log n}{\log\log n}
\]
rounds in the worst case.
\end{thm}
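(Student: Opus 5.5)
We will prove Theorem~\ref{thm:local-main} by instantiating the hard instance described in the abstract and running a round-elimination argument on it. Fix $D=\Theta(\log n)$ and a $D$-regular base graph $H$ on $N$ vertices with girth $g=\Theta(\log_D N)=\Theta(\log n/\log\log n)$. Blow up each vertex of $H$ into a clique $K_v$ of size $m=n^{c}$. For every edge $uv$ of $H$, place a bipartite Erd\H{o}s--R\'enyi graph $G(m,m,1/2)$ between $K_u$ and $K_v$. Choose $N$ so that $Nm\le n$, padding with isolated vertices if necessary, and randomize the IDs. An $r$-round algorithm with $r<g/2$ sees, around each clique, a view whose connectivity structure between cliques is a $D$-regular tree of depth $r$. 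By symmetry we may therefore describe the algorithm through two per-clique quantities: the \emph{occupancy probability} $p$, the probability that clique $K_v$ contains an output vertex, and the \emph{conflict probability} $q$, the probability that two adjacent output vertices appear (within a clique or across an edge of $H$).

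For an MIS, every clique contains exactly one vertex, so $p=1$ and $q=0$. For a $0$-round algorithm we can show directly that $p$ large forces $q$ large. Because the bipartite graph between $K_u$ and $K_v$ has no large biclique complement, two independently chosen output vertices in adjacent cliques conflict with probability about $1/2$. Hence $q\gtrsim p^2$ for $0$-round algorithms, up to small errors.

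The core step is a round-elimination lemma. From an $r$-round algorithm with parameters $(p,q)$ we construct an $(r-1)$-round algorithm with parameters $(p',q')$ satisfying a potential inequality. A natural target is $\Phi(p',q')\ge \Phi(p,q)-O(1/D)$, or a multiplicative version such as $q'\le q\cdot D^{O(1)}$ with $p'\ge p-O(1/D)$, chosen so that the potential degrades only slowly. The construction is as follows: each clique simulates the outer round by sampling the unseen neighbor views from their conditional distribution. The error is controlled by an entropy argument. Any independent-set output confined to a clique pair has low entropy, because an independent set in the dense ER bipartite graph occupies essentially one side, since $G(m,m,1/2)$ has no $\Omega(\log m)\times\Omega(\log m)$ empty biclique. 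This low entropy lets us apply Pinsker/KL bounds to show that the output at $v$ is nearly independent of the last-layer randomness of any single neighbor. Averaging over the $D$ neighbors then produces only an additive $O(1/D)$ loss per round, not the constant-factor loss that occurs on trees.

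Iterating the lemma $r$ times yields a $0$-round algorithm with $p\ge \sigma'-O(r/D)$ and $q\le$ small. If $r\le c_\sigma\log n/\log\log n\ll D$, this contradicts the $0$-round bound. Since $g=\Theta(\log n/\log\log n)$, the tree-like view condition holds throughout the iteration. Finally, success probability $\sigma$ is converted into per-clique bounds on $p$ and $q$ via symmetry and Markov.

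The main obstacle is the round-elimination lemma itself: stating the right joint invariant of $p$ and $q$ so that it degrades by only $O(1/D)$ per round even though each of $p$ and $q$ individually may change a lot. The entropy bound from the missing large biclique complements is where we would concentrate our effort first.
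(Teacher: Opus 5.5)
You have the right architecture: the same clique blow-up of a high-girth $D$-regular skeleton, occupancy and conflict as the tracked quantities, conditional resampling of the outer layer, the radius-zero baseline $\epsilon=p^2/2$, and an information budget split among $D$ independent branches. But the step you defer is essentially the entire proof, and the invariants you suggest do not work. The multiplicative form $q'\le qD^{O(1)}$ fails at the first step. After repair the original rule has $q=0$, yet resampling creates conflicts: the radius-zero resampling of any rule with occupancy $p$ has conflict exactly $p^2/2$. So no such invariant can hold. More fundamentally, resampling alone cannot be iterated. Its cost is $\KL\le 2d/D$ with $d=H(\mc Z_v\mid N_{s-1}(v))-H(\mc Z_v\mid N_s(v))$. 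Your ``low entropy'' holds, after discarding a small tail, only for a zero-conflict rule: there the exact conditional supports of the two endpoints given $F_{s-1}(uv)$ cannot both be large, and the $D$ independent branch lists consolidate into one list of size $M=O(\log k)$. After one resampling the rule has positive conflict, and its conditional support can be all of $[k]$. Then $d$ can be as large as $\log(k+1)\ge D$, which makes the bound $2d/D$ useless. The missing idea is \emph{truncation}. Before each resampling, keep only the $m=O(D\log k)$ most probable coordinates given the smaller view, and prove that the deleted mass obeys $\tau\le 2100\log D/D+6\sqrt w$, where $w$ is the conflict of the deleted part (Proposition~\ref{prop:truncation}). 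This requires three ingredients: a tail rather than support version of the biclique argument (Lemma~\ref{lem:diffuse}, Proposition~\ref{prop:endpoint-tails}); a branch-transfer lemma (Lemma~\ref{lem:branch-transfer}); and the coincidence bound $\E[f_uf_v]\ge\mu^2/2-2h(\mu)/D$ for adjacent deleted masses (Lemma~\ref{lem:scalar-entropy}). Charging lost occupancy to removed conflict is the joint invariant you are missing. Its additive loss per round is $2100\log D/D$, and you give no argument for $O(1/D)$.

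Even a correct one-step lemma, iterated as you describe, loses powers of $r$. Per-step errors add on the $\sqrt{\epsilon}$ scale, so $\epsilon_r+\sum_jw_j\le\frac2D(\sum_j\sqrt{d_j})^2\le\frac{2r}D\sum_jd_j$. This gives only $\tilde\Omega(D^{1/3})$ rounds with the per-step bound $d_j\le p\log(em/p)$, and $\tilde\Omega(D^{1/2})$ with the telescoping bound $\sum_jd_j\le p\log(emr/p)$. Reaching $\Omega(D/\log D)$ requires comparing each deleted part directly with the original rule. The first-failure decomposition (Lemma~\ref{lem:ancestors}) exhibits each deleted rule as the multi-layer resampling of a zero-conflict restriction of $\mc A_0$ with an $m$-coordinate conditional list. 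The multi-layer projection bound (Proposition~\ref{prop:direct-projection}), which rests on the conditional edge-entropy inequality (Lemma~\ref{lem:conditional-edge-entropy}), then has no factor for the number of hidden layers. If you literally mean Pinsker, it is also the wrong conversion. A total-variation error of order $\sqrt{\KL}$ set against the quadratic baseline $p^2/2$ yields only $p=O((\log D/D)^{1/3})$, even for one round. You need $|\sqrt{P(E)}-\sqrt{Q(E)}|\le\sqrt{\KL(P\Vert Q)}$ (Lemma~\ref{lem:sqrt-event}) so that the units match. Finally, the LOCAL side is not ``symmetry and Markov'':
\begin{itemize}
\item since $\sigma$ may be small, the initial conflict is not small unless you first repair the output, discarding candidates that have a candidate neighbor, to obtain a zero-conflict rule;
\item an MIS occupies a clique only outside an external-domination event of probability at most $(k+1)^De^{-k2^{-D}}$, which needs $k$ far larger than $2^D$;
\item randomized identifiers and tapes are information shared with neighbors, not a symmetric function of the edge matrices, so the paper extracts identifiers from disjoint matrix blocks, fixes an identifier-indexed tape table that it averages over afterward, and gates on identifier collisions (Lemmas~\ref{lem:local-identifiers} and~\ref{lem:local-simulation}).
\end{itemize}
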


This holds even when identifiers are the
numbers in $[n]$, the algorithm knows $n$ exactly, and the algorithm knows the maximum degree up to a constant factor.\footnote{Approximation is used here for a technicality: degrees are random since the graph is random.}

\paragraph{The high-girth barrier.}
The vast majority of lower bounds in the LOCAL model are proven for graphs
that are locally treelike for the following reason.\footnote{The prior $\Omega(\sqrt{\log n})$ lower bound for MIS applies for line graphs of trees, which are not trees, but was obtained by studying maximal matching on trees.}
Consider the endpoints of an arbitrary tree edge.
Fixing the common information for those endpoints leaves separate branches that
do not interact. This structure underlies both the
indistinguishability constructions of Kuhn, Moscibroda, and
Wattenhofer~\cite{KuhnMW16} and many round-elimination arguments. Furthermore,
Ghaffari~\cite{Ghaffari16,BarenboimEPS12,LenzenW11} gave a $O(\sqrt{\log n})$-round MIS algorithm
for graphs with girth $\Omega(\sqrt{\log n})$. Consequently, a stronger
MIS lower bound cannot be obtained by requiring every relevant LOCAL
neighborhood to be a tree. Thus, a new hard instance is required.

\paragraph{Overcoming the high-girth barrier.}
We construct our hard instance in two steps. First, we retain a
regular high-girth graph $H$ as a \emph{skeleton}, with degree $D = \Theta(\log n)$. Second, we replace every
skeleton vertex $v$ by a clique $C_v$ of size $k$, and replace each
skeleton edge $uv$ by a bipartite Erd\H{o}s--R\'enyi
graph with edge probability $1/2$.
The resulting graph contains triangles inside every clique and many
other short cycles. Neighborhoods of adjacent vertices can therefore
overlap heavily, even though the skeleton neighborhoods are trees.

Intuitively, this graph is hard because it blocks the shattering phenomenon
present in many LOCAL algorithms (e.g. \cite{Ghaffari16, BarenboimEPS12}).
Applying the first $t$ rounds only reduces the degree of the skeleton graph
by $O(t)$ additively, requiring $\Theta(D)$ rounds
to consume the entire graph. Arbitrary LOCAL algorithms could do clever operations
that consume more degree per round. Theorem \ref{thm:local-main} shows that
this is not possible.

\paragraph{The occupancy theorem.}
We prove Theorem \ref{thm:local-main} via a reduction to a statement
(Theorem \ref{thm:main}) that involves simpler types of algorithms that
are essentially equivalent to LOCAL algorithms on our hard instance.

For this part, the skeleton is the infinite rooted $D$-regular tree, and each
clique has named vertices $[k]$, which are also called \emph{coordinates}.
Let $G_{uv}\in\{0,1\}^{k\times k}$
be the bipartite adjacency matrix at an oriented skeleton edge, with
$G_{vu}=G_{uv}^{\mathsf T}$. Entries on distinct undirected edges
and within each matrix are independent fair bits.

A radius-$r$ rule sees the matrices on edges whose endpoints are both
within skeleton distance $r$ of its root. It returns a coordinate of
its root clique, or $0$ for no selection. The same rule is used everywhere
and is unchanged by rooted isomorphisms preserving matrix entries and
coordinate names. In particular, it has no skeleton identifiers or
supplied order of branches/port numbering. It may use a private final draw; these draws
are independent at different roots after all matrices are fixed.
Section~\ref{sec:model} gives the precise model.
For adjacent skeleton vertices $u,v$ and rule $\mc A$, write
\[
 p(\mc A)=\Prob[\mc A_v\ne0],\qquad
 \epsilon(\mc A)=
 \Prob[\mc A_u,\mc A_v\ne0,\ G_{uv}(\mc A_u,\mc A_v)=1].
\]
where $\mc A_w$ denotes the output of $\mc A$ on $w$.
These are the \emph{occupancy} and \emph{conflict probability} of $\mc A$ respectively; symmetry
makes them independent of the specified root or edge.
An independent set rule has $\epsilon(\mc A)=0$. This is our main technical result:

\begin{thm}[Occupancy bound]
\label{thm:main}
There is an absolute constant $D_0$ such that, for integers $D\ge D_0$,
$k$ with $D\le\log k\le D^2$, and $r\ge 1$, every symmetric radius-$r$
rule of Section~\ref{sec:model} satisfying $\epsilon(\mc A)=0$
obeys
\begin{equation}
 \boxed{\quad
 p(\mc A)\le\min\left\{1,\frac{5000r\log D}{D}\right\}.
 \quad}
 \label{eq:main-independent}
\end{equation}
\end{thm}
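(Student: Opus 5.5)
We prove Theorem~\ref{thm:main} by round elimination on a potential that combines occupancy and conflict. The key step is a one-round lemma. From any symmetric radius-$r$ rule $\mc A$ we build a symmetric radius-$(r-1)$ rule $\mc A'$ whose occupancy and conflict satisfy an inequality of the form
\[
 \Phi\bigl(p(\mc A'),\epsilon(\mc A')\bigr)\ \ge\ \Phi\bigl(p(\mc A),\epsilon(\mc A)\bigr)-\frac{C\log D}{D},
\]
where $\Phi$ is a potential that is roughly $p$ penalised by a multiple of $\epsilon$ normalised by $p$; for instance $\Phi = p - \lambda\,\epsilon/p$ or a logarithmic variant. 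We iterate this down to radius $0$. A radius-$0$ rule sees no matrices, so it picks a coordinate at random, independently at the two endpoints. Its conflict is then exactly $p^2/2$, which forces $\Phi\le O(1/D)$ at radius $0$. Since $\epsilon=0$ at the start, summing the per-round losses gives $p(\mc A)\le 5000r\log D/D$.

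To build $\mc A'$, the root $v$ simulates $\mc A$ without seeing the outermost layer of matrices. It samples the unseen matrices itself, runs $\mc A$ on several independent fake completions, and selects a coordinate that is output consistently across these samples. On the tree skeleton, once the radius-$(r-1)$ view is fixed, the missing layers in distinct branches are independent; this is the treelike structure we keep. The occupancy of $\mc A'$ compares to that of $\mc A$ through the concentration of $\mathcal A_v$ conditioned on the smaller view. A conflict of $\mc A'$ on the edge $uv$ arises exactly when the simulated choices land on an edge of $G_{uv}$.

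The heart of the argument, and the step we expect to be the main obstacle, is controlling conflicts created by elimination. Condition on the common information of $u$ and $v$. For an independent set rule, the set $S_u$ of coordinates that $u$ outputs with non-negligible probability and the corresponding set $S_v$ must span almost no edges of $G_{uv}$. A dense Erd\H{o}s--R\'enyi bipartite graph has no biclique complement larger than $O(\log k)$ per side, except with probability $2^{-\Omega(\log^2 k)}$. We can union bound this over all views, since $\log k\ge D$ gives enough slack. Hence one of $S_u$, $S_v$ is small. This gives an entropy bound: conditioned on the shared view, one endpoint's output has entropy at most $O(\log\log k)=O(\log D)$, using $\log k\le D^2$.

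With an entropy of $O(\log D)$ at one endpoint of each of the $D$ edges, a pigeonhole or averaging over the $D$ branches shows that the root's output can depend only weakly on a typical branch. Replacing that dependence by a sample costs only $O(\log D/D)$ in occupancy and conflict. We would make this quantitative via a KL / Pinsker bound, using the $\KL$ macro, and with a tail estimate on the near-biclique-free property. We also have to handle the private randomness and the symmetry requirement, which we do by averaging over coordinate-preserving automorphisms. Finally we choose $D_0$ large enough that all error terms are dominated by $\log D/D$ with a constant below $5000$.
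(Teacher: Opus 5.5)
Your ingredients are the right ones: the radius-zero identity $\epsilon=p^2/2$, biclique-freeness of the random matrices, the information budget over $D$ independent branches, and the KL-to-square-root conversion. But there is a genuine gap exactly where the round count is decided. The one-round inequality $\Phi(p',\epsilon')\ge\Phi(p,\epsilon)-C\log D/D$ is asserted rather than proved, and it does not follow from what one can establish about a single round.

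After the first elimination the rule has positive conflict, so your support argument (``for an independent set rule, $S_u,S_v$ span almost no edges'') no longer applies: an event of tiny probability can make supports huge. What one can prove instead is the tail statement of Proposition~\ref{prop:truncation}, $\tau\le b+6\sqrt{w}$, obtained via tails, branch transfer, and Lemma~\ref{lem:scalar-entropy}. Thus the mass deleted in a round is controlled only by the \emph{square root} of the deleted part's conflict. Meanwhile resampling errors add on the $\sqrt{\epsilon}$ scale, $\sqrt{\epsilon'}\le\sqrt{\epsilon-w}+\sqrt{2d/D}$.

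These estimates allow round one to create conflict $\epsilon_1$ up to order $p\log D/D$. Round two may then delete it at an occupancy cost up to $b+6\sqrt{\epsilon_1}$, which is of order $\sqrt{p\log D/D}$. So for any potential that equals $p$ at zero conflict and does not increase with $\epsilon$, two rounds may lose $\Theta(\sqrt{p\log D/D})$, which is $\gg\log D/D$ when $p\gg\log D/D$. Chaining one-step estimates round by round yields only $p=O(r^2\log D/D)$ (Proposition~\ref{prop:quadratic}).

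The paper's linear bound is not a per-round potential argument. Its steps are:
\begin{itemize}
\item Each deleted part $\mc D_j$ is exactly the many-layer conditional resampling of a first-failure restriction $\widehat{\mc D}_j$ of the \emph{original} zero-conflict rule (Lemma~\ref{lem:ancestors}).
\item The conflict of each such projection is bounded by the edge-entropy inequality of Lemma~\ref{lem:conditional-edge-entropy} and Proposition~\ref{prop:direct-projection}, which has no factor for the number of hidden layers.
\item This gives the global budget $\epsilon_r+\sum_j w_j\le 2p\log(emr/p)/D$.
\item Only then is Cauchy--Schwarz applied, once, to reach the self-bounding inequality $p\le rb+\sqrt{76rp\log(emr/p)/D}$.
\end{itemize}
Nothing in your plan plays this role.

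The entropy step is also incomplete, already for $r=1$. Nonexceptionality is a property of $G_{uv}$ alone, uniform over all submatrices; no union bound over views is needed, and none would be affordable. For a zero-conflict rule it only says that \emph{one} endpoint has a short list given the \emph{shared} view, and which endpoint depends on the matrix. The information budget of Lemma~\ref{lem:product-information} instead needs a bound on $H(\mc A_v\mid N_{s-1}(v))$, the entropy given the root's own smaller view. Getting there requires consolidating the independent branch lists, each of which must capture the output, into one list depending only on the inner view, and bounding the mass outside it (Lemmas~\ref{lem:frequent-labels} and \ref{lem:branch-transfer}).

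That bound must also carry the occupancy factor, $H\le p\log(em/p)$ (Lemma~\ref{lem:occupancy-entropy}). With entropy $O(\log D)$ as you state, the KL cost converts only to $p^2/2=O(\log D/D)$, i.e.\ $p=O(\sqrt{\log D/D})$; it is the division by $p$ that reaches $\log D/D$.

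Finally, the elimination step should output $a$ with probability exactly $\Prob[\mc A_v=a\mid N_{r-1}(v)]$ for the truncated rule. This preserves occupancy and makes the conflict change a relative-entropy quantity (Lemma~\ref{lem:resampling-kl}). Selecting a coordinate ``output consistently across fake completions'' has neither property.
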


In particular, occupancy probability tends to zero whenever
$r=o(D/\log D)=o((\log n)/(\log \log n))$. By a union bound, the probability that a specified clique can be
dominated by at most one vertex from each neighboring clique is at most
$(k+1)^D\exp(-k2^{-D}) \le 1/\text{poly}(n)$.
Thus, with high probability, every MIS intersects every
clique, so an algorithm for MIS with success probability at least $\beta$
has occupancy probability at least 
$(1 - 1/\text{poly}(n))\beta$. This yields a contradiction whenever
$r = o((\log n)/(\log \log n))$.

Section~\ref{sec:local-reduction} proves Theorem~\ref{thm:local-main}
using Theorem~\ref{thm:main} as a black box. The remaining main-text
sections prove Theorem~\ref{thm:main}, beginning with a complete
one-round warmup. The appendix supplies the details of the LOCAL reduction.

We used AI in the production of this work.
We discuss our use of AI immediately before the references.

\subsection{Related Work}

Round elimination replaces an algorithm using $r$ rounds by one using
$r-1$ rounds for a suitably transformed problem. The lower bound for the
distributed Lov\'asz local lemma by Brandt, Fischer, Hirvonen, Keller,
Lempi\"ainen, Rybicki, Suomela, and Uitto~\cite{BrandtFHKLRSU16} developed
this method through sinkless orientation. Brandt~\cite{Brandt19}
subsequently gave an automatic transformation for locally checkable
problems on sufficiently high-girth graphs. Balliu, Brandt, Hirvonen,
Olivetti, Rabie, and Suomela~\cite{BalliuBHORS21} used it to prove tight
dependence on the maximum degree for MIS and maximal matching in the
relevant small-degree regimes. The also showed the result we have in this
work for deterministic LOCAL algorithms.

Further applications include exact degree dependence for bipartite
maximal matching~\cite{BO20}, lower bounds for ruling
sets~\cite{BalliuBO22}, MIS and relaxed dominating-set problems in
trees~\cite{Balliu0KO21}, and coloring~\cite{Balliu0KO22}. Khoury and
Schild~\cite{KS25} developed a self-reduction for matching that
preserves the problem and tracks the probability that a vertex remains
unmatched. Our occupancy argument also removes information from local
views, but its estimates concern coordinate distributions and conflicts
across random matrices. Balliu, d'Amore, and Olivetti~\cite{BDO26}
recently generalized self-reduction through the classical
round-elimination framework by changing the error quantity being
measured.

In this paper, we do extensive tracking of how much importance
certain skeleton edges have to the final output of the algorithm.
These importances are quantified using entropy, which was partly
inspired by the work of Backhausz, Gerencs\'er, and Harangi~\cite{BGH19}.

\subsection{Open Problems}

While this paper nearly closes an important gap in our understanding of the complexity for
MIS, many fundamental open problems related to MIS remain:

\paragraph{Problem 1:} \emph{What is the exact round complexity of MIS in the randomized LOCAL model as a function of $n$?}

In this paper, we give a lower bound of $\Omega((\log n)/(\log \log n))$ on this complexity.
The classical $O(\log n)$ round algorithm \cite{Luby86,AlonBI86} remains unbroken in spite of being 40 years old.
The precise complexity could be between these two extremes \cite{KS26}.

\paragraph{Problem 2:} \emph{What is the complexity of MIS in other distributed computing models?}

While there are many distributed computing models, two are particularly relevant to this question: quantum LOCAL and deterministic LOCAL. It is unclear if this lower bound applies to quantum LOCAL, even on our hard instance. For deterministic LOCAL, an $\Omega((\log n)/(\log \log n))$-round hardness result was already known prior to this work \cite{BalliuBHORS21} and an $\tilde{O}(\log^{5/3} n)$-round algorithm is known \cite{GG24}.

\paragraph{Problem 3:} \emph{What is the complexity of maximal matching and vertex cover in the randomized LOCAL model?}

The prior KMW lower bound \cite{KuhnMW04} applies to both $O(1)$-approximate vertex cover and maximal matching, while the lower bound of Khoury and Schild \cite{KS25} only applies to maximal matching. The lower bound in this paper applies to neither problem. Thus, the $\tilde{\Theta}(\sqrt{\log n})$ gap remains for both problems. 

\section{Technical Overview}
\label{sec:overview}

We now outline the techniques used to prove Theorem \ref{thm:local-main}.
We break up the discussion into three parts: describing the hard instance,
the occupancy argument (Theorem \ref{thm:main}), and the proof of
the desired result (Theorem \ref{thm:local-main}) given Theorem \ref{thm:main}.
At the end, we give a detailed outline of the paper.

Note that the occupancy discussion precedes the reduction discussion in
the technical overview, but the opposite is the case for the corresponding sections.
We chose to discuss the occupancy theorem first as it contains
all of the novelty in our argument. The reduction uses standard techniques.

\subsection{The Hard Instance}

Our hard instance is constructed as follows:

\begin{enumerate}[itemsep=1pt]
    \item Use Lemma \ref{lem:local-girth} to construct a $D$-regular graph $H$ with girth at least $(\log_D n)/100$. This constructs $H$ by arguing that a edge-count-maximizing $n$-vertex graph with the required girth and maximum degree $D$ is actually a regular graph with degree $D$. Alternatively, one can sample a random regular graph and argue that it has the required girth with positive probability.
    \item Replace each vertex with a clique of size $k = n^c$ for some constant $c\in (0,1)$.
    \item Replace each edge with a random bipartite graph, with each edge sampled independently with probability 1/2.
\end{enumerate}

Running one round of the classical algorithm \cite{Luby86,AlonBI86} on this graph has the following effects, roughly speaking:

\begin{enumerate}[itemsep=1pt]
    \item A $1/D$-fraction of cliques contain a selected vertex in expectation.
    \item Each of those cliques gets deleted.
    \item Every other clique reduces in size by a factor of roughly $2^{\ell}$, where $\ell$ is the number of neighboring cliques that contain a selected vertex.
\end{enumerate}

Roughly speaking, $\ell = \Theta(1)$ and a large fraction of cliques have the same value of $\ell$. Thus, applying one round of Luby/ABI roughly has the effect of (a) deleting a $1/D$ fraction of the cliques, reducing a typical skeleton vertex's degree by $D \times 1/D = 1$ and (b) reducing the size of most other cliques by the same constant factor. In particular, the graph is now a copy of the original, with $k$ decreased multiplicatively and $D$ decreased additively. Since $k \gg 2^{\Theta(D)}$, the algorithm must take $\Omega(D)$ iterations.

General LOCAL algorithms may have too many dependencies for this exact argument to work. We instead describe an argument based on round elimination that works by tracking occupancy probability and conflict probability.

\subsection{The Occupancy Argument}

Recall that an $r$-round rule consumes the matrices (random graphs) for each edge within distance $r$ of a skeleton vertex $v$ and outputs an integer between 0 and $k$ inclusive, with 0 output if the rule picks no vertex from $C_v$. We build up to the final argument by considering the following steps, all of which are fully proven in our writeup:\footnote{Some of the propositions and lemmas in this work are only used in the warmup. In all cases, they are simpler versions of results that are used in the final proof.}

\begin{enumerate}[itemsep=1pt]
    \item A standalone 1-round impossibility result for our hard instance
    \item A roughly $\Omega(D^{1/3})$-round impossibility result using a weak round elimination result and a weak final bound
    \item A roughly $\Omega(D^{1/2})$-round impossibility result using the same round elimination result, but an optimal final bound
    \item The final $\Omega(D/(\log D))$-round impossibility result using a better round elimination result
\end{enumerate}

Each of these arguments contains variations of the same three steps, which we initially point out during the one-round warmup discussion.

\subsubsection{The One Round Warmup}

Consider a one round rule; i.e. a rule that consumes the $D$ matrices for
skeleton edges incident with a root and produces a number between 0 and $k$ inclusive.
The rule has zero conflict probability, so never outputs two adjacent vertices in
two adjacent cliques. We analyze its behavior in three steps.

\paragraph{1. Short list given one fixed incident matrix (Lemma \ref{lem:matrix}):}
At one round, adjacent cliques see a common matrix $G_{uv}$.
All their other input matrices are independent.
Therefore, after fixing $G_{uv}$, their outputs are independent
random choices. If each endpoint still has many possible nonzero
outputs, the possible choices contain a large rectangle of matrix
entries. A typical random matrix has an edge in every such
rectangle, essentially because $k$-vertex dense Erd\H{o}s--R\'enyi graphs
have maximum clique size $O(\log k)$. Since every pair of possible choices has positive
conditional probability due to independence, this would create a conflict.
Consequently, for most matrices, at least one endpoint has only
$O(\log k)$ possible outputs.

\paragraph{2. Short list of possible outputs overall (Proposition \ref{prop:warm-list}):}
The endpoint with few possible outputs may depend on the matrix.
Symmetry nevertheless says that a random incident matrix gives
the root a short list with probability at least $1/2-o(1)$.
The $D$ incident matrices are independent, so the resulting
lists are independent. Every list that applies must contain the
root's realized nonzero output.
A coordinate that appears in many independent short lists must,
with high probability, belong to a single fixed short list (Lemma \ref{lem:frequent-labels}).
This gives a fixed list of $D^{O(1)}$ coordinates containing
almost every output.

\paragraph{3. Short list implies low occupancy via entropy (Lemma \ref{lem:warm-forget}):}
A rule could concentrate on one coordinate, so list size alone
does not bound occupancy. The additional observation is that
the output has only a limited amount of information about its
$D$ independent incident matrices (Lemma \ref{lem:product-information}).
Entropy measures this information: for probabilities
$p_1,\ldots,p_s$, it is $\sum_i p_i\log(1/p_i)$.
Independent inputs cannot each reveal a large fraction of the
same limited information budget.
By symmetry, a specified incident matrix accounts for at most
a $1/D$ fraction of that budget.

Imagine hiding every matrix and drawing a coordinate from the
rule's unconditional output probabilities. At two endpoints
these new draws are independent of the shared matrix between them, so their
conflict probability is exactly half the square of their
occupancy. Hiding the matrix can increase conflict only by an
amount controlled by the information it carried about the
outputs (Lemma \ref{lem:sqrt-event}). A short list makes
that amount small (Lemma \ref{lem:occupancy-entropy}). Combining the
two estimates gives occupancy $O(\log D/D)$.
Section~\ref{sec:warmup} proves this entire one-round argument.

\subsubsection{$\tilde{\Omega}(D^{1/3})$-Round Lower Bound via Round Elimination}

We remove the outermost input layer one round at a time.
For each fixed smaller view, keep its most probable coordinates
and change other outputs to $0$. We call this operation
\emph{truncation}. Then average the kept output probabilities
over the hidden layer and draw independently from those
averaged probabilities. Effectively, we are moving the randomness
from the outermost shell into the center. This idea is called
\emph{conditional resampling}. This defines a rule using one fewer
round and preserves the kept occupancy.

There are two quantities to control. Truncation loses occupancy,
and conditional resampling can introduce conflicts. Thus after the first
averaging step we must handle rules with small positive conflict
probability, even though the original rule has zero conflict.
Exact lists are no longer adequate: an event of tiny probability
can make the set of possible outputs enormous.
We use a short list together with the probability of falling
outside it. Section~\ref{sec:truncation} proves that a deleted
part of mass $\tau$ and conflict probability $w$ satisfies
\[
 \tau\le b+6\sqrt w,\qquad b=\frac{2100\log D}{D}.
\]
Here mass means simply the probability of a nonzero output. $\sqrt{w}$
is able to bound $\tau$ thanks to the fact that for 0-round
algorithms, conflict probability is quadratic in occupancy probability.
Thus, taking the square roots makes the ``units'' agree. We prove
this bound (Proposition \ref{prop:truncation}) in three steps:

\paragraph{1. Upper-bounding edge-neighborhood tails by conflict
probability (Proposition \ref{prop:endpoint-tails}):} This extends
the one round argument from zero conflict to positive conflict
rules. The main difference is that the notion of a shortlist is different:
we upper bound the amount of mass outside of a shortlist (called a \emph{tail})
conditioned on a common edge-neighborhood (union of neighborhoods of endpoints). This
once again uses the lack of large cliques in Erd\H{o}s--R\'enyi graphs.

\paragraph{2. Upper-bounding vertex-neighborhood tails by edge-neighborhood tails (Lemma \ref{lem:branch-transfer}):}
Each edge-neighborhood tail corresponds to a short list of outputs, with different
short lists for each of the branches. As in the one round argument, we have a bunch of
independently-chosen short lists that must intersect. This once again guarantees a short
merged list, but this time also provides a bound on the tail of that combined short list.

\paragraph{3a. Lower-bounding vertex-neighborhood tails by occupancy probability (Lemma \ref{lem:scalar-entropy}):}
The expectation of a product of two vertex-neighborhood-derived tails is
close to the product of expectations thanks to the near-independence
of adjacent vertices. An entropy-based argument occurring in
Section \ref{sec:tree-entropy} that builds on Lemma \ref{lem:product-information} shows this.

\paragraph{3b. Upper-bounding conflict probability using occupancy probability:}
This uses Proposition \ref{prop:one-step}, which in turn uses Lemma \ref{lem:product-information} to obtain the required similarity between the conditionally-resampled output distribution and the original one.

\paragraph{Bounding losses:}
As rounds are eliminated, occupancy probability tends to decrease,
while conflict probability tends to increase.
The loss estimate involves $\sqrt w$. To sum losses, we need a
bound on the \emph{sum} of the deleted conflict probabilities.
Section~\ref{sec:sequential} first gives a direct, deliberately
coarse calculation: charge the full possible information cost
at each reduction. This gives a cubic term in $r$.

\subsubsection{$\tilde{\Omega}(D^{1/2})$-Round Lower Bound}
Next, we track the output entropy already created by earlier
averaging steps. It cannot be charged anew at every step; it
can disappear only when outputs are deleted. This gives a
quadratic term via a telescoping sum. No other part of the argument changes.

\subsubsection{The Main $\Omega(D/(\log D))$-Round Lower Bound}

For the final argument, we only modify the third part of the prior arguments.
Recall that the round elimination procedure takes the original
$r$-round rule and repeatedly alternates between truncating and conditionally
resampling to turn it into lower-radius rules. This can be replaced with
a strategy that does all truncations first, followed by
one combined conditional resampling step (Lemma \ref{lem:ancestors}).
Qualitatively, this is more efficient as each truncated output is only truncated
at one step and the loss can now be attributed only to that step
as opposed to the multiple prior steps.
Quantitatively, we get to use a more efficient multi-round analogue of
Proposition \ref{prop:one-step} (Proposition \ref{prop:direct-projection}) to replace Step (3b).
This completes the proof of the occupancy theorem.

\subsection{From the Occupancy Theorem to LOCAL}

In the LOCAL model~\cite{Linial92}, vertices communicate synchronously along graph
edges, with no restriction on message size or local computation.
Every vertex runs the same program, has a distinct identifier and local
port numbers, and may know $n$ exactly. Private random tapes are
independent between vertices; they may be infinite. We count a fixed
worst-case number of communication rounds. Success means that the
\emph{entire} output is an MIS, rather than that each vertex is correct
with a fixed probability. We allow degrees and neighbor identifiers to
be known initially, which only strengthens the algorithms against
which the lower bound holds.

Theorem~\ref{thm:main} concerns a rule whose only inputs are
random inter-clique matrices. A LOCAL algorithm also sees
identifiers and receives private randomness. We will turn it
into a collection of edge-input rules without changing the matrix
distribution. A Yao minimax-style argument gives the desired result.

\subsubsection*{Preparing Vertex IDs}

Most of our task amounts to preparing the input to the LOCAL algorithm.
First, we prepare vertex IDs. Each clique contains $k$
vertices, and we must assign IDs to those vertices that are unique
and different from IDs for vertices in any other clique. Note that
a rule's only access to randomness (i.e. symmetry-breaking) is
through the skeleton edges.

Thus, we extract vertex IDs from skeleton edges as follows. For simplicity,
assume that $k$ (the clique size) is even. Split
the set of possible coordinate names $[k]$ in each clique into two fixed
halves $\{1,2,\hdots,k/2\}$ and $\{k/2+1,\hdots,k\}$.
For a skeleton vertex $v$ and a neighbor $u$, produce a bit string
of length $k^2/4$ by concatenating the (bit) entries $G_{vu}(i,j)$
with $i\le k/2$ and $j \ge k/2+1$ (i.e. first half rows to second half columns) in lexicographic order.
Then, obtain a single length $k^2/4$-bit string by XORing the strings for all neighbors of $v$.
Note that this string does not require a port numbering to produce.
Finally, break up this string into chunks with a fixed $b_{ID} = \Theta(\log n)$ length and
let the $i$th vertex in $C_v$ receive the $i$th chunk as its identifier. Some chunks may go unused.

For any pair of vertices $u,v$ and any $a\in C_u, b\in C_v$, the IDs of $a$ and $b$ are independent
random variables if $a\ne b$ thanks to the fact that the entries of $G_{vu}$ from the first half rows to second half columns
do not intersect with the entries of $G_{vu} = G_{uv}^T$ from the second half rows to the first half columns.
Thus, IDs are unique with high probability.\footnote{This argument assumes that IDs are only required to be at most $n^{O(1)}$. A more nuanced argument is required when the IDs are required to be in $[n]$. This argument is described in the appendix.} Note though that IDs are correlated with the input graph, which is allowed. 

\subsubsection*{Preparing the Private Random Strings}

Note that we said that we would produce a \emph{collection} of edge-input rules. We exploit this now.
Fix a list of infinite bitstrings, with one bitstring for each of the $2^{b_{ID}}$
possible vertex IDs. We will produce a separate edge-input rule for each possible list.\footnote{Note that there are uncountably many such lists, so the collection of obtained edge input rules is uncountable.}
We start by giving the vertex with ID $i$
access to the $i$th bitstring in the fixed list.

\subsubsection*{Obtaining a Collection of Zero-Conflict Edge-Input rules}

We now have unique vertex IDs for the vertices
and tapes (bitstrings) for each vertex, so the supplied LOCAL algorithm may be applied. This
results in one edge-input rule per list of bitstrings. The LOCAL algorithm
produces a set of vertices in the input graph, though this set
may not be an independent set thanks to the algorithm's failure probability.
We force the output set to be a (possibly not maximal) independent set
by deleting any vertex from the output with a neighbor also in the output.
This results in a collection of zero-conflict $r+3$-round edge-input rules,
where $r$ is the round complexity of the LOCAL algorithm.

\subsubsection*{Obtaining a Probability Contradiction}

Pick an arbitrary clique in the hard instance and apply each edge-input rule
to that clique and let $r = o((\log n)/(\log \log n))$. Then
Theorem \ref{thm:main} says that the occupancy probability is subconstant
for each rule. Taking the expectation over the collection implies
that the expected occupancy probability is subconstant. By the probabilistic method,
there exists a realization of the edges in the graph (and resultant vertex IDs) for which randomizing
over tapes causes occupation of the chosen clique with subconstant probability.
But this contradicts the fact that an MIS intersects each clique
with high probability, the deletion postprocessing does not modify an MIS,
and the LOCAL algorithm produces an MIS with constant probability for any input
graph with unique IDs. This is the desired contradiction.

\subsection{Outline and Reading Advice}

Section \ref{sec:local-reduction} reduces Theorem \ref{thm:local-main} to Theorem \ref{thm:main},
with some details deferred to the appendix. All other sections prove
Theorem \ref{thm:main}. Section \ref{sec:model} precisely defines $r$-round rules
and proves the fundamental facts about entropy that are used throughout the paper.
Section \ref{sec:warmup} proves a self-contained 1-round occupancy probability upper bound.
Section \ref{sec:one-step} starts building towards round elimination by introducing the concept of conditional resampling.
Section \ref{sec:tree-entropy} proves near-independence of random variables on neighboring vertices in very general settings that are useful for reasoning about multiple round eliminations at a time. Section \ref{sec:truncation} proves the main round elimination bound (Proposition \ref{prop:truncation}), effectively generalizing the techniques of Section \ref{sec:warmup}. Section \ref{sec:sequential} proves the two weaker lower bounds, while Sections \ref{sec:joint} and \ref{sec:main-proof} complete the final result.

The sections are ordered in terms of precedence: all later results only depend on earlier results, with the exception of the appendix. We recommend that the reader read the paper out-of-order. The reader should start with Section \ref{sec:local-reduction}. Then, they should skim the results of Section \ref{sec:model}, mostly to understand the definition of $r$-round rules.

A reader who wants to focus on the high-level structure of the argument should skip ahead to Section \ref{sec:one-step}, followed by the statement of Proposition \ref{prop:truncation}. Then, read Section \ref{sec:sequential} and skim Section \ref{sec:joint}. At this point, the reader may understand the high-level structure of the argument and should return to Section \ref{sec:warmup}. A reader who is more interested in techniques may prefer to start with Section \ref{sec:warmup} and then continue with the sections discussed for the high-level reader.

\section{From the Occupancy Bound to Randomized LOCAL}
\label{sec:local-reduction}

Fix a $T$-round LOCAL algorithm $\mc A$ for $n$ vertices.
The algorithm may know $n$, use independent private random tapes,
and see neighbor identifiers initially. We use legal port numbers
obtained by sorting each vertex's neighbors by identifier.
Correctness on every legal input includes these port assignments.
For $w\in V(H)$, $B_t^H(w)$ denotes the closed skeleton ball
of radius $t$ about $w$.

\begin{prop}[Finite LOCAL obstruction]
\label{prop:mis-obstruction}
Let $H$ be a simple $D$-regular graph on $N$ vertices, with
$kN\le n$ and girth at least $2T+8$.
Suppose $D\ge D_0$, $D\le\log k\le D^2$, $k$ is even, and
$k b_{\mathrm{ID}}\le\lfloor k^2/4\rfloor$, where $b_{\mathrm{ID}}=\lceil\log_2 n\rceil$.
Set
\[
 L=k\left(1+D\sum_{i=0}^{T+1}(D-1)^i\right),\qquad
 \delta_{\mathrm{ID}}=\frac{L(L-1)}n,\qquad
 \gamma_{\mathrm{EXT}}=(k+1)^D\exp(-k2^{-D}).
\]
If $\mc A$ finds an MIS with probability at least $\sigma>0$
on every legal $n$-vertex input, then
\begin{equation}
 \sigma(1-\delta_{\mathrm{ID}}-\gamma_{\mathrm{EXT}})
 \le \min\left\{1,\frac{5000(T+3)\log D}{D}\right\}.
 \label{eq:local-obstruction}
\end{equation}
\end{prop}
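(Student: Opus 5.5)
We build the hard instance on $H$ exactly as in the technical overview. Each $w\in V(H)$ is replaced by a clique $C_w$ on $[k]$. Each skeleton edge $uw$ gets an independent uniform matrix $G_{uw}$. If $kN<n$ we pad with isolated vertices, whose IDs and behaviour are irrelevant, to reach exactly $n$ vertices. Identifiers come from the XOR construction: for each $w$, XOR over neighbors $u$ the $\lfloor k^2/4\rfloor$ bits $G_{wu}(i,j)$ with $i\le k/2<j$, cut the result into chunks of $b_{\mathrm{ID}}$ bits (possible since $kb_{\mathrm{ID}}\le\lfloor k^2/4\rfloor$), and give the $i$th chunk to the $i$th vertex of $C_w$. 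We fix once and for all an assignment $\rho$ of infinite tapes to the $2^{b_{\mathrm{ID}}}$ possible IDs.

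For a fixed $\rho$, running $\mc A$ for $T$ rounds and then deleting every output vertex that has an output neighbor gives an independent set. Whether a vertex of $C_w$ survives depends only on edges and IDs within distance $T+1$ of $C_w$. Its ID depends on matrices at skeleton distance one further, and its port numbers depend on neighbors' IDs. Altogether this is a function of the matrices on edges inside $B^H_{T+3}(w)$. Since the girth is at least $2T+8$, this ball is a tree, isomorphic to the radius-$(T+3)$ ball of the infinite $D$-regular tree. The resulting map is therefore a radius-$(T+3)$ rule $\mc A^\rho$ in the sense of Section~\ref{sec:model}. Symmetry holds because nothing uses skeleton identifiers or branch orderings: the XOR is order-free, ports are sorted by ID, and $\rho$ is indexed by ID.

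One subtlety: a clique vertex may have several selected members, so the rule must output a single coordinate. We note that after deletion at most one vertex per clique survives, because clique members are adjacent. It follows that $\epsilon(\mc A^\rho)=0$, and Theorem~\ref{thm:main} gives $\Prob[\mc A^\rho_w\ne 0]\le\min\{1,5000(T+3)\log D/D\}$.

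For the other direction, fix $w$. Let $\mathcal E_{\mathrm{ID}}$ be the event that the IDs of all vertices in the cliques of $B^H_{T+2}(w)$ are distinct. That set has size $L$, which accounts for the $T+1$ in the formula for $L$ together with the extra layer. Any two distinct such vertices have IDs that are functions of disjoint sets of fair bits, by the half-rows/half-columns argument. Each pair therefore collides with probability $2^{-b_{\mathrm{ID}}}\le 1/n$, and a union bound gives $\Prob[\neg\mathcal E_{\mathrm{ID}}]\le L(L-1)/n$, which absorbs the factor $1/2$.

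Let $\mathcal E_{\mathrm{EXT}}$ be the event that no choice of at most one vertex from each neighboring clique dominates $C_w$. A fixed choice dominates $C_w$ with probability at most $(1-2^{-D})^k$, so $\Prob[\neg\mathcal E_{\mathrm{EXT}}]\le\gamma_{\mathrm{EXT}}$. On $\mathcal E_{\mathrm{EXT}}$, every MIS meets $C_w$: an MIS contains at most one vertex per clique, so if it missed $C_w$ the vertices it chooses in neighboring cliques would dominate $C_w$, which $\mathcal E_{\mathrm{EXT}}$ forbids.

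The main obstacle is that IDs may collide somewhere in the graph. In that case the input is not legal and the correctness guarantee says nothing. We handle this by locality. On $\mathcal E_{\mathrm{ID}}$, we modify the realized IDs outside $B^H_{T+2}(w)$ to obtain globally distinct IDs in $[2^{b_{\mathrm{ID}}}]$. We also reassign the corresponding tapes so that every vertex near $w$ keeps its ID and tape. This requires enough unused IDs, which holds since $n\ge kN$; if IDs must lie in $[n]$, this is handled as in the appendix. The modification leaves $\mc A$'s output on the cliques within distance $1$ of $w$ unchanged. Deleting conflicts from an MIS returns the MIS, so on this legal input $\mc A$ succeeds with probability at least $\sigma$ over tapes.

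We now average over $\rho$ with i.i.d. uniform tapes, which reproduces $\mc A$'s randomness on the legal modified input. For every realization of matrices in $\mathcal E_{\mathrm{ID}}\cap\mathcal E_{\mathrm{EXT}}$, this gives $\Prob_\rho[\mc A^\rho_w\ne0\mid G]\ge\sigma$. Taking expectations,
\[
\sigma(1-\delta_{\mathrm{ID}}-\gamma_{\mathrm{EXT}})\le\E_\rho\,\Prob[\mc A^\rho_w\ne0],
\]
and the right-hand side is at most the Theorem~\ref{thm:main} bound, which gives \eqref{eq:local-obstruction}.
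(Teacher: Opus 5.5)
Your route is the paper's: clique blow-up with padding, XOR identifiers built from the disjoint half-blocks, a fixed tape table indexed by identifier, conflict removal to get a zero-conflict radius-$(T+3)$ rule, local completion on the event of distinct identifiers in $B^H_{T+2}(w)$, external domination, and averaging over the table.

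\textbf{The gap: the rule is not defined when identifiers collide.} Theorem~\ref{thm:main} applies only to a rule that is defined and symmetric on \emph{every} realization of the radius-$(T+3)$ view, and tentative identifiers collide there with positive probability. Suppose a vertex has two neighbors, in different adjacent cliques, with the same coordinate name and the same tentative identifier. Then ``ports sorted by ID'' is not a port numbering. Any tie-break has to distinguish two skeleton branches, and your symmetry justification (``nothing uses branch orderings'') does not cover this. On such views you are also running $\mc A$ on inputs that occur in no legal instance.

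The paper fixes both problems with a gate. At each clique $C_u$, simulate $\mc A$ only if all tentative identifiers in $B^H_{T+1}(u)$ are distinct; otherwise produce no candidates. This has the properties you need:
\begin{itemize}
\item It keeps the radius at $T+3$.
\item It cannot create conflicts, so your deletion step still gives $\epsilon=0$.
\item It costs nothing in your lower-bound argument. The gates that matter for $C_w$ are those at $C_w$ and its neighbors, whose balls lie in $B^H_{T+2}(w)$, and all of them pass on $\mathcal E_{\mathrm{ID}}$.
\end{itemize}

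\textbf{The identifier range you deferred.} In the paper a legal input has identifiers forming a bijection onto $[n]$, so your completed input with identifiers in $[2^{b_{\mathrm{ID}}}]$ need not be legal. The fix is short.
\begin{itemize}
\item Map each chunk $X$ to $1+(X\bmod n)$.
\item Since $n\le 2^{b_{\mathrm{ID}}}<2n$, these identifiers stay mutually independent with point probabilities at most $2/n$.
\item So each pair collides with probability at most $2/n$, and the union bound gives exactly $\binom L2\cdot\frac2n=L(L-1)/n=\delta_{\mathrm{ID}}$. The factor of $2$ you ``absorbed'' is precisely what pays for this step.
\item Then extend the identifiers preserved on $B^H_{T+2}(w)$ to a bijection onto $[n]$, using only the matrices. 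Given the matrices, the assigned tapes are then a permutation of the i.i.d.\ table.
\end{itemize}

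A small slip: free identifiers exist because there are $n$ target values for $n$ vertices (or $2^{b_{\mathrm{ID}}}\ge n$), not because $n\ge kN$.
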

\begin{proof}
Replace every skeleton vertex by a $k$-clique, use independent
fair edge matrices between neighboring cliques, and pad with
isolated vertices to obtain exactly $n$ vertices.
Use Lemma~\ref{lem:local-identifiers} to produce vertex IDs.
This lemma works as follows. Partition clique coordinates
into two parts $P = \{1,2,\hdots,k/2\}$ and $Q = \{k/2+1,\hdots,k-1,k\}$.
For the oriented edge $v\to u$, assign the matrix block
$G_{vu}[P,Q]$ to $v$; the reverse direction uses the disjoint
block $G_{uv}[P,Q] = (G_{vu}[Q,P])^T$. XOR the assigned strings over all edges
incident to $v$, and split the result into $k$ strings of $b_{\mathrm{ID}}$ bits.
Reducing each string modulo $n$ and adding $1$ gives tentative identifiers in
$[n]$. They are mutually independent, with point probabilities
at most $2/n$, because different vertices use disjoint random
bits. They can still be correlated with the graph.

Independently sample one private tape $W_i$ for every identifier
$i\in[n]$, and fix the entire table. Apply Lemma~\ref{lem:local-simulation}
to this table to obtain a radius $(T+3)$-rule. This works as follows. A vertex with tentative
identifier $i$ uses $W_i$. For every clique, simulate $\mc A$
only if all identifiers in its radius-$(T+1)$ skeleton ball
are distinct; otherwise produce no candidates. Retain a
candidate only when none of its graph neighbors is a candidate.
This produces a deterministic, symmetric, always-independent
edge-input rule of radius $T+3$. Its occupancy at a fixed
clique $C_v$ is bounded by Theorem~\ref{thm:main}.
The bound holds for every fixed tape table, so it also holds
after averaging over that table.

With probability at least $1-\delta_{\mathrm{ID}}$, the tentative identifiers
in $B_{T+2}^H(v)$ are all distinct. On this event, preserve these
identifiers and complete them to a bijection on all $n$ vertices.
The completion depends on the matrices, not on the tapes.
The candidates used to repair $C_v$ agree with the outputs of
$\mc A$ on this legal input. Distinct identifiers access distinct
independent tapes, so the assumed success probability is at least
$\sigma$ on each such fixed input
(Lemma~\ref{lem:local-completion}).

Except on a graph event of probability at most $\gamma_{\mathrm{EXT}}$, no
choice of at most one vertex from each neighboring clique can
dominate $C_v$. Thus every MIS occupies $C_v$
(Lemma~\ref{lem:local-domination}). On the intersection of
these two good graph events, a successful execution occupies
$C_v$ and survives conflict removal. Averaging first over the
tapes and then over the matrices gives occupancy at least
$\sigma(1-\delta_{\mathrm{ID}}-\gamma_{\mathrm{EXT}})$, proving~\eqref{eq:local-obstruction}.
\end{proof}

\begin{proof}[Proof of Theorem~\ref{thm:local-main}]
Fix the success guarantee $0<\sigma\le1$. For all sufficiently
large $n$, suppose
\[
 T<\frac{\sigma\log n}{100000\log\log n}.
\]
Choose
\[
 D=\left\lfloor\frac{\log n}{4}\right\rfloor,\quad
 k=2\lceil e^D/2\rceil,\quad
 N=2\left\lfloor\frac{n}{2k}\right\rfloor,\quad g=2T+8.
\]
Lemma~\ref{lem:local-girth} supplies a $D$-regular graph on
these $N$ vertices with girth at least $g$:
its sufficient size is $8D^{g-2}$, whereas
$N=n^{3/4+o(1)}$ and
$D^{g-2}\le n^{2\sigma/100000+o(1)}$.
The remaining hypotheses of Proposition~\ref{prop:mis-obstruction}
hold for large $n$ (Lemma~\ref{lem:local-parameters}). Moreover,
\[
 \delta_{\mathrm{ID}}\le\frac{(2kD^{T+2})^2}{n}
 \le n^{-1/2+2\sigma/100000+o(1)}=o(1),
\]
and
\[
 \log\gamma_{\mathrm{EXT}}\le D(D+1)-e^{(1-\log2)D}\longrightarrow-\infty.
\]
Its left-hand side is therefore $\sigma-o(1)$.
Since $D\ge(\log n)/5$ and $\log D\le\log\log n$ for
large $n$, its right-hand side is at most $\sigma/4+o(1)$,
a contradiction. All thresholds depend only on $\sigma$ and
the absolute occupancy constants, not on the algorithm.
This proves the claimed worst-case lower bound, including
algorithms that know $n$ and use the identifier space $[n]$.
For $\sigma=2/3$, the same estimates with
$T<\log n/(100000\log\log n)$ give $2/3-o(1)$ on the left
and $1/4+o(1)$ on the right, establishing the stated special constant.
\end{proof}
\Needspace{5\baselineskip}
\section{Preliminaries}
\label{sec:model}

The rest of the paper proves Theorem \ref{thm:main}. In this section, we
discuss relevant definitions.

\subsection{Radius-$s$ Rules}

For a skeleton vertex $v$, let $E_s(v)$ contain the edges with
both endpoints in the closed radius-$s$ ball around $v$.
Write $N_s(v)$ for the tuple of matrices on these edges.
The underlying rooted tree is fixed; all randomness in a view
comes from the matrices. In particular, $N_0(v)$ is empty.
For conditioning, we name the positions of these independent
variables on a fixed tree. These names are bookkeeping for the
proof and are not inputs to a rule.

\paragraph{Symmetric local selection rule.}
\label{def:kernel}
A radius-$s$ rule $\mc A$ assigns to every view $y=N_s(v)$
numbers $\pi_0(y),\pi_1(y),\ldots,\pi_k(y)\ge0$ with sum $1$.
At $v$ it outputs $a\in\{0,\ldots,k\}$ with probability $\pi_a(y)$.
Conditional on all matrices, these output draws are independent
at different vertices.

The assignment is the same at every skeleton vertex and is
unchanged by rooted isomorphisms of the matrix-labeled tree
that preserve clique-coordinate names. This is the symmetry
assumption. A deterministic rule assigns probability $1$ to
one output. An independent-set rule has
$\epsilon(\mc A)=0$.

We use ordinary finite conditional probabilities throughout.
For example, $\Prob[Z=a\mid Y=y]$ is the probability of output
$a$ after fixing the input information $Y$ to be $y$.
The \emph{conditional distribution} is the list of these
probabilities for all $a$; it is sometimes called a conditional
law. We will write the probabilities explicitly when constructing
a new rule.
The \emph{support} of a distribution is the set of outputs
having positive probability.

\paragraph{Shared information and probability outside a list.}
\label{def:views}
For adjacent $u,v$ and $t\ge0$, put
\[
 F_t(uv)=\{G_{uv}\}\cup N_t(u)\cup N_t(v),
\]
where the union retains each matrix variable once.
For a random output $Z\in\{0\}\cup[k]$, define
\[
 \tail_a(Z\mid Y=y)
 =\min_{\substack{S\subseteq[k]\\|S|\le a}}
       \Prob[Z\notin S\cup\{0\}\mid Y=y].
\]
Without $Y$ this denotes the unconditional probability.
The minimizing list consists of the $a$ most probable nonzero
coordinates, or all coordinates when $a\ge k$.
Ties are always broken by coordinate order.
We also write $\tail_a(P)$ when $P$ is a specified probability
distribution.

The number $\tail_a$ measures the probability missed by the
best $a$-coordinate list. Output $0$ is included for free;
it is never a selected graph vertex.

We repeatedly use the following lemma, which uses the skeleton graph's high girth:

\begin{lem}[Product decomposition of local views]
\label{lem:views}
Let $s\ge1$.
At an edge $uv$, the shared view $F=F_{s-1}(uv)$ contains the
central matrix and both $N_{s-1}(u)$ and $N_{s-1}(v)$.
Conditional on $F$, the remaining data in $N_s(u)$ and $N_s(v)$
are independent. Consequently, the outputs of any radius-$s$
rule at $u,v$ are independent conditional on $F$.

At a root $v$, enumerate its neighbors as $v_1,\ldots,v_D$ only
for analysis, and put $Y=N_{s-1}(v)$. Let $X_i$ be the matrix data
in $F_{s-1}(vv_i)$ outside $Y$. Conditional on $Y$, the $X_i$ are
independent, their union supplies exactly the data in $N_s(v)$
outside $Y$, and $F_{s-1}(vv_i)$ is specified by $(Y,X_i)$.
\end{lem}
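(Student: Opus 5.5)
The lemma is purely combinatorial: it follows from the tree structure of the infinite rooted $D$-regular skeleton, together with independence of the matrices on distinct undirected edges. I would first identify the edge sets concretely. The edges in $E_s(u)$ are those with both endpoints within distance $s$ of $u$; since the skeleton is a tree, these are exactly the edges of the depth-$s$ subtree around $u$.

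For the first part, remove the edge $uv$. The tree splits into the component $T_u$ containing $u$ and the component $T_v$ containing $v$. I would check the following facts in order.
\begin{enumerate}
\item Every edge of $E_s(u)\setminus F$ lies in $T_u$. Such an edge is not $uv$ and is not in $E_{s-1}(u)$. If it lay in $T_v$, then both endpoints would be within distance $s$ of $u$, hence within distance $s-1$ of $v$, so the edge would lie in $E_{s-1}(v)\subseteq F$, a contradiction.
\item Symmetrically, every edge of $E_s(v)\setminus F$ lies in $T_v$.
\item The two remaining edge sets are therefore disjoint, and also disjoint from the edges indexing $F$.
\end{enumerate}
Since matrices on distinct undirected edges are independent, and the data of $F$ is a function of its own edge variables, the remaining data at $u$ is independent of the remaining data at $v$ conditional on $F$. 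Independence of the rule outputs then follows. Each output is drawn from $\pi(N_s(\cdot))$, which is a function of $F$ and that endpoint's remaining data. The private final draws are independent across vertices given all matrices. Integrating these draws out preserves conditional independence.

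For the second part, fix $Y=N_{s-1}(v)$. The data of $F_{s-1}(vv_i)$ outside $Y$ consists of two pieces: the matrices on edges of $E_{s-1}(v_i)\setminus E_{s-1}(v)$, and $G_{vv_i}$ when $s=1$, since otherwise it is already in $Y$. By the tree structure, every new edge of $E_{s-1}(v_i)$ lies in the branch through $v_i$, meaning the component of the tree minus $v$ that contains $v_i$, together with the edge $vv_i$. Distinct branches are edge-disjoint and disjoint from $E_{s-1}(v)$, so the $X_i$ are independent of each other and of $Y$. It remains to check that every edge of $E_s(v)\setminus E_{s-1}(v)$ belongs to some $E_{s-1}(v_i)$ or equals $vv_i$. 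Such an edge has an endpoint at distance exactly $s$ from $v$. If $s=1$ it is $vv_i$. Otherwise it lies in the branch of some $v_i$, and both of its endpoints are within distance $s-1$ of $v_i$. Finally, $F_{s-1}(vv_i)$ is determined by $(Y,X_i)$ by construction.

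The only step requiring care is the distance bookkeeping at the boundary. This covers edges at depth $s$ and the $s=1$ case, where $N_0$ is empty and the central matrix is new data. I would handle it by writing every edge as a (parent, child) pair with the depth of the child relative to $v$, and checking the inequalities directly.
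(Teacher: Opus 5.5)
Your proposal is correct and follows essentially the same route as the paper. Both arguments use the tree structure to show that the leftover edge sets at $u$ and $v$ (respectively, the new edges at $v$ grouped by first neighbor on the path from $v$) are disjoint from each other and from the conditioning edges, and then invoke independence of matrices on distinct edges and of the private draws; your splitting of the tree at $uv$ is a more explicit version of the paper's identity $E_s(u)\cap E_s(v)=\{uv\}\cup E_{s-1}(u)\cup E_{s-1}(v)$ (one small wording fix: it is the new parts $X_i$, not the whole branches, that are disjoint from $E_{s-1}(v)$).
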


\begin{proof}
On a tree, the shared edge set is
\[
 E_s(u)\cap E_s(v)
 =\{uv\}\cup E_{s-1}(u)\cup E_{s-1}(v).
\]
The two edge sets outside this intersection are disjoint. All
edge matrices are independent, so conditioning on the shared
matrices leaves these two collections independent. The independent
private output draws preserve this conditional independence.

For the second assertion, the edges of $N_s(v)$ outside $Y$ are
the edges between depths $s-1$ and $s$. They partition according
to the first neighbor on the path from $v$. The part in branch
$i$ is exactly $X_i$, including the central edge $vv_i$ when
$s=1$. These parts are disjoint and independent of the inner
edge data. This proves every assertion.
\end{proof}

\begin{figure}[htbp]
\centering
\begin{tikzpicture}[x=1cm,y=1cm, every node/.style={font=\small}]
 \node[circle,draw,thick,fill=blue!8,minimum size=9mm] (u) at (-1.4,0) {$u$};
 \node[circle,draw,thick,fill=red!8,minimum size=9mm] (v) at (1.4,0) {$v$};
 \foreach \n/\x/\y in {a/-3/1.05,b/-3/-1.05,c/3/1.05,d/3/-1.05}
   \node[circle,draw,fill=white,minimum size=5mm] (\n) at (\x,\y) {};
 \draw[very thick,blue!70!black] (a)--(u)--(b);
 \draw[very thick,red!65!black] (c)--(v)--(d);
 \draw[very thick,orange!80!black] (u)--node[above=2mm] {$G_{uv}$}(v);
 \node[blue!70!black,align=center] at (-3,1.7) {other inputs\\at $u$};
 \node[red!65!black,align=center] at (3,1.7) {other inputs\\at $v$};
 \node[align=center] at (0,-1.7) {Fix the orange matrix.\\The blue and red inputs remain independent.};
\end{tikzpicture}
\caption{The shared information for two one-round views, shown
for $D=3$. Every circle denotes an entire clique, and every
line denotes its random bipartite matrix. At general radius,
fixing $F_{s-1}(uv)$ leaves the two remaining input regions
independent in the same way.}
\label{fig:shared}
\end{figure}

\subsection{Entropy}

For a finite random variable $Z$ with probabilities $p_z$, define
\[
 H(Z)=\sum_{z:p_z>0}p_z\log(1/p_z),\qquad
 H(Z\mid Y)=\sum_y\Prob[Y=y]H(Z\mid Y=y).
\]
All logarithms are natural in this paper unless otherwise mentioned.
An output with $m$ equally likely possibilities has entropy
$\log m$; an output determined by the available information
has conditional entropy zero.
Write $h(x)=-x\log x-(1-x)\log(1-x)$ for binary entropy, using
$0\log0=0$. Define
\[
 I(Z;X\mid Y)=H(Z\mid Y)-H(Z\mid X,Y).
\]
This \emph{conditional mutual information} is the amount by
which seeing $X$ reduces the uncertainty in $Z$, once $Y$ is known.

For probability distributions $P,Q$ on the same finite set,
their \emph{relative entropy} is
\[
 \KL(P\Vert Q)=\sum_{x:P(x)>0}P(x)\log\frac{P(x)}{Q(x)}.
\]
It is $+\infty$ if $Q(x)=0<P(x)$ for some $x$.
The inequality $\log z\le z-1$ shows $\KL(P\Vert Q)\ge0$.

Expanding definitions gives the entropy chain rule
$H(X,Z\mid Y)=H(X\mid Y)+H(Z\mid X,Y)$.
It also expresses $I(Z;X\mid Y)$ as the average relative
entropy of the conditional joint distribution from the product
of its conditional marginals. Thus $I\ge0$, or equivalently,
conditioning cannot increase entropy.

Comparing a distribution on at most $m$ values with the
uniform distribution gives $H\le\log m$.
These facts also give subadditivity and the chain rule for $I$.
\begin{lem}[Square-root event probabilities]
\label{lem:sqrt-event}
For probability distributions $P,Q$ on a finite set
and any event $E$,
\begin{equation}
 |\sqrt{P(E)}-\sqrt{Q(E)}|
 \le
 \left(\sum_x(\sqrt{P(x)}-\sqrt{Q(x)})^2\right)^{1/2}
 \le\sqrt{\KL(P\Vert Q)}.
 \label{eq:sqrt-event}
\end{equation}
\end{lem}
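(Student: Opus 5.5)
We prove the two inequalities in \eqref{eq:sqrt-event} separately. The first is a Cauchy--Schwarz step on the event, and the second is the standard comparison of squared Hellinger distance with relative entropy. Throughout, write $a_x=\sqrt{P(x)}$ and $b_x=\sqrt{Q(x)}$, and let $\|\cdot\|$ denote the Euclidean norm on vectors indexed by the finite set.

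\textbf{First inequality.} We have $\sqrt{P(E)}=\|(a_x)_{x\in E}\|$ and $\sqrt{Q(E)}=\|(b_x)_{x\in E}\|$, since these are the square roots of $\sum_{x\in E}a_x^2$ and $\sum_{x\in E}b_x^2$. The reverse triangle inequality for the Euclidean norm gives
\[
 |\sqrt{P(E)}-\sqrt{Q(E)}|\le\|(a_x-b_x)_{x\in E}\|\le\Bigl(\sum_x(a_x-b_x)^2\Bigr)^{1/2}.
\]
The last step extends the sum from $E$ to all $x$, which only adds nonnegative terms.

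\textbf{Second inequality.} It suffices to show $\sum_x(a_x-b_x)^2\le\KL(P\Vert Q)$. If some $x$ has $Q(x)=0<P(x)$, then $\KL(P\Vert Q)=+\infty$ and there is nothing to prove, so assume $Q(x)>0$ whenever $P(x)>0$. Expanding the square and using $\sum_xa_x^2=\sum_xb_x^2=1$ gives
\[
 \sum_x(a_x-b_x)^2=2-2\sum_xa_xb_x .
\]
For the relative entropy, write
\[
 \KL(P\Vert Q)=-2\sum_{x:P(x)>0}P(x)\log\frac{b_x}{a_x}.
\]
Apply $\log z\le z-1$ with $z=b_x/a_x$:
\[
 \KL(P\Vert Q)\ge 2\sum_{x:P(x)>0}P(x)\Bigl(1-\frac{b_x}{a_x}\Bigr)=2-2\sum_{x:P(x)>0}a_xb_x .
\]
Terms with $P(x)=0$ have $a_xb_x=0$, so the sum on the right equals $\sum_xa_xb_x$. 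This gives $\KL(P\Vert Q)\ge\sum_x(a_x-b_x)^2$. Taking square roots and combining with the first inequality completes the proof.
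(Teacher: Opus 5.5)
Your proof is correct and matches the paper's: the first inequality uses the same reverse triangle inequality on square-root vectors restricted to $E$, and the second uses the same reduction to $\KL(P\Vert Q)\ge 2\bigl(1-\sum_x\sqrt{P(x)Q(x)}\bigr)$. The only difference is cosmetic: you apply $\log z\le z-1$ term by term, whereas the paper first uses Jensen to get $\KL(P\Vert Q)\ge-2\log\sum_x\sqrt{P(x)Q(x)}$ and then $-\log a\ge 1-a$.
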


\begin{proof}
The first inequality is the reverse triangle
inequality for the square-root vectors restricted
to $E$, followed by enlarging the sum to all $x$.
If the relative entropy is infinite there is
nothing more to prove. Otherwise, put
$a=\sum_x\sqrt{P(x)Q(x)}$.
Jensen's inequality for the logarithm gives
\[
 \log a
 =\log\E_{x\sim P}\sqrt{\frac{Q(x)}{P(x)}}
 \ge-\frac12\KL(P\Vert Q).
\]
Thus $\KL(P\Vert Q)\ge-2\log a\ge2(1-a)$.
The squared middle expression in
\eqref{eq:sqrt-event} equals $2(1-a)$.
\end{proof}

We use square roots because the baseline conflict probability
will be quadratic in occupancy. This inequality converts a
relative-entropy bound directly into a bound on that scale.

We bound KL-divergence using the following conditional resampling-oriented
result that shifts randomness from the outer rim determining $A$ to the middle,
leaving a lower-radius neighborhood $Y_A$:

\begin{lem}[Relative entropy of conditional resampling]
\label{lem:resampling-kl}
Let $F,A,B$ be finitely-supported random variables, and let $Y_A,Y_B$
be functions of $F$. Let $P$ be the distribution of $(F,A,B)$.
Define $Q$ by keeping the same distribution of $F$ and,
conditional on $F=f$, drawing the two outputs independently
with probabilities
\[
 \Prob[A=a\mid Y_A=Y_A(f)],
 \qquad
 \Prob[B=b\mid Y_B=Y_B(f)].
\]
Then
\[
 \begin{aligned}
 \KL(P\Vert Q)
 &=H(A\mid Y_A)+H(B\mid Y_B)-H(A,B\mid F)\\
 &=I(A;F\mid Y_A)+I(B;F\mid Y_B)+I(A;B\mid F).
 \end{aligned}
\]
In particular, if $A,B$ are independent conditional on $F$,
then
\[
 \KL(P\Vert Q)
 =I(A;F\mid Y_A)+I(B;F\mid Y_B).
\]
This conditional independence holds whenever
$A=\alpha(F,U)$ and $B=\beta(F,V)$ for random variables
$U,V$ that are independent conditional on $F$.
\end{lem}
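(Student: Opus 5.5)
The plan is to compute $\KL(P\Vert Q)$ directly from the definition and then rearrange the resulting expectation of log-ratios into entropies. Since $P$ and $Q$ share the marginal of $F$, the ratio $P(f,a,b)/Q(f,a,b)$ reduces to
\[
 \frac{\Prob[A=a,B=b\mid F=f]}{\Prob[A=a\mid Y_A=Y_A(f)]\,\Prob[B=b\mid Y_B=Y_B(f)]}.
\]
On the support of $P$ the denominator is positive: if $\Prob[A=a,B=b,F=f]>0$ then $\Prob[A=a, Y_A=Y_A(f)]>0$, and similarly for $B$. The divergence is therefore finite, and all sums are finite because the variables are finitely supported.

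Taking the $P$-expectation of the logarithm splits it into three terms. The numerator contributes $-H(A,B\mid F)$. The factor involving $A$ contributes
\[
 -\E_P\log\Prob[A\mid Y_A]=H(A\mid Y_A).
\]
This holds because under $P$ the pair $(A,Y_A(F))$ has its true joint law, so grouping by the value of $Y_A$ gives exactly the conditional entropy. The $B$ factor is handled the same way, which yields the first identity.

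For the second identity I would insert $\pm H(A\mid F)$ and $\pm H(B\mid F)$. Since $Y_A$ is a function of $F$, conditioning on $(F,Y_A)$ is the same as conditioning on $F$, so
\[
 H(A\mid Y_A)-H(A\mid F)=I(A;F\mid Y_A)
\]
by the definition of conditional mutual information; the same holds for $B$. The remaining combination $H(A\mid F)+H(B\mid F)-H(A,B\mid F)$ equals $I(A;B\mid F)$ by the chain rule
\[
 H(A,B\mid F)=H(A\mid F)+H(B\mid A,F).
\]
When $A$ and $B$ are conditionally independent given $F$, this last term vanishes and the special case follows.

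Finally, for the sufficient condition, fix $F=f$. Then $A=\alpha(f,U)$ and $B=\beta(f,V)$ are functions of $U$ and $V$ respectively, which are conditionally independent given $F=f$. Hence the conditional joint law of $(A,B)$ factorizes, so $I(A;B\mid F)=0$. I expect no real obstacle in any of these steps. The only point requiring care is the support and finiteness check in the first step, together with correctly identifying $\E_P\log\Prob[A\mid Y_A]$ as a conditional entropy, which relies on $Y_A$ being a function of $F$.
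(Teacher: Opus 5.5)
Your proposal is correct and follows essentially the same route as the paper's proof: cancel the shared $F$-marginal, check positivity of the resampled probabilities on the support of $P$, and split the log-ratio into $-H(A,B\mid F)$, $H(A\mid Y_A)$ and $H(B\mid Y_B)$. You then use that $Y_A,Y_B$ are functions of $F$ to regroup these terms into the three mutual-information terms, and observe that conditional independence (inherited by functions of conditionally independent inputs) kills $I(A;B\mid F)$.
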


\begin{proof}
Write $p_f(a,b)=\Prob[A=a,B=b\mid F=f]$,
$q_A(a\mid y)=\Prob[A=a\mid Y_A=y]$, and similarly for $q_B$.
The finite support makes it so that these probabilities
are discrete rather than density functions.
If $P(f,a,b)>0$, then both $q_A(a\mid Y_A(f))$ and
$q_B(b\mid Y_B(f))$ are positive. Thus the likelihood ratio
is defined on the support of $P$. The common probability of
$F=f$ cancels, giving
\[
 \KL(P\Vert Q)
 =\E_P\!\left[
   \log p_F(A,B)
   -\log q_A(A\mid Y_A)
   -\log q_B(B\mid Y_B)
 \right].
\]
The three terms are
$-H(A,B\mid F)$, $H(A\mid Y_A)$, and $H(B\mid Y_B)$.
This proves the first identity.
Since $Y_A,Y_B$ are functions of $F$,
\[
 I(A;F\mid Y_A)=H(A\mid Y_A)-H(A\mid F),
 \qquad
 I(B;F\mid Y_B)=H(B\mid Y_B)-H(B\mid F).
\]
Together with
$I(A;B\mid F)=H(A\mid F)+H(B\mid F)-H(A,B\mid F)$,
these equalities prove the second identity.
Conditional independence gives $I(A;B\mid F)=0$.
Finally, functions of conditionally independent inputs
remain conditionally independent.
\end{proof}

Next, we bound the entropy of a distribution whose probability mass is both very
lopsided and supported on a small number of outputs:

\begin{lem}[Entropy of a rarely nonzero output]
\label{lem:occupancy-entropy}
Suppose the conditional distribution of $C\in\{0\}\cup[k]$ given $Y$ is supported on
at most $m$ nonzero labels and $0$, for an integer $m\ge1$.
If $p=\Prob[C\ne0]$, then
\begin{equation}
 H(C\mid Y)\le h(p)+p\log m\le p\log(e m/p),
 \label{eq:occupancy-entropy}
\end{equation}
where $h(p)=-p\log p-(1-p)\log(1-p)$ and the last expression is
defined as $0$ at $p=0$.
Also $x\mapsto x\log(e m/x)$ is increasing on $[0,1]$.
\end{lem}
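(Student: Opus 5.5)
We prove the conditional statement pointwise in $y$ and then average. Fix a value $y$ of $Y$ and let $p_y=\Prob[C\ne0\mid Y=y]$. Conditional on $Y=y$, the output $C$ takes the value $0$ with probability $1-p_y$, and otherwise one of at most $m$ nonzero labels. By the entropy chain rule applied to the pair $(\ind[C\ne0],C)$, where the indicator is a function of $C$,
\[
 H(C\mid Y=y)=h(p_y)+p_y\,H(C\mid C\ne0,Y=y)\le h(p_y)+p_y\log m,
\]
using $H\le\log m$ for a distribution on at most $m$ values.

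Averaging over $y$ gives $H(C\mid Y)\le \E_y[h(p_Y)]+p\log m$, since $\E_y[p_Y]=p$. Concavity of the binary entropy $h$ and Jensen's inequality give $\E_y[h(p_Y)]\le h(p)$. This proves the first inequality.

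For the second inequality, write $h(p)=p\log(1/p)+(1-p)\log(1/(1-p))$. Then bound the second term using $\log(1/(1-p))\le p/(1-p)$, which follows from $\log z\le z-1$ with $z=1/(1-p)$. This gives $(1-p)\log(1/(1-p))\le p$, so
\[
 h(p)+p\log m\le p\log(1/p)+p+p\log m=p\log(em/p).
\]
At $p=0$ both sides are $0$ by convention. At $p=1$ the bound $h(1)=0\le1$ holds trivially.

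For monotonicity, let $g(x)=x\log(em/x)=x(1+\log m)-x\log x$. Its derivative is $g'(x)=\log m-\log x$, which is $\ge0$ for $x\in(0,1]$ since $m\ge1$. Also $g$ is continuous at $0$ with $g(0)=0$, so $g$ is increasing on $[0,1]$.

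There is no real obstacle here. The only point needing care is that the support bound of $m$ nonzero labels is conditional, so the $\log m$ bound must be applied for each $y$ before averaging, with concavity of $h$ handling the averaging.
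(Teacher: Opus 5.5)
Your proof is correct and follows the paper's argument: the chain rule with the indicator $\ind_{\{C\ne0\}}$, the $\log m$ bound on the nonzero part, the inequality $-(1-p)\log(1-p)\le p$, and the derivative $\log(m/x)\ge0$. The only difference is cosmetic: you average the pointwise bounds and invoke concavity of $h$ via Jensen, whereas the paper writes that step as $H(J\mid Y)\le H(J)=h(p)$, which is the same fact.
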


\begin{proof}
Let $J=\ind_{\{C\ne0\}}$. The chain rule gives
\[
 H(C\mid Y)=H(J\mid Y)+H(C\mid J,Y)
       \le h(p)+p\log m.
\]
Here $H(J\mid Y)\le H(J)=h(p)$ and, given $J=1,Y$, at most $m$
labels remain. The inequality $-(1-p)\log(1-p)\le p$ gives the
second bound. Finally the derivative of $x\log(e m/x)$ is
$\log(m/x)\ge0$ for $0<x\le1$.
\end{proof}

The factor $p$ in this last estimate matters: uncertainty about
a coordinate is incurred only when an output is nonzero.
We will preserve this factor in the final accounting.

\begin{lem}[Independent inputs share an information budget]
\label{lem:product-information}
If $X_1,\ldots,X_D$ are independent conditional on $Y$, then
for any finite random variable $Z$,
\[
 \sum_{i=1}^D I(Z;X_i\mid Y)
 \le I(Z;X_1,\ldots,X_D\mid Y)
 =H(Z\mid Y)-H(Z\mid Y,X_1,\ldots,X_D).
\]
Additional private randomness in $Z$ is allowed.
\end{lem}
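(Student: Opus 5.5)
The plan is to prove the equality and the inequality separately, using only the chain rule for mutual information and conditional independence. The equality $I(Z;X_1,\ldots,X_D\mid Y)=H(Z\mid Y)-H(Z\mid Y,X_1,\ldots,X_D)$ is just the definition of conditional mutual information with $X=(X_1,\ldots,X_D)$, so nothing needs to be done there. For the inequality, I will expand the joint information by the chain rule,
\[
 I(Z;X_1,\ldots,X_D\mid Y)=\sum_{i=1}^D I(Z;X_i\mid Y,X_{<i}),
\]
where $X_{<i}=(X_1,\ldots,X_{i-1})$. It then suffices to show $I(Z;X_i\mid Y,X_{<i})\ge I(Z;X_i\mid Y)$ for each $i$.

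For this termwise comparison I will apply the chain rule to $I(Z,X_{<i};X_i\mid Y)$ in two orders:
\[
 I(Z,X_{<i};X_i\mid Y)=I(X_{<i};X_i\mid Y)+I(Z;X_i\mid Y,X_{<i})
 =I(Z;X_i\mid Y)+I(X_{<i};X_i\mid Y,Z).
\]
Conditional independence of the $X_j$ given $Y$ gives $I(X_{<i};X_i\mid Y)=0$. Nonnegativity of conditional mutual information, recorded earlier in the paper, gives $I(X_{<i};X_i\mid Y,Z)\ge0$. Together these yield $I(Z;X_i\mid Y,X_{<i})\ge I(Z;X_i\mid Y)$, and summing over $i$ finishes the inequality.

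To cover private randomness in $Z$, I note that the argument never uses that $Z$ is a function of $(Y,X_1,\ldots,X_D)$. It uses only the joint distribution of $(Z,Y,X_1,\ldots,X_D)$ and the conditional independence of the $X_i$ given $Y$, and extra randomness in $Z$ does not disturb that independence. The one point needing care is that the chain rule for $I$ requires finite variables or well-defined entropies. The matrices are finite and $Z$ is finite by hypothesis, so all quantities are finite and the identities hold without issue. I do not expect a real obstacle; the only subtle step is noticing that conditional independence makes the cross term $I(X_{<i};X_i\mid Y)$ vanish, which forces the correct direction of the inequality.
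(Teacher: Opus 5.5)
Your proof is correct and is essentially the paper's argument. The paper also reduces to the termwise bound $I(Z;X_i\mid Y,X_{<i})\ge I(Z;X_i\mid Y)$, derived from $H(X_i\mid Y,X_{<i})=H(X_i\mid Y)$ and $H(X_i\mid Z,Y,X_{<i})\le H(X_i\mid Z,Y)$; these are exactly your $I(X_{<i};X_i\mid Y)=0$ and $I(X_{<i};X_i\mid Y,Z)\ge0$ written as entropies, and it then sums via the chain rule as you do.
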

\begin{proof}
Let $X_{<i}=(X_1,\ldots,X_{i-1})$. Independence gives
$H(X_i\mid Y,X_{<i})=H(X_i\mid Y)$, while conditioning gives
$H(X_i\mid Z,Y,X_{<i})\le H(X_i\mid Z,Y)$.
Subtract to obtain
$I(Z;X_i\mid Y,X_{<i})\ge I(Z;X_i\mid Y)$.
Sum and apply the chain rule.
\end{proof}

\subsection{Table of Parameters and Constants}
\label{sec:occupancy-constants}

Table~\ref{tab:occupancy-constants} records the parameters and
numerical constants used in Sections~\ref{sec:model}--\ref{sec:main-proof}. These choices are deliberately
loose. The threshold $D_0$ is one absolute constant large enough
for all the stated parameter inequalities; it is independent
of the radius $r$ and of the selection rule. All logarithms in
this table and in the occupancy proof are natural.

\begingroup
\small
\setlength{\tabcolsep}{5pt}
\renewcommand{\arraystretch}{1.18}
\begin{longtable}{@{}>{\raggedright\arraybackslash}p{0.25\textwidth}
>{\raggedright\arraybackslash}p{0.46\textwidth}
>{\raggedright\arraybackslash}p{0.24\textwidth}@{}}
\caption{Parameters and numerical constants used in the occupancy proof.}
\label{tab:occupancy-constants}\\
\hline
Parameter or constant & Choice and purpose & Use or derivation \\
\hline
\endfirsthead
\multicolumn{3}{l}{\textit{Table~\ref{tab:occupancy-constants}, continued.}}\\
\hline
Parameter or constant & Choice and purpose & Use or derivation \\
\hline
\endhead
\hline
\endfoot
\multicolumn{3}{l}{\textit{Main parameters and list sizes}}\\
$D,k,r,D_0$ & Skeleton degree, clique size, view radius, and absolute degree threshold; $D\ge D_0$, $D\le\log k\le D^2$, $r\ge1$. & Theorem~\ref{thm:main}.\\
$\ell$ & $\lceil100\log(k+1)\rceil$: size of a list supplied by one shared edge view. & Lemma~\ref{lem:matrix}; \eqref{eq:list-sizes}.\\
$M$ & $\lceil1000\ell\rceil$: size of the list consolidating independent branch lists; $M\le D^3$. & \eqref{eq:list-sizes}.\\
$m$ & $DM$: number of nonzero coordinates retained before resampling; $m\le D^4$. & \eqref{eq:list-sizes}.\\
$q$ & Exceptional-matrix probability; $q\le e^{-\ell^2/10}\le D^{-2}$. & Lemma~\ref{lem:matrix}.\\
$\eta$ & $1/D=M/m$: an upper bound on the deleted mass captured by an $M$-coordinate list. & \eqref{eq:list-parameters}.\\
$\delta$ & $256\log D/D$: bound on the fraction of branches offering an unusually good short list. & Lemmas~\ref{lem:parameters} and \ref{lem:branch-transfer}.\\
$b$ & $2100\log D/D$: additive occupancy loss in one truncation. & Proposition~\ref{prop:truncation}.\\
\end{longtable}
\endgroup

\Needspace{5\baselineskip}
\section{Warmup: One-Round Hardness}
\label{sec:warmup}

This section proves that a one-round independent-set rule
cannot occupy a constant fraction of cliques as $D$ grows.
The proof has two steps: put almost all outputs in a fixed
short list, then show that an independent-set rule restricted
to that list rarely selects anything.

\subsection{1-Round Algorithms have Short Output Lists}

We begin with two key facts. The first rules out large
rectangles of compatible choices. We prove a density statement
because its quantitative form will also be useful when conflicts
have small positive probability.
\begin{lem}[Almost-empty random submatrices]
\label{lem:matrix}
Let $G$ be a $k$-by-$k$ matrix of independent Bernoulli-$1/2$
entries and let $\ell=\lceil100\log(k+1)\rceil\le k$.
Call $G$ exceptional if it contains an $\ell$-by-$\ell$
submatrix with at most $\ell^2/4$ entries equal to $1$.
Then
\begin{equation}
 q:=\Prob[G\text{ is exceptional}]
 \le k^{2\ell}e^{-\ell^2/8}
 \le e^{-\ell^2/10}.
 \label{eq:exceptional-matrix}
\end{equation}
\end{lem}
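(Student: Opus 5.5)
The plan is a first-moment bound: a union bound over all $\ell$-by-$\ell$ submatrices combined with a Chernoff estimate for a single one. Fix a row set $R$ and a column set $S$ with $|R|=|S|=\ell$. The number of ones in $G[R,S]$ is a sum of $\ell^2$ independent fair bits, with mean $\ell^2/2$.

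For the single-submatrix estimate, Hoeffding's inequality gives a deviation of $\ell^2/4$ below the mean with probability at most
\[
 \exp\!\left(-2(\ell^2/4)^2/\ell^2\right)=e^{-\ell^2/8}.
\]

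For the union bound, the number of choices of $(R,S)$ is $\binom{k}{\ell}^2\le k^{2\ell}$. Hence $q\le k^{2\ell}e^{-\ell^2/8}$, which is the first inequality in~\eqref{eq:exceptional-matrix}.

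The second inequality is the only step needing a check. It requires
\[
 2\ell\log k\le \ell^2\left(\tfrac18-\tfrac1{10}\right)=\ell^2/40,
\]
that is, $\ell\ge 80\log k$. This holds because $\ell\ge100\log(k+1)>80\log k$. The hypothesis $\ell\le k$ is used only so that $\ell$-by-$\ell$ submatrices exist; if none existed, the claim would be vacuous anyway. I expect no real obstacle beyond using the sharp Hoeffding constant, so that the exponent $1/8$ exactly matches the stated bound.
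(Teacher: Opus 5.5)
Your proposal is correct and matches the paper's proof: a tail bound $e^{-\ell^2/8}$ for one fixed submatrix, a union bound over at most $k^{2\ell}$ row and column choices, and the observation that $\log k\le\ell/100$ makes $2\ell\log k$ small enough to give $e^{-\ell^2/10}$. The only difference is that the paper derives the single-submatrix bound directly, via $\cosh(\lambda/2)^N\le e^{N\lambda^2/8}$ and Markov's inequality at $\lambda=1$, whereas you cite Hoeffding's inequality, which gives the same constant.
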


\begin{proof}
For $N$ independent Bernoulli-$1/2$ variables $B_j$,
\[
 \E\exp\left(-\lambda\sum_{j=1}^N(B_j-1/2)\right)
 =\cosh(\lambda/2)^N\le e^{N\lambda^2/8}.
\]
The last inequality follows from
$\log\cosh x\le x^2/2$; its derivative inequality is
$\tanh x\le x$ for $x\ge0$, and both functions are even.
Taking $\lambda=1$ in Markov's inequality yields
$\Prob[\sum_jB_j\le N/4]\le e^{-N/8}$.

Apply this with $N=\ell^2$ to a fixed submatrix and take a union
bound over at most $k^{2\ell}$ choices of its rows and columns.
Since $\log k\le\ell/100$, the resulting exponent is at most
$-\ell^2(1/8-2/100)\le-\ell^2/10$.
\end{proof}

The second fact explains how many independent, varying lists
can force membership in one fixed list. A label's \emph{weight}
is its expected number of occurrences among the random lists.
Only a few labels can have large weight because each list is short.
Independence makes it unlikely that a small-weight label
nevertheless appears many times.
\begin{lem}[Frequently occurring labels lie in a fixed list]
\label{lem:frequent-labels}
Let $S_1,\ldots,S_D$ be independent random subsets of $[k]$.
\footnote{The distribution that each $S_i$ is sampled from need not be uniform.}
For $a\in[k]$, put $w_a=\sum_i\Prob[a\in S_i]$ and
$C=\sum_aw_a=\sum_i\E|S_i|$.
Let $B$ consist of the $M$ labels of largest $w_a$, with ties
broken by coordinate order; if $M\ge k$, let $B=[k]$.
For any integers $M\ge1$ and $h\ge2$,
\begin{equation}
 \Prob[\text{some }a\notin B\text{ belongs to at least }h
                  \text{ of the }S_i]
 \le\frac{(eC/h)^h}{(h-1)M^{h-1}}.
 \label{eq:frequent-labels}
\end{equation}
\end{lem}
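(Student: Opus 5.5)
The plan is a weight argument followed by a union bound over labels outside $B$, with a Chernoff-type estimate for each such label. First I would bound the weights of the labels outside $B$. Since $\sum_a w_a=C$ and $B$ consists of the $M$ labels of largest weight, every $a\notin B$ satisfies $w_a\le C/M$. Indeed, otherwise all $M$ labels of $B$ would have weight exceeding $C/M$, and their total weight would exceed $C$. If $M\ge k$ then $B=[k]$ and the event is empty, so I may assume $M<k$.

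Next, fix $a\notin B$ and let $N_a=\sum_i\ind_{\{a\in S_i\}}$. This is a sum of independent Bernoulli variables with mean $w_a$, because the $S_i$ are independent. I would bound $\Prob[N_a\ge h]$ by counting $h$-subsets of indices:
\[
 \Prob[N_a\ge h]\le\sum_{|T|=h}\prod_{i\in T}\Prob[a\in S_i]\le\frac{w_a^h}{h!}\le\left(\frac{e w_a}{h}\right)^h,
\]
using the elementary symmetric function bound $e_h(p)\le(\sum p_i)^h/h!$ and $h!\ge(h/e)^h$.

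For the union bound I would not use the uniform bound $w_a\le C/M$ alone, since that loses a factor of $k$. Instead I would combine it with the total weight. For $a\notin B$,
\[
 w_a^h\le w_a\,(C/M)^{h-1},
\]
so
\[
 \sum_{a\notin B}\left(\frac{ew_a}{h}\right)^h\le\left(\frac eh\right)^h\left(\frac CM\right)^{h-1}\sum_{a\notin B}w_a\le\left(\frac{e}{h}\right)^h\frac{C^h}{M^{h-1}}.
\]
This gives $(eC/h)^h/M^{h-1}$, which is missing the extra factor $1/(h-1)$ in \eqref{eq:frequent-labels}.

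That missing factor is the main obstacle. To recover it I would use the ordering of weights more sharply. List the weights in decreasing order, $w_{(1)}\ge w_{(2)}\ge\cdots$. Then $j\,w_{(j)}\le C$, so $w_{(j)}\le C/j$. Applying the per-label bound to each label outside $B$ gives
\[
 \sum_{j>M}\left(\frac{e w_{(j)}}{h}\right)^h\le\left(\frac{eC}{h}\right)^h\sum_{j>M}j^{-h}\le\left(\frac{eC}{h}\right)^h\int_M^\infty x^{-h}\,dx=\frac{(eC/h)^h}{(h-1)M^{h-1}}.
\]
This is exactly the claimed bound, and here the hypothesis $h\ge2$ is what makes the integral converge. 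So this refined step replaces the cruder mixed estimate above. Everything else is routine: the per-label estimate, and the tie-breaking convention, which does not affect the argument since $w_{(j)}\le C/j$ holds for any decreasing ordering.
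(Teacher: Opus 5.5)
Your proposal is correct and matches the paper's proof. Both bound $\Prob[N_a\ge h]\le w_a^h/h!\le(ew_a/h)^h$ (the paper writes this via the factorial moment $\E\binom{N_a}{h}$, which equals your elementary symmetric sum), then use $w_{(j)}\le C/j$, a union bound over $j>M$, and the comparison of $\sum_{j>M}j^{-h}$ with $\int_M^\infty x^{-h}\,dx$. The cruder mixed estimate you discard along the way is unnecessary but harmless.
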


\begin{proof}
For a fixed $a$, the membership indicators in the $D$ lists
are independent. If $N_a=\sum_i\ind_{\{a\in S_i\}}$, then
\[
 \Prob[N_a\ge h]
 \le\E\binom{N_a}{h}
 \le\frac{w_a^h}{h!}
 \le\left(\frac{e w_a}{h}\right)^h.
\]
The middle inequality expands the factorial moment as a sum
over distinct indices and compares it to the full expansion
of $w_a^h$.
Order the weights decreasingly as $w_{(1)},\ldots,w_{(k)}$.
Since $w_{(j)}\le C/j$, a union bound gives
\[
 \Prob[\exists a\notin B:N_a\ge h]
 \le\left(\frac{eC}{h}\right)^h
       \sum_{j>M}j^{-h}
 \le\frac{(eC/h)^h}{(h-1)M^{h-1}},
\]
where the sum is bounded by the integral over $[M,\infty)$.
\end{proof}

We will repeatedly use the following elementary concentration
bound, with its proof included to fix the constants.

\begin{lem}[Too few eligible branches]
\label{lem:chernoff}
Let $L$ be a sum of independent indicator variables, with
$\nu=\E L$. For $0<a\le1$,
\[
 \Prob[L<(1-a)\nu]\le e^{-a^2\nu/2}.
\]
\end{lem}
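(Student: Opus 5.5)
The plan is the standard exponential-moment (Chernoff) argument applied to the lower tail, mirroring the computation in the proof of Lemma~\ref{lem:matrix}. Write $L=\sum_j B_j$ with $B_j$ independent indicators and $\Prob[B_j=1]=p_j$, so $\nu=\sum_j p_j$. If $\nu=0$ the event $L<0$ is empty and there is nothing to prove, so assume $\nu>0$.

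For $\lambda>0$, Markov's inequality applied to $e^{-\lambda L}$ gives
\[
 \Prob[L<(1-a)\nu]\le e^{\lambda(1-a)\nu}\,\E e^{-\lambda L}
 =e^{\lambda(1-a)\nu}\prod_j\bigl(1-p_j(1-e^{-\lambda})\bigr)
 \le \exp\bigl(\lambda(1-a)\nu-\nu(1-e^{-\lambda})\bigr),
\]
where the last step uses $1+x\le e^x$ factor by factor. This reduces the whole problem to a one-variable optimization, with no dependence on the individual $p_j$.

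Choose $\lambda=-\log(1-a)$, which is positive for $0<a<1$. The exponent becomes $\nu\bigl(-(1-a)\log(1-a)-a\bigr)$. It then suffices to show the elementary inequality $(1-a)\log(1-a)\ge -a+a^2/2$ on $[0,1)$. Put $g(a)=(1-a)\log(1-a)+a-a^2/2$. Then $g(0)=0$ and $g'(a)=-\log(1-a)-a\ge0$, since $-\log(1-a)=\sum_{i\ge1}a^i/i\ge a$. Hence $g\ge0$, which yields the bound $e^{-a^2\nu/2}$.

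The endpoint $a=1$ is handled directly. The event is $L<0$, which is empty, so the inequality holds trivially; alternatively one can let $a\uparrow1$. There is no real obstacle here. The only step needing care is verifying the calculus inequality for $g$ and treating $a=1$, where $\lambda$ would be infinite, separately.
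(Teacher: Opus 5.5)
Your proof is correct and follows the paper's route: Markov's inequality applied to $e^{-\lambda L}$ together with $\E e^{-\lambda L}\le\exp(\nu(e^{-\lambda}-1))$. The only difference is the choice of $\lambda$. The paper takes $\lambda=a$ and uses $e^{-\lambda}\le1-\lambda+\lambda^2/2$, which avoids both your calculus inequality for $g$ and the separate treatment of $a=1$, whereas your optimal $\lambda=-\log(1-a)$ gives a slightly sharper intermediate exponent before you weaken it to $-a^2\nu/2$.
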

\begin{proof}
If the indicators have success probabilities $p_i$, then
\[
 \E e^{-\lambda L}
 =\prod_i(1+p_i(e^{-\lambda}-1))
 \le \exp(\nu(e^{-\lambda}-1)).
\]
Apply Markov's inequality and use
$e^{-\lambda}\le1-\lambda+\lambda^2/2$ for $\lambda\ge0$.
Choosing $\lambda=a$ gives the displayed bound.
\end{proof}

Use the following list sizes for the remainder of the proof:
\begin{equation}
 \ell=\lceil100\log(k+1)\rceil,\qquad
 M=\lceil1000\ell\rceil,\qquad m=DM.
 \label{eq:list-sizes}
\end{equation}
Uniformly for $D\le\log k\le D^2$, and sufficiently large $D$,
\[
 \ell\le k,\quad M\le D^3,\quad m\le D^4,\quad
 q\le e^{-\ell^2/10}\le D^{-2},
\]
where $q$ is the exceptional-matrix probability.
Indeed $\ell=O(D^2)$, $M=O(D^2)$, $m=O(D^3)$, and
$k\ge e^D$. Every threshold here is independent of the rule.

\begin{prop}[A fixed list for a one-round independent-set rule]
\label{prop:warm-list}
Let $\mc A$ be a symmetric radius-one rule with
$\epsilon(\mc A)=0$. There is a fixed set $B\subseteq[k]$,
$|B|\le M$, such that
\[
 \Prob[\mc A_v\notin B\cup\{0\}]\le D^{-2}
\]
for all sufficiently large $D$ in the parameter regime above.
\end{prop}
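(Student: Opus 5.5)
We follow the outline of the technical overview in three steps: a short list for each incident matrix, independence across branches, and consolidation into one fixed list via Lemma~\ref{lem:frequent-labels}.

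\textbf{Step 1: one endpoint has a short list.} Fix an edge $uv$. The one-round outputs $\mc A_u,\mc A_v$ are independent conditional on $G_{uv}$ (Lemma~\ref{lem:views} with $s=1$). For a fixed matrix $g$, let $S_u(g)$ and $S_v(g)$ be the conditional supports of the nonzero outputs at $u$ and $v$. Suppose $g$ is not exceptional and both supports have size at least $\ell$. Then the $\ell\times\ell$ submatrix on $S_u(g)\times S_v(g)$ contains a $1$-entry. By conditional independence that pair is output with positive probability, contradicting $\epsilon(\mc A)=0$. Hence, off the exceptional event of probability $q$ (Lemma~\ref{lem:matrix}), at least one endpoint has conditional support of size $<\ell$.

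We now transfer this to the root. The swap $u\leftrightarrow v$ with $g\mapsto g^{\mathsf T}$ is a rooted isomorphism that preserves coordinate names, and $G_{uv}$ and its transpose have the same law. So the event ``the $u$-side list is short'' has the same probability as ``the $v$-side list is short''. Call this probability $\theta$. The two events together cover the non-exceptional set, so $2\theta\ge 1-q$, giving $\theta\ge(1-q)/2$.

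\textbf{Step 2: independent branch lists.} At the root $v$ with neighbors $v_1,\dots,v_D$, set $S_i=S_v(G_{vv_i})$ when $|S_v(G_{vv_i})|<\ell$, and $S_i=\emptyset$ otherwise. By symmetry, each $S_i$ is determined by the single matrix $G_{vv_i}$, so the $S_i$ are independent. Each satisfies $|S_i|<\ell$, and the set $\mathcal J=\{i:|S_v(G_{vv_i})|<\ell\}$ of nonempty-list indices has $|\mathcal J|$ dominating a $\mathrm{Bin}(D,\theta)$ variable. The key containment is that whenever $\mc A_v\neq0$, its value lies in $S_v(G_{vv_i})$ for every $i$. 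This holds because a realized output has positive conditional probability given any one incident matrix, so it lies in that matrix's conditional support. Consequently the realized nonzero output lies in all $|\mathcal J|$ nonempty lists.

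\textbf{Step 3: consolidation and the main obstacle.} Apply Lemma~\ref{lem:frequent-labels} with $C\le D\ell$, the given $M$, and $h=D/3$. Its bound is $(3e\ell/M)^{D/3}/((h-1))\le (3e/1000)^{D/3}$, which is far below $D^{-2}$. By Lemma~\ref{lem:chernoff}, $|\mathcal J|\ge D/3$ except with probability $e^{-\Omega(D)}$, using $\theta\ge 1/2-D^{-2}$. On the complement of both bad events, any nonzero output appears in at least $h$ lists and therefore lies in the fixed set $B$ with $|B|\le M$. The total failure probability is $e^{-\Omega(D)}\le D^{-2}$ for large $D$.

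The delicate point is Step 1's symmetry argument. The claim that swapping roles gives equal probabilities needs care: the conditional support at $v$ given $g$ must equal the conditional support at $u$ given $g^{\mathsf T}$ under the reflection isomorphism, and this relies on the rule being invariant with no port numbers. I would state this matching explicitly, then verify that the exceptional property is invariant under transposition so that $q$ is the same on both sides.
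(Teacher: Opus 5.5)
Your proposal is correct and matches the paper's proof step for step: zero conflict plus non-exceptionality forces one short endpoint support, endpoint symmetry gives eligibility probability at least $(1-q)/2$, the independent branch lists all contain any nonzero output, and Chernoff plus Lemma~\ref{lem:frequent-labels} consolidate them into a fixed $M$-list. The only differences are cosmetic: you use $h\approx D/3$ (take $h=\lfloor D/3\rfloor$) where the paper uses $h=\lfloor D/6\rfloor$, and your displayed frequent-labels bound drops a factor of $M$ (it should read $\frac{M}{h-1}\bigl(\frac{eD\ell}{hM}\bigr)^{h}$), which is harmless since $M\le D^3$.
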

\begin{proof}
For a fixed central matrix $g$, define
\[
 S_u(g)=\{a\in[k]:\Prob[\mc A_u=a\mid G_{uv}=g]>0\},
\]
and define $S_v(g)$ similarly, using the endpoint orientation.
Conditional on $g$, the outputs are independent by
Lemma~\ref{lem:views}. Their conditional conflict probability
is zero: it is nonnegative and its expectation is
$\epsilon(\mc A)=0$. Thus
$g(a,b)=0$ for every $a\in S_u(g),b\in S_v(g)$.
For a nonexceptional $g$, both sets cannot have size at least
$\ell$. It follows that
\[
 \ind_{\{|S_u(G_{uv})|<\ell\}}
 +\ind_{\{|S_v(G_{uv})|<\ell\}}
 \ge1
\]
unless $G_{uv}$ is exceptional.
Endpoint symmetry gives
$\Prob[|S_u(G_{uv})|<\ell]\ge(1-q)/2$.

At a root $v$, write $X_i=G_{vv_i}$ for its $D$ independent
incident matrices. Call branch $i$ \emph{eligible} when the
conditional nonzero support of $\mc A_v$ given $X_i$ has size
less than $\ell$. Let $S_i$ be that support on eligible values
and the empty set otherwise.
Each list depends only on $X_i$, so the lists and the eligibility
indicators are independent across $i$.
If $L$ counts eligible branches, then
$\nu=\E L\ge D(1-q)/2\ge D/3$ for large $D$.
Whenever $\mc A_v\ne0$, it belongs to every eligible list,
with probability one.

Lemma~\ref{lem:chernoff} gives
$\Prob[L<D/6]\le e^{-D/24}$.
Set $h=\lfloor D/6\rfloor\ge D/7$ for large $D$.
The total list weight is at most $D\ell$.
Let $B$ be the $M$ coordinates of largest weight.
Since $eD\ell/(hM)\le7e/1000<1/2$,
Lemma~\ref{lem:frequent-labels} gives
\[
 \Prob[\exists a\notin B:\ a\text{ occurs in at least }h
                 \text{ lists}]
 \le \frac{M}{h-1}2^{-h}.
\]
If the output is nonzero and outside $B$, then either $L<D/6$
or this last event occurs. Therefore its probability is at most
\[
 e^{-D/24}+\frac{M}{h-1}2^{-h}\le D^{-2}
\]
eventually, because $M\le D^3$ and $h\ge D/7$.
\end{proof}

\subsection{Short Lists Imply Low Occupancy}

We now turn a list-size bound into an occupancy bound.
This is the first use of hiding input information and resampling.
Here we hide all the input, so the new rule simply uses fixed
output probabilities.

\begin{lem}[Forgetting a one-round view]
\label{lem:warm-forget}
Let $\mc C$ be a symmetric radius-one independent-set rule
whose nonzero outputs lie in a fixed set of size at most $M$.
Write $z=p(\mc C)$, and let $\mc Q$ draw independently at
each vertex from the unconditional probabilities of $\mc C_v$.
Then
\[
 \frac{z^2}{2}=\epsilon(\mc Q)
 \le\frac{2H(\mc C_v)}D
 \le\frac{2z}{D}\log\frac{eM}{z}.
\]
\end{lem}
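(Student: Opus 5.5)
The plan has three parts: compute $\epsilon(\mc Q)$ exactly, compare $\mc C$ with $\mc Q$ through Lemmas~\ref{lem:sqrt-event} and \ref{lem:resampling-kl}, and control the resulting relative entropy using Lemmas~\ref{lem:product-information} and \ref{lem:occupancy-entropy}.

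\textbf{The identity.} Under $\mc Q$, the outputs at adjacent $u,v$ are independent of each other and of $G_{uv}$. Each nonzero pair $(a,b)$ is drawn with probability $\Prob[\mc C_u=a]\Prob[\mc C_v=b]$. Given the pair, a conflict occurs with probability $1/2$, since $G_{uv}(a,b)$ is a fair bit independent of the draw. Summing over pairs gives $\epsilon(\mc Q)=z^2/2$.

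\textbf{Comparing $\mc C$ with $\mc Q$.} Take $F=G_{uv}$, $A=\mc C_u$, $B=\mc C_v$, and let $Y_A,Y_B$ be constant. Let $P$ be the true joint distribution of $(G_{uv},\mc C_u,\mc C_v)$. The distribution $Q$ of Lemma~\ref{lem:resampling-kl} then keeps $G_{uv}$ and draws both outputs independently from their unconditional distributions; this is exactly the joint distribution of $(G_{uv},\mc Q_u,\mc Q_v)$. Let $E$ be the conflict event $\{A,B\ne0,\ G_{uv}(A,B)=1\}$. We have $P(E)=\epsilon(\mc C)=0$ and $Q(E)=z^2/2$.
\begin{itemize}
\item[(i)] Lemma~\ref{lem:sqrt-event} gives $Q(E)\le\KL(P\Vert Q)$.
\item[(ii)] By Lemma~\ref{lem:views} the outputs are conditionally independent given $G_{uv}$, so the last clause of Lemma~\ref{lem:resampling-kl} applies and
\[
\KL(P\Vert Q)=I(\mc C_u;G_{uv})+I(\mc C_v;G_{uv}).
\]
\end{itemize}

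\textbf{Bounding the two mutual-information terms.} Fix a root $v$ and its incident matrices $X_i=G_{vv_i}$. These are independent, so Lemma~\ref{lem:product-information} with trivial $Y$ gives
\[
\sum_i I(\mc C_v;X_i)\le H(\mc C_v).
\]
The rule is symmetric under rooted isomorphisms permuting branches, so every term $I(\mc C_v;X_i)$ is equal. Each is therefore at most $H(\mc C_v)/D$. The same symmetry, applied to the endpoint orientation, shows that $I(\mc C_u;G_{uv})$ and $I(\mc C_v;G_{uv})$ each equal this common value. This yields
\[
\frac{z^2}{2}\le\frac{2H(\mc C_v)}{D}.
\]

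\textbf{Entropy bound and main obstacle.} The final inequality is Lemma~\ref{lem:occupancy-entropy} with trivial $Y$, using that the nonzero outputs lie in at most $M$ labels and occur with probability $z$. The only delicate point is the symmetry step. I must check that a rooted isomorphism swapping two branches, or the isomorphism exchanging the roles of $u$ and $v$ (which transposes $G_{uv}$), preserves the joint law of the output and the relevant matrix. If it does, the mutual informations are equal as claimed. This holds because the matrix distribution is invariant under such isomorphisms and the rule commutes with them.
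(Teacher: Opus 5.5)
Your proposal is correct and follows the paper's proof essentially step for step. It computes $\epsilon(\mc Q)=z^2/2$ exactly, applies Lemma~\ref{lem:resampling-kl} with trivial smaller views and the conditional independence from Lemma~\ref{lem:views}, bounds $I(\mc C_v;G_{uv})\le H(\mc C_v)/D$ via Lemma~\ref{lem:product-information} and branch/endpoint symmetry, and then uses Lemma~\ref{lem:sqrt-event} with $P(E)=0$ followed by Lemma~\ref{lem:occupancy-entropy}; the only cosmetic difference is that you use the mutual-information form of Lemma~\ref{lem:resampling-kl} where the paper writes the equivalent entropy form $2H(\mc C_v)-H(\mc C_u,\mc C_v\mid G_{uv})$.
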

\begin{proof}
At $v$, the $D$ incident matrices $X_i$ are independent.
Lemma~\ref{lem:product-information} and branch symmetry give
$I(\mc C_v;X_i)\le H(\mc C_v)/D$ for a specified $i$.

Let $P$ be the actual distribution of
$(G_{uv},\mc C_u,\mc C_v)$. Note that $G_{uv}$ is one of the $X_i$s.
Let $Q$ have the same matrix distribution and independent
outputs with the unconditional probabilities of $\mc C$.
Lemma \ref{lem:resampling-kl} and the
conditional independence of the old outputs given $G_{uv}$ give
\[
 \begin{split}
 \KL(P\Vert Q)
 &=2H(\mc C_v)-H(\mc C_u,\mc C_v\mid G_{uv})\\
 &=2\bigl(H(\mc C_v)-H(\mc C_v\mid G_{uv})\bigr)
 \le \frac{2H(\mc C_v)}D.
 \end{split}
\]
The conflict event has probability zero under $P$.
Under $Q$, both outputs are nonzero with probability $z^2$,
and their selected matrix entry is an independent fair bit.
Thus it has probability $z^2/2$.
Apply Lemma~\ref{lem:sqrt-event} and square, then use
Lemma~\ref{lem:occupancy-entropy}.
\end{proof}

\begin{prop}[One-round occupancy bound]
\label{prop:warmup}
For sufficiently large $D$, uniformly for $D\le\log k\le D^2$,
every symmetric radius-one rule $\mc A$ with zero conflict has
\[
 p(\mc A)\le \frac{25\log D}{D}.
\]
\end{prop}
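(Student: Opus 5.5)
We combine Proposition~\ref{prop:warm-list} with Lemma~\ref{lem:warm-forget}. Proposition~\ref{prop:warm-list} supplies a fixed set $B$ with $|B|\le M$ such that $\Prob[\mc A_v\notin B\cup\{0\}]\le D^{-2}$. Define the truncated rule $\mc C$ by replacing every output outside $B\cup\{0\}$ with $0$. Concretely, for each view $y$ we move the probabilities $\pi_a(y)$ with $a\notin B$ onto $\pi_0(y)$.

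First check that Lemma~\ref{lem:warm-forget} applies to $\mc C$.
\begin{itemize}
\item The rule $\mc C$ is a symmetric radius-one rule, because $B$ is a fixed set of coordinate names. Rooted isomorphisms preserve coordinate names, so symmetry is inherited from $\mc A$.
\item The rule $\mc C$ has zero conflict. Its outputs are coupled with those of $\mc A$ so that every nonzero output of $\mc C$ equals the corresponding output of $\mc A$. Hence any conflict of $\mc C$ is a conflict of $\mc A$.
\item Its nonzero outputs lie in $B$, a fixed set of size at most $M$.
\end{itemize}
Write $p=p(\mc A)$ and $z=p(\mc C)$. Then $z\ge p-D^{-2}$.

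Lemma~\ref{lem:warm-forget} then gives
\[
 \frac{z^2}{2}\le\frac{2z}{D}\log\frac{eM}{z}.
\]
If $z>0$, this becomes $z\le\frac{4}{D}\log\frac{eM}{z}$.

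To solve this implicit inequality, suppose $z\ge 1/D$. Then $\log(eM/z)\le\log(eMD)\le\log(eD^4)\le 4\log D+1$, using $M\le D^3$. This gives $z\le (16\log D+4)/D\le 17\log D/D$ for large $D$. If instead $z<1/D$, the same bound holds trivially. In either case $z\le 17\log D/D$.

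Hence
\[
 p\le z+D^{-2}\le \frac{17\log D}{D}+\frac{1}{D^2}\le\frac{25\log D}{D}
\]
for sufficiently large $D$. All thresholds are uniform over $D\le\log k\le D^2$, since they come only from the bounds $M\le D^3$ and the threshold in Proposition~\ref{prop:warm-list}.

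This argument only assembles earlier results, so there is no real obstacle. The only care needed is in verifying that truncation preserves symmetry and zero conflict, and in resolving the implicit inequality for $z$.
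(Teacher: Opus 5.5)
Your proof is correct and matches the paper's argument step for step. You truncate to the fixed list $B$ from Proposition~\ref{prop:warm-list}, apply Lemma~\ref{lem:warm-forget} to the zero-conflict truncated rule, and split on whether $z\ge 1/D$. The only difference is cosmetic: you absorb constants as $17\log D/D$ for large $D$, whereas the paper uses $\log D\ge 1$ to reach $20\log D/D$.
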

\begin{proof}
Use the fixed list $B$ from Proposition~\ref{prop:warm-list}.
Let $\mc C_v=\mc A_v$ when $\mc A_v\in B$, and let it be zero
otherwise. This only deletes selections, so $\epsilon(\mc C)=0$.
Its occupancy $z$ satisfies $p(\mc A)\le z+D^{-2}$.
If $z<1/D$, the conclusion follows immediately.
Otherwise Lemma~\ref{lem:warm-forget}, divided by $z>0$, gives
\[
 z\le\frac4D\log\frac{eM}{z}
 \le\frac{4(1+4\log D)}D
 \le\frac{20\log D}{D},
\]
where $M\le D^3$, $z\ge1/D$, and $\log D\ge1$.
Adding $D^{-2}$ proves the stated bound.
\end{proof}

The proof explains why the graph's random matrices do not give
the algorithm unlimited freedom to choose different coordinates.
Independence constrains the possible outputs in many separate
branches, and the information budget constrains coordination
even after those outputs are confined to one short list.

For larger radii, adjacent smaller views overlap and the
intermediate rules can have positive conflict probability.
We now extend the two parts of the warmup to address these
two issues, while keeping track of their quantitative costs.

\Needspace{5\baselineskip}
\section{Round Elimination by Conditional Resampling}
\label{sec:one-step}

Suppose a radius-$s$ output $Z_v$ depends on an inner view
$Y=N_{s-1}(v)$ and an outer input layer.
For every possible inner view $y$, compute the probabilities
\[
 q_a(y)=\Prob[Z_v=a\mid Y=y]\qquad(a=0,\ldots,k).
\]
These probabilities average over the hidden matrices and any
private randomness of the old rule. A new radius-$(s-1)$ rule
can output $a$ with probability $q_a(y)$.
It does not need to see, or reconstruct, the actual outer layer.
We call this operation \emph{conditional resampling}.
The proof allows arbitrary computation, so these finite
probabilities define a legitimate rule.

Conditional resampling preserves the probability of every
output at a single vertex after averaging over all inputs.
It can change how outputs at neighboring vertices depend on
each other and on the central matrix. That change is exactly
what must be bounded.

\begin{prop}[The information cost of removing one round]
\label{prop:one-step}
Let $\mc Z$ be a symmetric radius-$s$ rule, $s\ge1$.
Let $\mc Q$ be its conditional resampling onto $N_{s-1}(v)$.
Put
\[
 d=H(\mc Z_v\mid N_{s-1}(v))-H(\mc Z_v\mid N_s(v)).
\]
Then $d\ge0$, $p(\mc Q)=p(\mc Z)$, and
\begin{equation}
 \sqrt{\epsilon(\mc Q)}
 \le\sqrt{\epsilon(\mc Z)}+\sqrt{2d/D}.
 \label{eq:one-step}
\end{equation}
\end{prop}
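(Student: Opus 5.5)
The plan is to follow the template of Lemma~\ref{lem:warm-forget}, with the full view hidden there replaced by the outer layer only. First, $d\ge0$ is immediate, because $N_{s-1}(v)$ is a function of $N_s(v)$ and conditioning cannot increase entropy. Next, $p(\mc Q)=p(\mc Z)$ holds because, for every $a$, $\Prob[\mc Q_v=a]=\E\,q_a(N_{s-1}(v))=\Prob[\mc Z_v=a]$ by the tower property. Finally, $\mc Q$ is a legitimate symmetric radius-$(s-1)$ rule: the probabilities $q_a(y)$ depend only on $y$, and they are invariant under rooted isomorphisms because $\mc Z$ is and the outer matrices are i.i.d.

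For the conflict bound, fix an edge $uv$ and set $F=F_{s-1}(uv)$, $Y_A=N_{s-1}(u)$, $Y_B=N_{s-1}(v)$, $A=\mc Z_u$, $B=\mc Z_v$. Let $P$ be the joint law of $(F,A,B)$. Let $Q$ keep the law of $F$ and draw $A,B$ independently from $q(Y_A)$ and $q(Y_B)$. Then $Q$ is exactly the joint law of $(F,\mc Q_u,\mc Q_v)$, since $\mc Q$ draws privately and independently given all matrices. The conflict event $E=\{A,B\ne0,\ G_{uv}(A,B)=1\}$ is determined by $(F,A,B)$, because $G_{uv}\in F$. Hence $P(E)=\epsilon(\mc Z)$ and $Q(E)=\epsilon(\mc Q)$. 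By Lemma~\ref{lem:views}, $A$ and $B$ are independent conditional on $F$, so Lemma~\ref{lem:resampling-kl} gives
\[
 \KL(P\Vert Q)=I(\mc Z_u;F\mid N_{s-1}(u))+I(\mc Z_v;F\mid N_{s-1}(v)).
\]
Lemma~\ref{lem:sqrt-event} then yields $\sqrt{\epsilon(\mc Q)}\le\sqrt{\epsilon(\mc Z)}+\sqrt{\KL(P\Vert Q)}$. It therefore suffices to show that each of the two mutual-information terms is at most $d/D$.

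This is the key step, and I would do it at the root $v$. Take $Y=N_{s-1}(v)$ and let $X_i$ be the branch data from Lemma~\ref{lem:views}, so that $F_{s-1}(vv_i)$ is determined by $(Y,X_i)$. We have $I(\mc Z_v;F_{s-1}(vv_i)\mid Y)=I(\mc Z_v;X_i\mid Y)$. The $X_i$ are conditionally independent given $Y$, and together with $Y$ they make up $N_s(v)$. Lemma~\ref{lem:product-information} therefore gives
\[
 \sum_i I(\mc Z_v;X_i\mid Y)\le H(\mc Z_v\mid Y)-H(\mc Z_v\mid N_s(v))=d,
\]
where the private randomness of $\mc Z$ is allowed by that lemma. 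Symmetry under rooted isomorphisms permuting the branches makes all $D$ terms equal, so each is at most $d/D$. The same bound holds at $u$ by applying symmetry to the reversed edge. Thus $\KL(P\Vert Q)\le 2d/D$, which proves \eqref{eq:one-step}.

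The main point to check carefully is that $F$ contains nothing beyond $(Y,X_i)$ that is relevant to $\mc Z_v$. On a tree this holds: $F$ also contains $N_{s-1}(u)$, but the part of $N_{s-1}(u)$ outside $N_s(v)$ is independent of $\mc Z_v$ given $N_s(v)$. By the chain rule, extra variables that are conditionally independent of $\mc Z_v$ add no information, so $I(\mc Z_v;F\mid Y)$ reduces to the branch quantity above.
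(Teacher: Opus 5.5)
Your proposal is correct and follows essentially the same route as the paper: Lemma~\ref{lem:views} for conditional independence, Lemma~\ref{lem:resampling-kl} (you use its mutual-information form, the paper its entropy form, which are equivalent here), Lemma~\ref{lem:product-information} plus branch symmetry to get $d/D$ per endpoint, and Lemma~\ref{lem:sqrt-event} applied to the conflict event. Your final paragraph is unnecessary: the closed radius-$(s-1)$ ball about $u$ lies inside the radius-$s$ ball about $v$, so $N_{s-1}(u)\subseteq N_s(v)$, and $F_{s-1}(uv)$ is exactly $(Y,X_i)$, as Lemma~\ref{lem:views} already states.
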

\begin{proof}
Nonnegativity follows because conditioning decreases entropy,
and occupancy is preserved by averaging conditional probabilities.
Write $Y=N_{s-1}(v)$ and use the independent branch extensions
$X_1,\ldots,X_D$ from Lemma~\ref{lem:views}.
Their union with $Y$ is $N_s(v)$, so
Lemma~\ref{lem:product-information} gives
\[
 \sum_i I(\mc Z_v;X_i\mid Y)\le d.
\]
After averaging over $Y$, all $D$ terms are equal by symmetry.
For a specified incident edge, its shared information is
$F=(Y,X_i)$, and hence
\[
 H(\mc Z_v\mid Y)-H(\mc Z_v\mid F)\le d/D.
\]
We do not assert equality between branches after fixing one
particular inner view; the equality is in the averaged entropies.

Let $P$ be the actual joint distribution of
$(F,\mc Z_u,\mc Z_v)$. Let $Q$ keep the same $F$ distribution
and independently draw the two outputs from their respective
smaller-view probabilities. This is the distribution generated
by $\mc Q$ at the edge.
The old outputs are independent conditional on $F$ by
Lemma~\ref{lem:views}. Lemma~\ref{lem:resampling-kl} gives
\[
 \begin{split}
 \KL(P\Vert Q)
 &=2H(\mc Z_v\mid N_{s-1}(v))
                  -H(\mc Z_u,\mc Z_v\mid F)\\
 &=2\bigl(H(\mc Z_v\mid N_{s-1}(v))
                  -H(\mc Z_v\mid F)\bigr)\le 2d/D.
 \end{split}
\]
The tuple $F$ contains the central matrix, so conflict is the
same event in the two probability spaces. Apply
Lemma~\ref{lem:sqrt-event} to obtain the desired result.
\end{proof}

The quantity $H(\mc Z_v\mid N_s(v))$ is the uncertainty left
in the private output draw when all of the old input is known.
The number $d$ is the additional uncertainty caused by hiding
one layer. This interpretation will matter when we sum the
costs of successive reductions.

It would not help to apply this operation immediately to a rule
that might output any of the $k$ coordinates: its entropy could
be as large as $\log(k+1)$, which is too large in our regime.
As in the warmup, we first restrict outputs to a much shorter
list. At larger radii that list depends on the inner view.
To control the mass lost this way, we next need an estimate on
how strongly two neighboring inner views can coordinate their
deleted masses.

\Needspace{5\baselineskip}
\section{Multi-Round Entropy Bounds}
\label{sec:tree-entropy}

The independent-input calculation in the warmup is especially
simple because the root sees one independent matrix per branch.
At higher radius, the smaller views of adjacent vertices
overlap. We need a bound that respects this overlap.
The principle is that local outputs cannot share too much
information compared with their individual uncertainty.
We prove a conditional version, so that already known matrices
are accounted for exactly.

\subsection{Single-Layer Bounds}

We compare the entropy revealed by one additional layer at a
vertex with the entropy revealed by that layer at an edge.
The independent-input information budget applies twice:
first to the $D$ branches at a vertex, then to the two
independent completions of a shared edge view.

\begin{lem}[One-layer conditional entropy increment]
\label{lem:layer-entropy}
Let $\mc Z$ be a symmetric radius-$R$ rule. For adjacent $u,v$,
put
\[
 H_h=H(\mc Z_v\mid N_h(v)),\qquad
 J_h=H(\mc Z_u,\mc Z_v\mid F_h(uv)),
 \qquad 0\le h\le R.
\]
For every $0\le h<R$,
\[
 J_h-J_{h+1}\ge\frac{2(D-1)}D(H_h-H_{h+1}).
\]
\end{lem}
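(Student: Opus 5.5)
The plan is to write both sides as conditional mutual informations about one new layer of input and compare them. Everything is taken conditional on $F_h(uv)$, and we use the product structure of Lemma~\ref{lem:views}.

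\textbf{Setup.} At the root $v$, put $Y=N_h(v)$. Let $X_1,\ldots,X_D$ be the independent branch extensions from $N_h(v)$ to $N_{h+1}(v)$ given by Lemma~\ref{lem:views} with $s=h+1$. Index them so that $X_D$ is the branch through $u$. Then $H_h-H_{h+1}=I(\mc Z_v;X_1,\ldots,X_D\mid Y)$. The first geometric check is that
\[
 F_h(uv)=(Y,X_D),
\]
with every matrix counted once. Indeed, an edge of $E_h(u)$ not in $E_h(v)$ lies in the $u$-branch between $v$-depths $h$ and $h+1$. Conversely, every such edge lies within distance $h$ of $u$. For $h=0$ this reads $F_0(uv)=\{G_{uv}\}=X_D$.

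\textbf{Step 1: the $D-1$ far branches.} Let $W_v=(X_1,\ldots,X_{D-1})$. This is exactly the data of $F_{h+1}(uv)$ outside $F_h(uv)$ on the $v$ side. By the chain rule,
\[
 I(\mc Z_v;W_v\mid F_h)=I(\mc Z_v;X_1,\ldots,X_D\mid Y)-I(\mc Z_v;X_D\mid Y).
\]
Apply Lemma~\ref{lem:product-information} with conditional independence given $Y$. By symmetry, after averaging over $Y$ the $D$ terms $I(\mc Z_v;X_i\mid Y)$ are equal, so $I(\mc Z_v;X_D\mid Y)\le (H_h-H_{h+1})/D$. Hence
\[
 I(\mc Z_v;W_v\mid F_h)\ge\frac{D-1}{D}(H_h-H_{h+1}).
\]
The same bound holds for $I(\mc Z_u;W_u\mid F_h)$ by endpoint symmetry. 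Here $W_u$ is the analogous $u$-side new data.

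\textbf{Step 2: the edge.} By Lemma~\ref{lem:views}, $F_{h+1}(uv)$ consists of $F_h(uv)$ together with $W_u$ and $W_v$. These two blocks are disjoint edge sets, so they are independent conditional on $F_h$. Therefore
\[
 J_h-J_{h+1}=I(\mc Z_u,\mc Z_v;W_u,W_v\mid F_h).
\]
Apply Lemma~\ref{lem:product-information} with $Z=(\mc Z_u,\mc Z_v)$ and the two independent inputs $W_u,W_v$. This bounds the right side below by
\[
 I(\mc Z_u,\mc Z_v;W_u\mid F_h)+I(\mc Z_u,\mc Z_v;W_v\mid F_h).
\]
Dropping a coordinate of $Z$ can only decrease mutual information. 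So this is at least $I(\mc Z_u;W_u\mid F_h)+I(\mc Z_v;W_v\mid F_h)$. Combining with Step 1 gives the factor $2(D-1)/D$.

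\textbf{Main obstacle.} The delicate point is the bookkeeping identity $F_h(uv)=(N_h(v),X_D)$ and the matching claim that $F_{h+1}\setminus F_h$ splits exactly into $W_u$ and $W_v$. This is where the tree structure (girth) is used. The symmetry step also needs care: equality of the branch terms holds only after averaging over $Y$, as in Proposition~\ref{prop:one-step}. That averaged equality is all the argument requires.
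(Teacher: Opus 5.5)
Your proof is correct and follows the paper's argument essentially step for step. You use the same branch decomposition from Lemma~\ref{lem:views}, the averaged symmetry bound $I(\mc Z_v;X_D\mid Y)\le(H_h-H_{h+1})/D$, and the chain rule to get the $(D-1)/D$ factor at each endpoint, then apply Lemma~\ref{lem:product-information} to the joint output with the two independent one-sided completions and drop one coordinate in each term; the only difference is cosmetic (you label the $u$-branch $X_D$ rather than $X_1$).
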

\begin{proof}
Fix $h<R$ and write $\Delta_h=H_h-H_{h+1}$.
At $v$, put $Y=N_h(v)$ and enumerate its neighbors as
$v_1,\ldots,v_D$ for analysis, with $v_1=u$.
Let $X_i$ be the matrix data in $F_h(vv_i)$ outside $Y$.
Lemma~\ref{lem:views}, applied with $s=h+1$, says that the
$X_i$ are independent conditional on $Y$, their union completes
$N_{h+1}(v)$, and $F_h(vv_i)$ is specified by $(Y,X_i)$.
Lemma~\ref{lem:product-information} therefore gives
\[
 \sum_{i=1}^D I(\mc Z_v;X_i\mid Y)
 \le I(\mc Z_v;X_1,\ldots,X_D\mid Y)
 =\Delta_h.
\]
The $D$ terms are equal by branch symmetry after averaging
over $Y$. Consequently
$I(\mc Z_v;X_1\mid Y)\le\Delta_h/D$.

Write $F=F_h(uv)$ and $A=(X_2,\ldots,X_D)$.
Since $F$ is specified by $(Y,X_1)$, the chain rule gives
\[
 I(\mc Z_v;A\mid F)
 =\Delta_h-I(\mc Z_v;X_1\mid Y)
 \ge\frac{D-1}{D}\Delta_h.
\]
Define $B$ in the same way at $u$, using all branches except
the one toward $v$. The corresponding inequality is
\[
 I(\mc Z_u;B\mid F)\ge\frac{D-1}{D}\Delta_h.
\]
The same $\Delta_h$ applies at both endpoints by symmetry.

The tuples $A$ and $B$ contain exactly the matrix data in
$N_{h+1}(v)$ and $N_{h+1}(u)$, respectively, outside $F$.
By Lemma~\ref{lem:views}, these are disjoint independent
completions conditional on $F$, and together they complete
$F_{h+1}(uv)$. Apply Lemma~\ref{lem:product-information}
to these two inputs and the joint output to obtain
\[
 \begin{aligned}
 J_h-J_{h+1}
 &=I((\mc Z_u,\mc Z_v);A,B\mid F)\\
 &\ge I((\mc Z_u,\mc Z_v);A\mid F)
      +I((\mc Z_u,\mc Z_v);B\mid F)\\
 &\ge I(\mc Z_v;A\mid F)+I(\mc Z_u;B\mid F)\\
 &\ge\frac{2(D-1)}D\Delta_h.
 \end{aligned}
\]
The second inequality retains one component of the joint
output in each mutual information term. The outputs
themselves need not be independent conditional on $F$.
\end{proof}

\subsection{Multiple Layers via Telescoping}

The one-layer comparison telescopes over all hidden layers.
At the complete input view, only the independent private
output draws remain uncertain.

\begin{lem}[Conditional edge entropy]
\label{lem:conditional-edge-entropy}
Let $\mc Z$ be a symmetric radius-$R$ rule, and let $0\le t\le R$.
At adjacent $u,v$, use
$F_t=\{G_{uv}\}\cup N_t(u)\cup N_t(v)$.
With $K=H(\mc Z_v\mid N_R(v))$, we have
\begin{equation}
 \frac D2 H(\mc Z_u,\mc Z_v\mid F_t)
 \ge (D-1)H(\mc Z_v\mid N_t(v))+K.
 \label{eq:conditional-edge-entropy}
\end{equation}
\end{lem}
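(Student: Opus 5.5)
The plan is to telescope Lemma~\ref{lem:layer-entropy} from $h=t$ up to $h=R-1$, and to evaluate the terminal term $J_R$ exactly using conditional independence at the full view. With the notation $H_h,J_h$ of Lemma~\ref{lem:layer-entropy}, summing its inequality over $h=t,\ldots,R-1$ gives
\[
 J_t-J_R\ge\frac{2(D-1)}{D}\,(H_t-H_R).
\]
Multiplying by $D/2$, the claim \eqref{eq:conditional-edge-entropy} becomes $\frac D2 J_t\ge (D-1)H_t+K$. This reduces to the single estimate
\[
 \frac D2 J_R\ge (D-1)H_R+K = D\,K,
\]
because $H_R=K$ by definition. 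So the target is $J_R\ge 2K$.

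For the terminal term, I condition on $F_R(uv)$, which contains $N_R(u)$ and $N_R(v)$. The output of a radius-$R$ rule at $v$ is a private draw with probabilities $\pi(N_R(v))$. Conditional on all matrices, the draws at different vertices are independent, by the definition of a rule in Section~\ref{sec:model}. Conditional on $F_R(uv)$, the law of $(\mc Z_u,\mc Z_v)$ is therefore the product of $\pi(N_R(u))$ and $\pi(N_R(v))$. The extra matrices in $F_R(uv)$ beyond $N_R(v)$ do not affect the conditional law of $\mc Z_v$. Hence
\[
 J_R=H(\mc Z_u\mid N_R(u))+H(\mc Z_v\mid N_R(v))=2K,
\]
where the last equality uses symmetry between the endpoints. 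This gives equality in the reduced claim.

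The main thing to check is the edge case $t=R$. There the sum is empty and the claim is exactly $\frac D2 J_R\ge DK$, which the computation above handles. Also, Lemma~\ref{lem:layer-entropy} is stated for $0\le h<R$, which is precisely the range used in the telescoping. The only potential subtlety is that $F_R(uv)$ is well defined as a set of matrices lying in the radius-$R$ views. Even if it includes data outside $N_R(u)\cup N_R(v)$, such data is irrelevant given the two views, so the product-law argument still applies. I expect no real obstacle beyond this bookkeeping.
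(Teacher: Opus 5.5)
Your proposal is correct and follows essentially the same route as the paper. Both telescope Lemma~\ref{lem:layer-entropy} over $h=t,\ldots,R-1$, use $H_R=K$, and evaluate $J_R=2K$ from the conditional independence of the private draws given $F_R$, with the empty-sum case $t=R$ handled the same way.
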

\begin{proof}
Use $H_h,J_h$ from Lemma~\ref{lem:layer-entropy}.
Summing its inequality for $h=t,\ldots,R-1$ gives
\[
 J_t-J_R\ge\frac{2(D-1)}D(H_t-H_R).
\]
When $t=R$, the sum is empty and both differences are zero.

By definition $H_R=K$.
The shared view $F_R$ determines the complete input views
of both outputs. Conditional on $F_R$, their private draws
are independent, and the conditional distribution of each
output depends only on its own $N_R$ view. Hence
\[
 J_R=H(\mc Z_u\mid F_R)+H(\mc Z_v\mid F_R)=2K.
\]
Substitute these identities into the telescoped inequality:
\[
 J_t\ge 2K+\frac{2(D-1)}D(H_t-K)
      =\frac{2(D-1)}D H_t+\frac2D K.
\]
Multiplying by $D/2$ proves the claim.
\end{proof}

The estimate concerns conditional entropy of the outputs,
so it does not include the entropy of the surrounding raw
matrices. Its bound has no factor for the number of hidden
layers because the entropy increments telescope.
All conditional entropies and mutual informations above are
averaged over their conditioning variables. The input
independence used in the proof is conditional on raw views,
not on an output or on a selection event.

\subsection{Implications for Local Probabilities}

The next application turns an arbitrary local number in $[0,1]$
into the success probability of a private coin. This lets the
entropy inequality control how rarely large values can meet
across an edge.
\begin{lem}[Entropy bound for scalar coincidences]
\label{lem:scalar-entropy}
Let $f_v$ be the same symmetric finite-radius function of the
edge inputs at every vertex, with $0\le f_v\le1$ and $\mu=\E f_v$.
For adjacent $u,v$,
\begin{equation}
 \E[f_uf_v]\ge\frac{\mu^2}{2}-\frac{2h(\mu)}D.
 \label{eq:scalar-entropy}
\end{equation}
\end{lem}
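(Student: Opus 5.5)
The plan is to turn the scalar $f$ into the success probability of a private coin and then reuse the resampling comparison from the warmup. For the comparison I would hide everything except the central matrix and bound the cost with Lemma~\ref{lem:conditional-edge-entropy} at $t=0$.

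\textbf{Step 1: the coin rule.} Let $R$ be the radius of $f$. Define a radius-$R$ rule $\mc Z$ that outputs coordinate $1$ with probability $f_v(N_R(v))$ and $0$ otherwise, using independent private draws. It is a legitimate symmetric rule because $f$ is symmetric and $0\le f_v\le1$. Then $\Prob[\mc Z_v=1]=\mu$ and $H(\mc Z_v)=h(\mu)$. Conditional on all matrices, the two draws at $u$ and $v$ are independent. Hence
\[
 \Prob[\mc Z_u=\mc Z_v=1]=\E[f_uf_v].
\]

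\textbf{Step 2: the KL bound.} Let $P$ be the actual distribution of $(G_{uv},\mc Z_u,\mc Z_v)$. Let $Q$ keep $G_{uv}$ and draw both outputs independently from the unconditional Bernoulli$(\mu)$ law; this corresponds to taking $Y_A,Y_B$ trivial, since $N_0$ is empty. The first identity of Lemma~\ref{lem:resampling-kl} gives
\[
 \KL(P\Vert Q)=2H(\mc Z_v)-H(\mc Z_u,\mc Z_v\mid F_0).
\]
This identity does not need conditional independence given $F_0$, which indeed fails here since the views overlap. Lemma~\ref{lem:conditional-edge-entropy} with $t=0$ gives
\[
 H(\mc Z_u,\mc Z_v\mid F_0)\ge\frac{2(D-1)}{D}h(\mu)+\frac{2K}{D}\ge\frac{2(D-1)}{D}h(\mu),
\]
since $K\ge0$. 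Therefore
\[
 \KL(P\Vert Q)\le \frac{2h(\mu)}{D}.
\]

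\textbf{Step 3: the event comparison.} Take the event $E=\{\mc Z_u=\mc Z_v=1\}$, so that $Q(E)=\mu^2$ and $P(E)=\E[f_uf_v]$. Lemma~\ref{lem:sqrt-event} yields
\[
 \mu\le\sqrt{\E[f_uf_v]}+\sqrt{2h(\mu)/D}.
\]
Squaring with $(a+b)^2\le 2a^2+2b^2$ gives $\mu^2\le 2\E[f_uf_v]+4h(\mu)/D$, which rearranges to \eqref{eq:scalar-entropy}.

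\textbf{Main obstacle.} The delicate step is Step 2. One must apply Lemma~\ref{lem:conditional-edge-entropy}, whose proof relies on the views at $u,v$ having independent private draws with $J_R=2K$, to a rule built from $f$. This requires checking that the coin rule satisfies the symmetry and branch-equality hypotheses used there; I expect this to go through directly from the symmetry of $f$.
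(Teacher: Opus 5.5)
Your proposal is correct and follows the paper's proof: the same private coin with bias $f_v$, Lemma~\ref{lem:conditional-edge-entropy} at $t=0$ to bound a relative entropy by $2h(\mu)/D$, Lemma~\ref{lem:sqrt-event} on the event that both coins equal $1$, and the same squaring step. The only difference is cosmetic. The paper first drops $G_{uv}$ via $H(B_u,B_v)\ge H(B_u,B_v\mid G_{uv})$ and compares the pair with the product of its marginals, so its divergence is $I(B_u;B_v)$. You instead keep $G_{uv}$ in the space and use the first identity of Lemma~\ref{lem:resampling-kl}, which gives a slightly larger divergence that obeys the same bound.
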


\begin{proof}
Independently at each vertex conditional on the matrices, sample
$B_v\in\{0,1\}$ with success probability $f_v$.
Then $\Prob[B_v=1]=\mu$ and
$e:=\Prob[B_u=B_v=1]=\E[f_uf_v]$.
Lemma~\ref{lem:conditional-edge-entropy} with $t=0$ implies
\footnote{This uses the fact that $B$ is a rule in an unexpected way; i.e. that it only is allowed to output 0 or 1.}
\[
 H(B_u,B_v)\ge H(B_u,B_v\mid G_{uv})
              \ge 2(1-1/D)h(\mu).
\]
Thus $I(B_u;B_v)\le2h(\mu)/D$.
Compare their joint distribution $P$ (with $P(i)$ being the marginal for $i$) with the product of their marginals $Q(i,j) = P(i)P(j)$:
\[
\KL(P\Vert Q) = \sum_{i=0}^1 \sum_{j=0}^1 P(i,j)\log(P(i,j)/Q(i,j)) = 2\sum_{i=0}^1 P(i)\log (1/P(i)) - H(P) = I(B_u;B_v)
\]
Thus, for the event that both variables equal $1$,
Lemma~\ref{lem:sqrt-event} gives
$|\sqrt e-\mu|\le\sqrt{2h(\mu)/D}$.
Hence $\mu^2\le2e+4h(\mu)/D$, proving the assertion.
\end{proof}

The error term is $h(\mu)/D$, which is small when the mean
$\mu$ is small. This dependence on the mean is important when
the number $f_v$ represents a small probability of deletion.

\Needspace{5\baselineskip}
\section{Obtaining Shortlists via Round Elimination}
\label{sec:truncation}

In the one-round warmup, zero conflict forced one endpoint's
conditional support to be short. An intermediate rule obtained
by resampling can have positive conflict probability, and then
that exact conclusion need not hold.
We need a statement about the probability outside short lists.
We develop its matrix, branch, and neighboring-vertex parts
in that order.

\subsection{Conflict Probability Bounds Edge Tails}

For a typical central matrix, two independent output
distributions cannot both place substantial probability
outside every $\ell$-coordinate list without creating conflicts.
The following lemma remains valid when the two distributions
were themselves chosen after seeing the matrix.
\begin{lem}[Two diffuse distributions across one matrix]
\label{lem:diffuse}
Let $G$ be a nonexceptional matrix as in
Lemma~\ref{lem:matrix}, and let $P,Q$ be arbitrary distributions
on $\{0\}\cup[k]$. If $A\sim P$ and $B\sim Q$ independently and
\[
 e=\Prob[A,B\ne0,\ G(A,B)=1],
\]
then
\begin{equation}
 \tail_\ell(P)\tail_\ell(Q)\le4e.
 \label{eq:diffuse}
\end{equation}
The distributions may depend arbitrarily on $G$.
\end{lem}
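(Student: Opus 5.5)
The plan is a block decomposition. The $\ell$ coordinates of largest probability are charged nothing, and each further group of $\ell$ coordinates is charged against the group just before it, using the density of every $\ell$-by-$\ell$ submatrix of a nonexceptional $G$. Fix $G$ nonexceptional and the two distributions $P,Q$; everything below is for this fixed $G$, so it does not matter how $P,Q$ were chosen from $G$. We will show that the two tails are each dominated by $\ell$ times minimal block probabilities, and that every pair of full blocks contributes conflict at least $\ell^2/4$ times the product of those minima.

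\textbf{Step 1: blocks.} Order the nonzero coordinates as $a_1,a_2,\ldots,a_k$ so that $P(a_1)\ge P(a_2)\ge\cdots$, with ties broken by coordinate order, as in the definition of $\tail_\ell$. Cut this order into consecutive blocks
\[
 S_j=\{a_{(j-1)\ell+1},\ldots,a_{j\ell}\}, \qquad j\ge 1,
\]
where only the last block may have fewer than $\ell$ elements. Then $S_1$ is the optimal list, so $\tail_\ell(P)=\sum_{j\ge2}P(S_j)$. Do the same for $Q$ to get blocks $T_1,T_2,\ldots$ with $\tail_\ell(Q)=\sum_{j'\ge2}Q(T_{j'})$.

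\textbf{Step 2: each block is controlled by its predecessor.} If $S_{j+1}\neq\emptyset$, then $S_j$ is a full block of size $\ell$. Every element of $S_{j+1}$ has probability at most $\pi_j:=\min_{a\in S_j}P(a)$, and $|S_{j+1}|\le\ell$, so $P(S_{j+1})\le \ell\pi_j$. Likewise $Q(T_{j'+1})\le\ell\rho_{j'}$, where $\rho_{j'}=\min_{b\in T_{j'}}Q(b)$. No padding is needed, because only the full blocks $S_j$ and $T_{j'}$ are used as rectangle sides.

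\textbf{Step 3: conflict from full block pairs.} For full blocks $S_j$ and $T_{j'}$, the submatrix $G[S_j,T_{j'}]$ is $\ell$-by-$\ell$. Since $G$ is nonexceptional, it has more than $\ell^2/4$ entries equal to $1$. Because $A$ and $B$ are independent, the conflict probability restricted to the event $A\in S_j$, $B\in T_{j'}$ is
\[
 \sum_{a\in S_j,\,b\in T_{j'}}P(a)Q(b)G(a,b)\ \ge\ \frac{\ell^2}{4}\,\pi_j\rho_{j'}\ \ge\ \frac14\,P(S_{j+1})\,Q(T_{j'+1}).
\]
The events $\{A\in S_j,\ B\in T_{j'}\}$ are pairwise disjoint, since the blocks partition the nonzero coordinates. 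Summing over all pairs with $S_{j+1}$ and $T_{j'+1}$ nonempty therefore gives
\[
 e\ \ge\ \frac14\sum_{j,j'\ge1}P(S_{j+1})Q(T_{j'+1})=\frac14\,\tail_\ell(P)\tail_\ell(Q).
\]
This is \eqref{eq:diffuse}. If either tail is zero, the inequality holds trivially.

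\textbf{Main obstacle.} The only delicate point is that the rectangle bound from Lemma~\ref{lem:matrix} applies only to exact $\ell$-by-$\ell$ submatrices, while the tail mass is spread over coordinates of varying probability. The shifted-block charging resolves this. It loses only a constant factor, since the first block $S_1$ is never charged, and it never requires a rectangle with a partial side.
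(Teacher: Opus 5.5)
Your proof is correct, and it takes a different route from the paper's.

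The paper argues by contradiction using a single random rectangle. It draws $\ell$ distinct nonzero labels sequentially from $P$, each draw conditioned to avoid $0$ and the earlier labels. Every conditioning event has probability at least $a=\tail_\ell(P)$, so $\Prob[A_i=x]\le P(x)/a$. It does the same for $Q$ with $b=\tail_\ell(Q)$. The sampled $\ell$-by-$\ell$ submatrix then has expected density at most $e/(ab)$, so $e<ab/4$ would force an exceptional rectangle.

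You replace that marginal bound with the monotone ordering: each tail block is dominated pointwise by the minimum of the preceding full block. You then sum the guaranteed density over the disjoint rectangles $S_j\times T_{j'}$. Both routes give the constant $4$.

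\begin{itemize}
\item The paper's version is shorter and uses only one rectangle. It never needs sorting or the bookkeeping of a partial last block.
\item Your version is direct and deterministic, with no sequential conditional sampling and no contradiction step. It also yields the slightly stronger estimate $e\ge\frac{\ell^2}{4}\sum_{j,j'}\pi_j\rho_{j'}$, summed over pairs of full blocks. This can be much larger than $\frac14\tail_\ell(P)\tail_\ell(Q)$ when the probabilities drop sharply from one block to the next.
\end{itemize}
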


\begin{proof}
Put $a=\tail_\ell(P)$ and $b=\tail_\ell(Q)$. If either is zero
the conclusion holds. Otherwise sample $\ell$ distinct nonzero
labels $A_1,\ldots,A_\ell$ by drawing sequentially from $P$,
conditioning each draw to avoid $0$ and all previous labels.
At every stage, the conditioning event has $P$-probability at
least $a$. Therefore, for every stage $i$ and every label $x$,
\[
 \Prob[A_i=x]\le P(x)/a.
\]
Independently perform the corresponding procedure for $Q$ to
obtain $B_1,\ldots,B_\ell$, with
$\Prob[B_j=y]\le Q(y)/b$.

For every $i,j$, independence between the two procedures gives
\[
 \E[G(A_i,B_j)]
 \le\frac{1}{ab}\sum_{x,y\in[k]}P(x)Q(y)G(x,y)
 =\frac{e}{ab}.
\]
If $e<ab/4$, the expected edge density of the sampled submatrix
would be less than $1/4$, so some such submatrix would have
density less than $1/4$. This contradicts nonexceptionality.
\end{proof}

\begin{prop}[Conditional endpoint tails]
\label{prop:endpoint-tails}
Let $\mc C$ be a symmetric radius-$s$ output rule, with $s\ge1$.
For an edge $uv$ let $F=F_{s-1}(uv)$ and define
\[
 z_u(F)=\tail_\ell(\mc C_u\mid F),
 \qquad z_v(F)=\tail_\ell(\mc C_v\mid F),
\]
where $\ell$ and $q$ are as in Lemma~\ref{lem:matrix}. Then
\begin{equation}
 \E[z_u(F)z_v(F)]\le4\epsilon(\mc C)+q.
 \label{eq:endpoint-tails}
\end{equation}
\end{prop}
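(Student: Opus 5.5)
The plan is to condition on the shared view $F=F_{s-1}(uv)$ and apply Lemma~\ref{lem:diffuse} pointwise. Lemma~\ref{lem:views} says that, conditional on $F$, the outputs $\mc C_u$ and $\mc C_v$ are independent. Fix a realization $F=f$. Let $P_f$ and $Q_f$ denote the conditional distributions of $\mc C_u$ and $\mc C_v$ given $F=f$. Let $g$ be the central matrix $G_{uv}$, which is a component of $f$. Then the conditional conflict probability is
\[
 e(f)=\Prob[\mc C_u,\mc C_v\ne0,\ g(\mc C_u,\mc C_v)=1\mid F=f],
\]
and it equals the conflict probability of two independent draws $A\sim P_f$, $B\sim Q_f$ across the fixed matrix $g$. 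Averaging over $F$ gives $\E[e(F)]=\epsilon(\mc C)$, because the conflict event is measurable with respect to $F$ together with the two outputs.

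Split according to whether $G_{uv}$ is exceptional in the sense of Lemma~\ref{lem:matrix}. If $g$ is nonexceptional, Lemma~\ref{lem:diffuse} applies with $P=P_f$ and $Q=Q_f$. That lemma explicitly allows the distributions to depend arbitrarily on $g$, which matters here since $P_f,Q_f$ depend on all of $f$. It yields $z_u(f)z_v(f)\le4e(f)$. If $g$ is exceptional, use the trivial bound $z_u(f)z_v(f)\le1$.

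Combining the two cases gives the pointwise inequality
\[
 z_u(F)z_v(F)\le4e(F)+\ind_{\{G_{uv}\text{ exceptional}\}}.
\]
Taking expectations, and using that $G_{uv}$ is a uniformly random fair matrix whose exceptional probability is $q$, gives $\E[z_uz_v]\le4\epsilon(\mc C)+q$.

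The only point needing care is justifying that conditioning on $F$ factorizes the joint law of the outputs, so that Lemma~\ref{lem:diffuse}'s independence hypothesis holds. This is exactly the content of Lemma~\ref{lem:views}, which rests on the tree structure and the independent private draws. Beyond that, the argument is routine.
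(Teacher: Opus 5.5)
Your proposal is correct and matches the paper's proof: it uses conditional independence of the endpoint outputs given $F$ (Lemma~\ref{lem:views}), applies Lemma~\ref{lem:diffuse} pointwise when the central matrix is nonexceptional, uses the trivial bound $z_uz_v\le1$ on the exceptional event of probability $q$, and averages over $F$. You also note, as the paper does, that Lemma~\ref{lem:diffuse} does not require the conditional distributions to be independent of the central matrix.
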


\begin{proof}
By Lemma~\ref{lem:views}, the endpoint outputs are independent
conditional on $F$, and the central matrix is known from $F$.
For a nonexceptional central matrix,
Lemma~\ref{lem:diffuse} bounds $z_uz_v$ by four times the
conditional conflict probability. On the exceptional event use
$z_uz_v\le1$. Average over $F$ to obtain the claim.
In particular, no independence between the conditional output
distributions and the central matrix is being assumed.
\end{proof}

\subsection{Edge Tails Bound Vertex Tails}

We next generalize the independent-list argument from the
warmup. An eligible branch\footnote{Recall that a branch is an incident edge-neighborhood to the root.} now provides a list that captures
the output with probability at least $1-t/2$, rather than
with probability one. The capture events at different branches
need not be independent, because they concern the same output.
Only the eligibility indicators and the lists themselves
are independent across branches.

\begin{lem}[Many eligible lists capture the output together]
\label{lem:certificates}
Let $X_1,\ldots,X_D$ be independent finite random variables.
For each $i$, let $L_i$ be an event determined by $X_i$, and
let $C_i$ be an event on the full probability space.
Suppose $0<t\le1$ and
$\Prob[C_i\mid X_i=x]\ge1-t/2$ for every eligible value $x$
(that is, every $x$ for which $L_i$ occurs).
Put $\nu=\sum_i\Prob[L_i]$.
If $e^{-\nu/128}\le t/12$, then
\[
 \Prob\left[\sum_i\ind_{L_i\cap C_i}\ge\nu/8\right]\ge1-3t/4.
\]
\end{lem}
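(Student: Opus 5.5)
The plan has two parts: first show that many branches are eligible, then show that few eligible branches fail to capture the output. Write $L=\sum_i\ind_{L_i}$ and $N=\sum_i\ind_{L_i\cap C_i^c}$. Then $\sum_i\ind_{L_i\cap C_i}=L-N$, so the conclusion holds whenever $L\ge\nu/2$ and $N\le3\nu/8$. We bound the probability that each of these two conditions fails, and a union bound finishes.

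\textbf{Step 1 (many eligible branches).} Each $L_i$ is determined by $X_i$, and the $X_i$ are independent. Hence $L$ is a sum of independent indicators with mean $\nu$. Lemma~\ref{lem:chernoff} with $a=1/2$ gives
\[
 \Prob[L<\nu/2]\le e^{-\nu/8}\le e^{-\nu/128}\le t/12 .
\]

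\textbf{Step 2 (few failed captures).} Note that the events $C_i$ are not independent, so no concentration bound applies to $N$. We use only its expectation and Markov's inequality. For each $i$, condition on $X_i$. On eligible values $x$ we have $\Prob[C_i^c\mid X_i=x]\le t/2$, and on ineligible values the indicator vanishes. Therefore $\Prob[L_i\cap C_i^c]\le(t/2)\Prob[L_i]$, and summing gives $\E N\le t\nu/2$. Markov's inequality then yields
\[
 \Prob[N>3\nu/8]\le\frac{t\nu/2}{3\nu/8}=\frac{4t}{3}.
\]
This is too weak, since the target failure probability is $3t/4$. So the thresholds must be adjusted. Instead we use $L\ge(1-a)\nu$ with $a$ close to $1$. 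Taking $a=7/8$ in Lemma~\ref{lem:chernoff} gives failure probability $e^{-49\nu/128}\le t/12$ for the event $L<\nu/8$. That alone is not enough either, since the required slack is $L-N\ge\nu/8$.

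The cleaner route is to require $L\ge(1-a)\nu$ and $N\le L-\nu/8$. Markov's inequality bounds $\Prob[N\ge(7/8-a)\nu]\le t/(2(7/8-a))$. Achieving $\le 2t/3$ would need $7/8-a\ge3/4$, i.e. $a\le1/8$. With $a=1/8$, Lemma~\ref{lem:chernoff} gives $e^{-\nu/128}\le t/12$, which is exactly the hypothesis. The total failure probability is then at most
\[
 \frac{t}{12}+\frac{t/2}{3/4}=\frac{t}{12}+\frac{2t}{3}=\frac{3t}{4}.
\]
These constants match precisely. So the argument is: on the event $\{L\ge7\nu/8\}\cap\{N<3\nu/4\}$ we have $L-N>\nu/8$, which is the conclusion.

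\textbf{Main point of care.} The key step is bounding $\E N$ by conditioning on $X_i$. The capture events are correlated through the common output, which rules out concentration for $N$; Markov's inequality does not need independence. This is also why the Chernoff threshold must be chosen with $a=1/8$: that choice is what makes the constants $e^{-\nu/128}\le t/12$ and $3t/4$ fit together. The edge case $\nu=0$ cannot occur, since the hypothesis would then force $t\ge12$.
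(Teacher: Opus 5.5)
Your final argument is correct and matches the paper's proof: Lemma~\ref{lem:chernoff} with $a=1/8$ gives $\Prob[L<7\nu/8]\le e^{-\nu/128}\le t/12$. Conditioning on $X_i$ gives $\E N\le t\nu/2$, so Markov yields $\Prob[N\ge3\nu/4]\le 2t/3$, and on the complementary event $L-N>\nu/8$, for a total failure probability of $t/12+2t/3=3t/4$. The abandoned attempts with $a=1/2$ and $a=7/8$ should simply be deleted from the write-up.
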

\begin{proof}
Let $L$ count eligible branches, let $T$ count eligible
branches whose capture event holds, and put $U=L-T$.
The hypothesis gives $\E[U]\le t\nu/2$.
Since the eligibility indicators are independent,
Lemma~\ref{lem:chernoff} with $a=1/8$ gives
$\Prob[L<7\nu/8]\le e^{-\nu/128}$.
On $\{T<\nu/8,L\ge7\nu/8\}$ we have $U\ge3\nu/4$.
Markov's inequality therefore yields
\[
 \Prob[T<\nu/8]
 \le\frac{\E[U]}{3\nu/4}+e^{-\nu/128}
 \le\frac{2t}{3}+\frac{t}{12}=\frac{3t}{4}.
\]
\end{proof}

The occurrence threshold is a constant fraction $\nu/8$,
even when $t$ is small. The small value of $t$ is used to
bound the probability that so many capture events fail.

\begin{lem}[Parameter bounds for branch consolidation]
\label{lem:parameters}
Use $\ell,M,m$ from \eqref{eq:list-sizes}, and set
\begin{equation}
 \eta=\frac1D,\qquad \delta=\frac{256\log D}{D}.
 \label{eq:list-parameters}
\end{equation}
For sufficiently large $D$, uniformly for $D\le\log k\le D^2$,
we have $M/m=\eta$, $m\le D^4$, $\delta<1$, $q\le D^{-2}$,
and, for every integer $h\ge32\log D-1$,
\[
 h\ge2,\qquad \frac{M}{h-1}2^{-h}<\frac{1}{8D}.
\]
\end{lem}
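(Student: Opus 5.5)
The plan is to verify each assertion directly from the definitions in \eqref{eq:list-sizes} and \eqref{eq:list-parameters}, using only that $k\ge e^D$ and $\log k\le D^2$ throughout the regime.

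\textbf{The easy identities.} First, $M/m=M/(DM)=1/D=\eta$ holds by definition. Next, $\delta=256\log D/D<1$ as soon as $D/\log D>256$, which holds for all large $D$. The bounds $m\le D^4$ and $q\le D^{-2}$ were already recorded after \eqref{eq:list-sizes}. For completeness, I will re-derive them as follows.
\begin{itemize}
\item Since $\log(k+1)\le D^2+1$, we get $\ell\le100D^2+101$ and $M\le1000\ell+1=O(D^2)$.
\item Hence $m=DM=O(D^3)\le D^4$ for large $D$.
\item By Lemma~\ref{lem:matrix}, $q\le e^{-\ell^2/10}$. Since $\ell\ge100D$, this gives $q\le e^{-1000D^2}\le D^{-2}$.
\end{itemize}

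\textbf{The threshold inequalities.} Fix an integer $h\ge32\log D-1$. For large $D$ the right side exceeds $3$, so $h\ge2$. For the final inequality, note that $x\mapsto 2^{-x}/(x-1)$ is decreasing for $x\ge2$. It therefore suffices to check the worst case $h_0=\lceil32\log D-1\rceil\ge32\log D-1$. Then
\[
 2^{-h_0}\le 2\cdot 2^{-32\log D}=2D^{-32\log 2}\le 2D^{-22},
\]
because $32\log2>22$. Combined with $M\le D^3$ and $h_0-1\ge1$, this gives
\[
 \frac{M}{h_0-1}2^{-h_0}\le 2D^{3}D^{-22}=2D^{-19}<\frac1{8D}
\]
for large $D$.

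\textbf{Uniformity.} Every estimate above depends only on $D$ through the bounds $D\le\log k\le D^2$, so the threshold for $D$ is absolute. There is no real obstacle here. The only point needing care is that the logarithm is natural, so $2^{-32\log D}=D^{-32\log2}$, and the exponent $32\log2\approx22.2$ comfortably beats the polynomial factor $M\cdot D=O(D^3)$.
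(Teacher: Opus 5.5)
Your proof is correct and follows the paper's argument. The identities and the bounds $m\le D^4$, $q\le D^{-2}$ come straight from \eqref{eq:list-sizes} and Lemma~\ref{lem:matrix}, and the final inequality follows from $M\le D^3$ plus monotonicity in $h$, reducing to the smallest admissible $h$; your computation $2^{-32\log D}=D^{-32\log 2}\le D^{-22}$ simply makes the paper's $o(D^{-1})$ step explicit.
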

\begin{proof}
The size and matrix bounds were verified after
\eqref{eq:list-sizes}. The ratio $M/m$ is exact and
$256\log D/D<1$ eventually.
For the last inequality, use $M\le D^3$ and monotonicity in $h$:
\[
 \frac{M}{h-1}2^{-h}
 \le\frac{D^3\,2^{-(32\log D-1)}}{32\log D-2}
 =o(D^{-1}).
\]
All thresholds depend only on absolute constants.
\end{proof}

The next lemma says that if every fixed $M$-coordinate list
leaves probability at least $t$ outside, then only a small
fraction of branches can offer an $\ell$-coordinate list
with the substantially smaller error $t/2$.
\begin{lem}[Relative branch transfer with a uniform exception bound]
\label{lem:branch-transfer}
Use the parameters of Lemma~\ref{lem:parameters}.
Let $X_1,\ldots,X_D$ be independent finite random variables, and
let $Z\in\{0\}\cup[k]$ have any conditional distribution given them.
Put $t=\tail_M(Z)$.
If $t\ge1/D$, then, for an independent uniform $I\in[D]$,
\begin{equation}
 \Prob_{I,X_I}\left[\tail_\ell(Z\mid I,X_I)<t/2\right]\le\delta.
 \label{eq:branch-transfer}
\end{equation}
The assertion also holds conditional on any fixed inner view
under which the branch extensions are independent.
\end{lem}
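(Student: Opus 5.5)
We argue by contradiction, following the warmup proof of Proposition~\ref{prop:warm-list} but with Lemma~\ref{lem:certificates} replacing exact support containment. Call branch $i$ \emph{eligible} at value $x$ when $\tail_\ell(Z\mid X_i=x)<t/2$. Let $L_i$ be this event, which is determined by $X_i$. On $L_i$, let $S_i(X_i)$ be the corresponding optimal list of at most $\ell$ coordinates; otherwise let $S_i=\emptyset$. Let $C_i$ be the capture event $\{Z\in S_i\cup\{0\}\}$. By definition, $\Prob[C_i\mid X_i=x]\ge1-t/2$ for every eligible $x$. The lists $S_i$ and indicators $\ind_{L_i}$ depend only on $X_i$, so they are independent across $i$. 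Put $\nu=\sum_i\Prob[L_i]$. The left side of \eqref{eq:branch-transfer} is exactly $\nu/D$, so suppose for contradiction that $\nu>\delta D=256\log D$.

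The first step is to verify the hypothesis of Lemma~\ref{lem:certificates}. We have $e^{-\nu/128}<e^{-2\log D}=D^{-2}$. Since $t\ge1/D$, this is at most $t/12$ once $D\ge12$. Lemma~\ref{lem:certificates} then gives
\[
\Prob\Bigl[\textstyle\sum_i\ind_{L_i\cap C_i}\ge\nu/8\Bigr]\ge1-3t/4.
\]
On this event, if $Z\ne0$, then $Z$ lies in at least $h:=\lceil\nu/8\rceil$ of the independent lists $S_i$. Note that $h\ge32\log D$, and hence $h\ge 32\log D-1$ as required by Lemma~\ref{lem:parameters}.

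The second step applies Lemma~\ref{lem:frequent-labels} to the independent lists $S_1,\ldots,S_D$. Take $B$ to be the $M$ heaviest labels. The total weight satisfies $C\le \ell\,\nu$, because only eligible branches contribute and each contributes at most $\ell$ labels. This is the one delicate point: using $C\le D\ell$ would be too lossy when $\nu$ is only of order $\log D$. With $C\le\ell\nu$ and $h\ge\nu/8$, we get $eC/(hM)\le 8e\ell/M\le 8e/1000<1/2$. Then \eqref{eq:frequent-labels} gives
\[
\Prob[\exists a\notin B:\ a\text{ lies in at least }h\text{ lists}]\le \frac{M}{h-1}2^{-h}<\frac1{8D},
\]
by Lemma~\ref{lem:parameters}.

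Combining the two steps, we get
\[
\Prob[Z\notin B\cup\{0\}]\le \frac{3t}{4}+\frac1{8D}\le\frac{3t}4+\frac t8<t.
\]
Since $|B|\le M$, this contradicts $t=\tail_M(Z)$. Hence $\nu\le\delta D$, which is \eqref{eq:branch-transfer}.

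For the conditional version, fix the inner view $y$ and run the same argument with all probabilities conditional on $Y=y$. The hypotheses used are the conditional independence of the $X_i$ given $y$ and $t=\tail_M(Z\mid Y=y)\ge1/D$, and both lemmas apply verbatim in the conditional space. I expect the main obstacle to be the bookkeeping in the second step: one must bound the list weight by $\ell\nu$ rather than $\ell D$, and check that $h$ measured against $\nu$ keeps the ratio $eC/(hM)$ below $1/2$. The rest is routine verification of constants.
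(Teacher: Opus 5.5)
Your proposal is correct and follows the paper's proof essentially step for step: you argue by contradiction from $\nu>\delta D$, use Lemma~\ref{lem:certificates} to get simultaneous capture with probability at least $1-3t/4$, and then apply Lemma~\ref{lem:frequent-labels} with the weight bound $C\le\ell\nu$ and the tail estimate of Lemma~\ref{lem:parameters}. The only cosmetic difference is that you take $h=\lceil\nu/8\rceil$, which is valid because the capture count is an integer, where the paper takes $h=\lfloor\nu/8\rfloor\ge\nu/16$; this gives you the slightly better ratio bound $8e/1000$ in place of the paper's $16e/1000$.
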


\begin{proof}
Suppose the displayed probability exceeds $\delta$.
Call a value of $X_i$ eligible when its conditional $\ell$-tail
is less than $t/2$. Choose a minimizing $\ell$-label list there,
and use the empty list otherwise.
The resulting random lists $S_i$ are independent.
The mean eligible count satisfies $\nu>\delta D=256\log D$.
At eligible values the capture event $Z\in S_i\cup\{0\}$
has conditional probability greater than $1-t/2$.
For sufficiently large $D$,
$e^{-\nu/128}<D^{-2}\le t/12$.
Lemma~\ref{lem:certificates} gives simultaneous capture in at least
$\nu/8$ eligible lists with probability at least $1-3t/4$.

Set $h=\lfloor\nu/8\rfloor$.
Then $h\ge32\log D-1$ and $h\ge\nu/16$.
The total list weight in Lemma~\ref{lem:frequent-labels} is
$C\le\ell\nu$, so $eC/(hM)\le16e/1000<1/2$.
The $M$ labels of largest weight form a fixed list $B$.
By Lemmas~\ref{lem:frequent-labels} and \ref{lem:parameters}, the
probability that some label outside $B$ occurs in at least $h$
lists is less than $1/(8D)\le t/8$.
Thus
\[
 \Prob[Z\in B\cup\{0\}]>1-3t/4-t/8=1-7t/8>1-t,
\]
contradicting the definition of $t$.
For a fixed inner view, apply the same argument to the conditional
product distribution. It uses no identical-distribution assumption.
\end{proof}

\subsection{Sandwiching Vertex Tails between Occupancy and Conflict via Entropy}

Fix a radius-$s$ rule and a smaller view $Y=N_{s-1}(v)$.
Retain its $m$ most probable nonzero coordinates given $Y$.
The \emph{kept rule} returns the original coordinate when it
belongs to this list and returns $0$ otherwise.
The \emph{deleted rule} returns the original coordinate exactly
when it is nonzero and outside this list.
They partition the original nonzero outputs at each vertex.

Why should the deleted mass be small?
Every deleted coordinate has conditional probability at most
$1/m$, so a much shorter list of $M=m/D$ coordinates can
capture at most $1/D$ of that mass.
The branch lemma (Lemma \ref{lem:branch-transfer})
says that this spread in the smaller view
usually persists in the shared edge view.
The endpoint-tails proposition (Proposition \ref{prop:endpoint-tails})
then forces a conflict when it persists at
both endpoints. The scalar entropy estimate (Lemma \ref{lem:scalar-entropy}) ensures that
the large deleted masses cannot avoid meeting too often.
\begin{prop}[Conditional truncation]
\label{prop:truncation}
Use parameters from Lemma~\ref{lem:parameters}.
For a symmetric radius-$s$ rule $\mc A$, $s\ge1$, let $P_Y$
be its $m$ most probable nonzero labels given $Y=N_{s-1}(v)$.
Let $\mc A^{\keep}$ retain exactly the original outputs in $P_Y$,
and let $\mc A^{\del}$ retain exactly the original nonzero outputs
outside $P_Y$. Other outputs become zero.
With $\tau=p(\mc A^{\del})$ and $w=\epsilon(\mc A^{\del})$,
\begin{align}
 p(\mc A^{\keep})&=p(\mc A)-\tau, \label{eq:kept-marginal}\\
 \epsilon(\mc A^{\keep})+w&\le\epsilon(\mc A),
       \label{eq:kept-conflict}\\
 \tau&\le\frac{2100\log D}{D}+6\sqrt w.
       \label{eq:truncation}
\end{align}
Both rules preserve symmetry and radius, and the kept rule is conditionally supported on $P_Y\cup\{0\}$.
\end{prop}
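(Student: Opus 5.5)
The three claims are of very different difficulty, so I would dispose of \eqref{eq:kept-marginal} and \eqref{eq:kept-conflict} first. Both rules are obtained from $\mc A$ by zeroing outputs according to the list $P_Y$, which is a symmetric function of $Y=N_{s-1}(v)$ (ties broken by coordinate order). Hence symmetry and radius are preserved, and the kept rule is supported on $P_Y\cup\{0\}$. Every nonzero output of $\mc A$ is nonzero in exactly one of $\mc A^{\keep}$, $\mc A^{\del}$, which gives \eqref{eq:kept-marginal}. For \eqref{eq:kept-conflict}, the kept-conflict event (both endpoints kept, entry $1$) and the deleted-conflict event (both endpoints deleted, entry $1$) are disjoint subevents of the conflict event of $\mc A$.

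For \eqref{eq:truncation}, write $Z_v=\mc A^{\del}_v$, $d(Y)=\Prob[Z_v\ne0\mid Y]$, so $\tau=\E d$.

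\textbf{Step 1 (short lists capture little).} Every deleted label lies outside the $m$ most probable labels given $Y$, so its conditional probability is at most $1/m$. Any $M$-list therefore captures at most $M/m=\eta=1/D$ of $Z_v$, and
\[
 t(Y):=\tail_M(Z_v\mid Y)\ge d(Y)-1/D.
\]
Set $g_v=t(Y_v)\ind_{\{t(Y_v)\ge1/D\}}$. Then $g_v\ge d-2/D$, so $\tau\le\E g_v+2/D$. Moreover $g_v$ is a symmetric function of the radius-$(s-1)$ view with values in $[0,1]$. Put $\mu=\E g_v$.

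\textbf{Step 2 (transfer to the shared edge view).} Fix $Y$ with $t(Y)\ge1/D$. By Lemma~\ref{lem:views}, the branch extensions $X_i$ are conditionally independent given $Y$, and $F_{s-1}(vv_i)$ is specified by $(Y,X_i)$. Lemma~\ref{lem:branch-transfer}, in its fixed-inner-view form, then says that a uniformly random branch is \emph{bad}, meaning $\tail_\ell(Z_v\mid Y,X_i)<t(Y)/2$, with probability at most $\delta$.

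Averaging $g_v\cdot\frac1D\sum_i\ind_{\mathrm{bad}_i}$ over $Y$ and using branch symmetry, which makes the specified branch toward $u$ behave like the uniform branch after averaging, gives
\[
 \E[g_v\ind_{\mathrm{bad}_v}]\le\delta\mu .
\]
On the event that neither endpoint's branch toward the other is bad, the endpoint tails of Proposition~\ref{prop:endpoint-tails} satisfy $z_uz_v\ge g_ug_v/4$. Hence
\[
 \E[z_uz_v]\ge\tfrac14\bigl(\E[g_ug_v]-2\delta\mu\bigr),
\]
where the loss was bounded using $g_u\le1$ together with the weighted estimate at the other endpoint. Combined with \eqref{eq:endpoint-tails} this yields
\[
 \E[g_ug_v]\le16w+4q+2\delta\mu .
\]

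\textbf{Step 3 (lower bound by entropy, then solve).} Lemma~\ref{lem:scalar-entropy} applied to $f=g$ gives
\[
 \E[g_ug_v]\ge\mu^2/2-2h(\mu)/D .
\]
If $\mu\le1/D$, the proposition follows directly from Step 1. Otherwise use $h(\mu)\le\mu\log(e/\mu)\le\mu(1+\log D)$ and $q\le D^{-2}\le\mu/D$, and divide by $\mu$ on the terms that carry a factor $\mu$. This yields a quadratic inequality of the form
\[
 \mu^2\le\mu\,(4\delta+O(\log D/D))+32w ,
\]
and hence
\[
 \mu\le4\delta+O(\log D/D)+\sqrt{32w}.
\]
With $\delta=256\log D/D$ this is $\mu\le(1024+O(1))\log D/D+6\sqrt w$, and adding the $2/D$ from Step 1 fits inside $2100\log D/D$.

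The main obstacle I expect is Step 2: justifying the weighted bound $\E[g_v\ind_{\mathrm{bad}_v}]\le\delta\mu$. The branch-transfer lemma is stated for a uniform branch at a fixed inner view, whereas the edge $uv$ fixes one specific branch, and I must use symmetry only after averaging over $Y$. A bound of $2\delta$ rather than $2\delta\mu$ would only give $\mu=O(\sqrt{\delta})$, which is too weak. Keeping the factor $\mu$, and doing the constant bookkeeping in the final quadratic, is where the care lies.
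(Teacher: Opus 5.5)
Your proposal is correct and follows the paper's proof essentially step for step: the $1/m$ bound giving $\tail_M\ge d(Y)-\eta$, then Lemma~\ref{lem:branch-transfer} at a fixed inner view with branch symmetry used only after weighting by a $Y$-measurable scalar, then Proposition~\ref{prop:endpoint-tails} off the bad events, and finally Lemma~\ref{lem:scalar-entropy} to close the quadratic. The only difference is normalization, and it only shifts constants (slightly in your favor): you use $g_v=t(Y_v)\ind_{\{t(Y_v)\ge1/D\}}$ with badness threshold $t/2$, which costs a factor $4$ in $z_uz_v\ge g_ug_v/4$ but gives $\tau\le\mu+2/D$, whereas the paper uses $f_v=\tfrac12(\tau_{Y_v}-2\eta)_+$ with threshold $f_u$, so that $z_{uv}z_{vu}\ge f_uf_v$ but $\tau\le2\mu+2\eta$.
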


\begin{proof}
Break all ties by coordinate order.
The symmetry, support, and occupancy claims follow directly.
Couple the rules using the original private draw at each vertex.
The conflicts with two kept outputs and those with two deleted
outputs are disjoint subsets of the original conflict event.
The omitted mixed conflicts are nonnegative, proving
\eqref{eq:kept-conflict}.

Let $\tau_Y=\Prob[\mc A_v^{\del}\ne0\mid Y]$.
Every deleted label has conditional probability at most $1/m$.
Consequently
\[
 t_Y:=\tail_M(\mc A_v^{\del}\mid Y)\ge\tau_Y-M/m=\tau_Y-\eta.
\]
Put $f_v=\tfrac12(\tau_{Y_v}-2\eta)_+$ and $\mu=\E f_v$.
Then $0\le f_v\le1$ and $\tau\le2\mu+2\eta$.
At an edge $uv$ with shared view $F=F_{s-1}(uv)$ put
$z_{uv}=\tail_\ell(\mc A_u^{\del}\mid F)$ and $E_{uv}=\{z_{uv}<f_u\}$,
and define the corresponding quantities at $v$.

If $f_u=0$, failure is impossible.
Otherwise $t_{Y_u}>\eta$ and $f_u\le t_{Y_u}/2$, so
Lemma~\ref{lem:branch-transfer} bounds the conditional failure
probability, averaged over directions, by $\delta$.
Weight by the $Y_u$-measurable $f_u$ and then average over $Y_u$.
Unconditional branch symmetry gives, for a specified edge,
\[
 \E[\ind_{E_{uv}}f_uf_v]\le\E[\ind_{E_{uv}}f_u]=\frac{1}{D}\sum_{w\sim u}\E[\ind_{E_{uw}}f_u] = \frac{1}{D}\sum_{w\sim u}\E[f_u \Pr[E_{uw}\mid Y_u]]\le\delta\E[f_u]=\delta\mu.
\]
The same holds at $v$.
When neither $E_{uv}$ nor $E_{vu}$ happens, $z_{uv}z_{vu}\ge f_uf_v$.
Proposition~\ref{prop:endpoint-tails} and
Lemma~\ref{lem:scalar-entropy} imply
\[
 \frac{\mu^2}{2}-\frac{2h(\mu)}D
 \le\E[f_uf_v]\le4w+q+2\delta\mu.
\]

If $\mu<1/D$, then $\tau<4/D$, which suffices.
Otherwise
$h(\mu)\le\mu\log(e/\mu)\le2\mu\log D$ for large $D$.
Therefore
\[
 \mu^2\le8w+2q+\frac{1032\log D}{D}\mu,
 \qquad
 \mu\le\frac{1032\log D}{D}+2\sqrt2\,\sqrt w+\sqrt{2q}.
\]
Using $q\le D^{-2}$ and $\eta=D^{-1}$ gives
\[
 \tau\le\frac{2064\log D}{D}+4\sqrt2\,\sqrt w
                           +\frac{2\sqrt2+2}{D}
       \le\frac{2100\log D}{D}+6\sqrt w.
\]
\end{proof}

In this proof the two endpoint functions $f_u,f_v$ were never
assumed independent. The entropy estimate supplied precisely
the lower bound on $\E[f_uf_v]$ that was needed.
Likewise, branch symmetry was used only after averaging over
the inner view.

There is also an exact version of the conflict bookkeeping.
Couple kept and deleted outputs using the same original
private draw at each vertex. If $x$ is the probability of a
conflict having one kept and one deleted endpoint, counting
both orientations, then
\begin{equation}
 \epsilon(\mc A)
 =\epsilon(\mc A^{\keep})+\epsilon(\mc A^{\del})+x.
 \label{eq:mixed}
\end{equation}
In particular, when $\epsilon(\mc A)=0$, both restricted
rules have zero conflict. Conflicts can first appear when
the kept rule is conditionally resampled, not when it is
truncated.

\Needspace{5\baselineskip}
\section{Warmups to the Final Bound}
\label{sec:sequential}

We now have the two operations needed for iteration:
truncation loses little mass unless the deleted part creates
conflicts, and conditional resampling removes one round at a
controlled information cost.
Before using the final accounting, we analyze this iteration
in two direct ways.

\subsection{Consolidating One-Round Bounds}

Start with $\mc A_0=\mc A$, of radius $r$, and put
$V_j(v)=N_{r-j}(v)$ for $j=0,\ldots,r$.
At step $j<r$, let $P_j(V_{j+1}(v))$ be the $m$ most probable
nonzero outputs of $\mc A_j$ given the smaller view.
Let $\mc C_j$ and $\mc D_j$ be its kept and deleted rules.
Obtain $\mc A_{j+1}$ by conditionally resampling $\mc C_j$
onto $V_{j+1}$, using new private draws.
All these rules retain the required symmetry.

Define the numbers
\[
 \begin{gathered}
 p_j=p(\mc A_j),\quad
 \epsilon_j=\epsilon(\mc A_j),\quad
 \tau_j=p(\mc D_j)=p_j-p_{j+1},\quad
 w_j=\epsilon(\mc D_j),\\
 d_j=H((\mc C_j)_v\mid V_{j+1}(v))
              -H((\mc C_j)_v\mid V_j(v))\ge0,\qquad
 b=\frac{2100\log D}{D}.
 \end{gathered}
\]
We suppress $v$ in conditioning notation when no confusion is
possible. Propositions~\ref{prop:one-step} and
\ref{prop:truncation} give
\begin{equation}
 \tau_j\le b+6\sqrt{w_j},\qquad
 \sqrt{\epsilon_{j+1}}
 \le\sqrt{\epsilon_j-w_j}+\sqrt{2d_j/D}.
 \label{eq:iteration}
\end{equation}
The second inequality uses
$\epsilon(\mc C_j)\le\epsilon_j-w_j$ from
\eqref{eq:mixed}. Thus $\epsilon_j-w_j\ge0$.

The final rule $\mc A_r$ has radius zero.
\begin{lem}[Zero-radius conflict]
\label{lem:zero}
For a symmetric radius-zero rule, $\epsilon=p^2/2$.
\end{lem}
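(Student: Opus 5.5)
The plan is to unwind the definitions directly. A symmetric radius-zero rule sees the view $N_0(v)$, which is empty. So it assigns one fixed probability vector $\pi_0,\pi_1,\ldots,\pi_k$ to every vertex, and the same vector is used everywhere by symmetry. Conditional on all matrices, the output draws at different vertices are independent. Since the rule ignores the matrices entirely, the outputs $\mc A_u,\mc A_v$ at adjacent vertices are independent of each other and of the central matrix $G_{uv}$. Each is nonzero with probability $p=1-\pi_0$.

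Next, write the conflict probability as a sum over nonzero pairs:
\[
 \epsilon(\mc A)=\sum_{a,b\in[k]}\Prob[\mc A_u=a]\,\Prob[\mc A_v=b]\,\Prob[G_{uv}(a,b)=1]=\sum_{a,b\in[k]}\pi_a\pi_b\cdot\tfrac12 .
\]
This uses the joint independence of the triple $(\mc A_u,\mc A_v,G_{uv})$ and the fact that every entry of $G_{uv}$ is a fair bit. The double sum factors as $\bigl(\sum_{a\ge1}\pi_a\bigr)^2/2=p^2/2$.

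The only point needing care is justifying that the outputs are independent of $G_{uv}$, not merely of each other given the matrices. This follows because the assignment $\pi$ is a function of the empty view, hence constant. The computation is the same one already carried out for $\mc Q$ in Lemma~\ref{lem:warm-forget}. I do not expect a real obstacle here; the lemma is essentially a definitional check.
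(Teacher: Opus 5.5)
Your proposal is correct and matches the paper's proof: the empty view forces one fixed output distribution, the private draws are independent of each other and of $G_{uv}$, both outputs are nonzero with probability $p^2$, and the selected entry is a fair bit. Writing out the sum $\sum_{a,b\in[k]}\pi_a\pi_b\cdot\tfrac12$ is simply a more explicit version of that same computation.
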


\begin{proof}
All vertices use the same fixed output distribution.
Their private draws are independent of each other and of the
central matrix. Both are nonzero with probability $p^2$,
and the selected entry is $1$ with probability $1/2$.
\end{proof}

\begin{lem}[Occupancy from a total conflict budget]
\label{lem:mass-from-budget}
For the above sequence, put
$S=\epsilon_r+\sum_{j<r}w_j$ and $p=p_0$.
Then
\begin{equation}
 p\le rb+\sqrt{36r+2}\sqrt S.
 \label{eq:mass-from-budget}
\end{equation}
\end{lem}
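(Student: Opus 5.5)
We telescope the occupancy losses and then bound the final occupancy using Lemma~\ref{lem:zero}. Since $\tau_j=p_j-p_{j+1}$, we have $p=p_0=p_r+\sum_{j<r}\tau_j$. The first inequality in \eqref{eq:iteration} gives
\[
 \sum_{j<r}\tau_j\le rb+6\sum_{j<r}\sqrt{w_j}.
\]
For the final rule, Lemma~\ref{lem:zero} gives $p_r^2=2\epsilon_r$, so $p_r=\sqrt{2\epsilon_r}$. Hence
\[
 p\le rb+6\sum_{j<r}\sqrt{w_j}+\sqrt2\,\sqrt{\epsilon_r}.
\]

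To convert the right-hand side into a single square root of $S$, we apply Cauchy--Schwarz to the vector of coefficients $(6,\ldots,6,\sqrt2)$, with $r$ entries equal to $6$, paired with the vector $(\sqrt{w_0},\ldots,\sqrt{w_{r-1}},\sqrt{\epsilon_r})$. The squared norm of the coefficient vector is $36r+2$, and the squared norm of the second vector is exactly $S$. This yields $6\sum_{j<r}\sqrt{w_j}+\sqrt{2\epsilon_r}\le\sqrt{36r+2}\,\sqrt S$, which is \eqref{eq:mass-from-budget}.

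There is no real obstacle here. The only points to check are that each $w_j\ge0$ and $\epsilon_r\ge0$, which holds because these are probabilities, and that $\mc A_r$ is a genuine symmetric radius-zero rule, so that Lemma~\ref{lem:zero} applies. The latter holds because conditional resampling preserves symmetry. The second inequality in \eqref{eq:iteration} is not needed for this lemma; it will only enter later, when $S$ itself is bounded in terms of the $d_j$.
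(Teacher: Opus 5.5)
Your proposal is correct and follows the paper's proof exactly. You telescope $p=p_r+\sum_{j<r}\tau_j$, apply the truncation bound $\tau_j\le b+6\sqrt{w_j}$ together with $p_r=\sqrt{2\epsilon_r}$ from Lemma~\ref{lem:zero}, and finish with Cauchy--Schwarz on the $r+1$ terms.
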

\begin{proof}
The occupancy differences telescope:
$p=p_r+\sum_{j<r}\tau_j$.
Use \eqref{eq:iteration} and Lemma~\ref{lem:zero} to obtain
\[
 p\le rb+6\sum_{j<r}\sqrt{w_j}+\sqrt{2\epsilon_r}.
\]
Cauchy--Schwarz on these $r+1$ terms proves the result.
\end{proof}

\begin{lem}[The cost of accumulating one-step errors]
\label{lem:sequential-budget}
If $\epsilon_0=0$, then
\begin{equation}
 S\le \frac2D\left(\sum_{j<r}\sqrt{d_j}\right)^2
      \le\frac{2r}{D}\sum_{j<r}d_j.
 \label{eq:sequential-budget}
\end{equation}
\end{lem}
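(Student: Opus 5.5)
We prove the inequality by tracking the square roots $x_j=\sqrt{\epsilon_j}$ along the iteration. The key input is the second inequality of \eqref{eq:iteration}:
\[
 \sqrt{\epsilon_{j+1}}\le\sqrt{\epsilon_j-w_j}+\sqrt{2d_j/D}.
\]
The total $S$ adds each deleted conflict $w_j$ to the final conflict $\epsilon_r$. So we want a single quantity that controls $\epsilon_r+\sum_{j<r}w_j$ and grows by at most $\sqrt{2d_j/D}$ per step. The natural choice is
\[
 R_j=\Bigl(\epsilon_j+\sum_{i<j}w_i\Bigr)^{1/2},
\]
which satisfies $R_0=\sqrt{\epsilon_0}=0$ and $R_r=\sqrt S$. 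The plan is to prove
\[
 R_{j+1}\le R_j+\sqrt{2d_j/D}
\]
and then telescope. Telescoping gives $\sqrt S\le\sum_{j<r}\sqrt{2d_j/D}$, which is the first inequality of \eqref{eq:sequential-budget} after squaring. The second inequality then follows from Cauchy--Schwarz, $(\sum_{j<r}\sqrt{d_j})^2\le r\sum_{j<r}d_j$.

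For the one-step inequality, write $W_j=\sum_{i<j}w_i$, $a=\sqrt{\epsilon_j-w_j}$, which is real since $\epsilon_j-w_j\ge0$ by \eqref{eq:mixed}, and $c=\sqrt{2d_j/D}$. Then $\epsilon_{j+1}\le(a+c)^2$, so
\[
 R_{j+1}^2=\epsilon_{j+1}+W_j+w_j\le(a+c)^2+(W_j+w_j).
\]
View the right side as the squared Euclidean norm of the vector $(a+c,\sqrt{W_j+w_j})$. By the triangle inequality this norm is at most the norm of $(a,\sqrt{W_j+w_j})$ plus the norm of $(c,0)$. The first of these is
\[
 \bigl(a^2+W_j+w_j\bigr)^{1/2}=\bigl(\epsilon_j-w_j+W_j+w_j\bigr)^{1/2}=R_j,
\]
and the second is $c$. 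This gives $R_{j+1}\le R_j+c$.

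The only delicate point is choosing the potential so that the deleted conflict $w_j$, subtracted inside $\sqrt{\epsilon_j-w_j}$, reappears in the accumulated sum without loss. This is exactly why the potential $R_j$ sums $w_i$ only over $i<j$: the new term $w_j$ enters at step $j+1$. It is also where the hypothesis $\epsilon_0=0$ enters, through $R_0=0$.
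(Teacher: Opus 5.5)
Your proof is correct and follows the paper's argument: the paper uses the same potential $S_j=\epsilon_j+\sum_{i<j}w_i$ (your $R_j^2$), proves $S_{j+1}\le(\sqrt{S_j}+\sqrt{2d_j/D})^2$, and concludes by induction from $S_0=0$ followed by Cauchy--Schwarz. The only cosmetic difference is in the one-step bound: the paper expands the square and uses $\epsilon_j-w_j\le S_j$ in the cross term, whereas you apply the triangle inequality to the vectors $(a,\sqrt{W_j+w_j})$ and $(c,0)$.
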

\begin{proof}
Put $W_j=\sum_{i<j}w_i$ and $S_j=\epsilon_j+W_j$.
By \eqref{eq:iteration},
\[
 \begin{split}
 S_{j+1}
 &\le \epsilon_j-w_j+\frac{2d_j}{D}
       +2\sqrt{(\epsilon_j-w_j)\frac{2d_j}{D}}
       +W_j+w_j\\
 &\le S_j+\frac{2d_j}{D}
       +2\sqrt{S_j\frac{2d_j}{D}}
   =\left(\sqrt{S_j}+\sqrt{2d_j/D}\right)^2.
 \end{split}
\]
Since $S_0=0$, induction yields the first inequality.
Cauchy--Schwarz yields the second.
\end{proof}

\subsection{First ($\tilde{\Omega}(D^{1/3})$) Bound}

The kept rule has at most $m$ nonzero coordinates given
$V_{j+1}$. Its occupancy is $p_{j+1}\le p$.
Lemma~\ref{lem:occupancy-entropy} and monotonicity give
\[
 d_j\le H((\mc C_j)_v\mid V_{j+1})
       \le p\log(em/p).
\]
Using this same upper bound at all $r$ steps is legitimate,
but it charges a potentially surviving output repeatedly.

\begin{prop}[A first bound with a cubic term]
\label{prop:cubic}
Under the assumptions of Theorem~\ref{thm:main},
\[
 p\le \frac{4200r\log D}{D}
          +\frac{532r^3\log D}{D}.
\]
\end{prop}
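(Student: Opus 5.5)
We will combine Lemma~\ref{lem:mass-from-budget} with Lemma~\ref{lem:sequential-budget}, using the crude per-step entropy bound just displayed. Start from the original rule $\mc A_0=\mc A$ with $\epsilon_0=0$, run the truncation/resampling sequence, and record $p=p_0$. Lemma~\ref{lem:mass-from-budget} gives $p\le rb+\sqrt{36r+2}\,\sqrt S$. Lemma~\ref{lem:sequential-budget} gives $S\le\frac{2r}{D}\sum_{j<r}d_j$. Each $d_j$ is bounded by $H((\mc C_j)_v\mid V_{j+1})$. The kept rule $\mc C_j$ is conditionally supported on at most $m$ nonzero labels, and its occupancy is $p_{j+1}\le p$. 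Lemma~\ref{lem:occupancy-entropy} and monotonicity of $x\mapsto x\log(em/x)$ then give $d_j\le p\log(em/p)$. Hence
\[
 S\le\frac{2r^2}{D}\,p\log\frac{em}{p}.
\]

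Split into cases on $p$. If $p<1/D$, the claimed bound is immediate, since $4200r\log D/D\ge 1/D$. Otherwise $p\ge1/D$ and $m\le D^4$ give $\log(em/p)\le 1+5\log D\le 6\log D$ for large $D$. Using $36r+2\le 38r$, we get
\[
 p\le rb+\sqrt{38r\cdot\frac{12r^2\log D}{D}\,p}
  = rb+\sqrt{\frac{456\,r^3\log D}{D}}\,\sqrt p .
\]
This is a quadratic inequality in $\sqrt p$ of the form $x^2\le \beta+\alpha x$. It implies $x^2\le 2\beta+\alpha^2$, by $\alpha x\le \frac{x^2}{2}+\frac{\alpha^2}{2}$. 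So $p\le 2rb+\frac{456r^3\log D}{D}$. With $b=2100\log D/D$, this gives $2rb=4200r\log D/D$, and $456\le 532$, as claimed.

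The main obstacle is only constant bookkeeping. One needs $\log(em/p)$ to be $O(\log D)$, which is where the case split at $p=1/D$ and $m\le D^4$ enter. One also needs the slack in $36r+2$ and $1+5\log D$ to fit under $532$. If the stated constant is tight, we would instead keep $\log(em/p)\le 1+5\log D\le(5+\tfrac1{\log D})\log D$ and $36r+2\le 38r$ more carefully. Any constant of the form $76\cdot 7=532$ arises naturally from $\log(em/p)\le 7\log D$ and $36r+2\le38r$, so the intended route is probably exactly this, with a slightly looser entropy estimate.
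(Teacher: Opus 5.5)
Your proof is correct and follows the paper's argument. You combine Lemma~\ref{lem:mass-from-budget} with Lemma~\ref{lem:sequential-budget}, bound each $d_j$ by $p\log(em/p)$, split at $p=1/D$, and absorb the $\sqrt{p}$ term by AM--GM; your sharper estimate $\log(em/p)\le 1+5\log D\le 6\log D$ gives the constant $456$, below the paper's $532$ (which comes from the looser $7\log D$), and the paper's extra case $r>D$ is indeed unnecessary here since the logarithm does not involve $r$.
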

\begin{proof}
The preceding estimate and Lemma~\ref{lem:sequential-budget}
give $S\le2r^2p\log(em/p)/D$.
Insert this into Lemma~\ref{lem:mass-from-budget}:
\[
 p\le rb+\sqrt{\frac{76r^3p}{D}\log\frac{em}{p}},
\]
where $2(36r+2)\le76r$.
If $r>D$ or $p<1/D$, the claimed bound is immediate.
Otherwise $m\le D^4$ and $p\ge1/D$ imply
$\log(em/p)\le7\log D$.
Use $\sqrt{Ap}\le(p+A)/2$ with
$A=532r^3\log D/D$, then rearrange.
\end{proof}

This is already a growing-round result: it gives $p=o(1)$
for $r=o((D/\log D)^{1/3})$.
There are two distinct losses in its accounting.
First, we bounded the sum of information costs by $r$ times
one worst-case cost. Second, we accumulated square-root errors
along the whole chain, producing the factor $r$ in
\eqref{eq:sequential-budget}.
We address the first loss next.

\subsection{Second ($\tilde{\Omega}(D^{1/2})$) Bound}

Conditional resampling changes uncertainty about unseen inputs
into private output randomness. This randomness is still
present at the next step; it has not disappeared.
Truncation can erase some of it by changing coordinates to $0$,
but only coordinates actually deleted can cause this loss.
The following calculation formalizes this explanation.

\begin{lem}[Entropy accounting over the entire sequence]
\label{lem:entropy-accounting}
For the reduction sequence,
\begin{equation}
 \sum_{j<r}d_j
 \le p_1\log\frac{emr}{p_1}
 \le p_0\log\frac{emr}{p_0},
 \label{eq:entropy-accounting}
\end{equation}
with zero-mass expressions interpreted as zero.
\end{lem}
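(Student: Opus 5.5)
The plan is to telescope the entropies along the reduction sequence. Write $a_j=H((\mc C_j)_v\mid V_j)$ and $b_j=H((\mc C_j)_v\mid V_{j+1})$ for $0\le j<r$, so that $d_j=b_j-a_j$. Two structural facts link consecutive steps.

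First, $\mc A_{j+1}$ is the conditional resampling of $\mc C_j$ onto $V_{j+1}$. Its conditional output distribution given $V_{j+1}$ is exactly that of $\mc C_j$, so $H((\mc A_{j+1})_v\mid V_{j+1})=b_j$.

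Second, $\mc C_{j+1}$ is obtained from $\mc A_{j+1}$ by truncation to the list $P_{j+1}(V_{j+2})$. Since $V_{j+2}$ is a function of $V_{j+1}$, the output $(\mc C_{j+1})_v$ is a function of $((\mc A_{j+1})_v,V_{j+1})$. Hence
\[
 b_j-a_{j+1}=H((\mc A_{j+1})_v\mid (\mc C_{j+1})_v,V_{j+1})\ge0 .
\]
This gap is the entropy destroyed by the $(j+1)$-st truncation.

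Rearranging the sum and dropping $a_0\ge0$ gives
\[
 \sum_{j<r}d_j = b_{r-1}-a_0+\sum_{j<r-1}(b_j-a_{j+1})
 \le b_{r-1}+\sum_{j=1}^{r-1}H((\mc A_j)_v\mid(\mc C_j)_v,V_j).
\]
It remains to bound each term by a function of the mass involved.

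For the last term, $\mc A_r$ given $V_r$ is supported on at most $m$ nonzero labels, since $\mc C_{r-1}$ was. Lemma~\ref{lem:occupancy-entropy} then gives $b_{r-1}\le p_r\log(em/p_r)$.

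For the truncation terms, fix $1\le j\le r-1$ and let $J$ indicate that $\mc A_j$ is deleted, that is, $(\mc A_j)_v\ne(\mc C_j)_v$. Given $(\mc C_j)_v$ and $J$, the output $(\mc A_j)_v$ is determined unless $J=1$. On $\{J=1\}$, it lies in the support of $\mc A_j$ given $V_j$, which is again contained in the $m$ labels of $P_{j-1}(V_j)$. So
\[
 H((\mc A_j)_v\mid(\mc C_j)_v,V_j)\le H(J\mid V_j)+\tau_j\log m\le h(\tau_j)+\tau_j\log m\le\tau_j\log(em/\tau_j),
\]
exactly as in the proof of Lemma~\ref{lem:occupancy-entropy}, with $\Prob[J=1]=\tau_j$.

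Combining, $\sum_{j<r}d_j\le\sum_{j=1}^{r-1}\phi(\tau_j)+\phi(p_r)$, where $\phi(x)=x\log(em/x)$. The $r$ arguments sum to $p_r+\sum_{j=1}^{r-1}\tau_j=p_1$ by telescoping. Since $\phi$ is concave, Jensen's inequality over these $r$ terms gives a total of at most $r\,\phi(p_1/r)=p_1\log(emr/p_1)$. The second inequality in \eqref{eq:entropy-accounting} follows from $p_1\le p_0$ and the monotonicity of $x\mapsto x\log(emr/x)$ on $[0,1]$ from Lemma~\ref{lem:occupancy-entropy}. The zero-mass terms are handled by the stated convention $\phi(0)=0$.

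The step needing the most care is the support claim used in the truncation terms. I must check that the resampled rule $\mc A_j$ really is conditionally supported on at most $m$ labels given its own view $V_j$. This holds because its kernel is an average of $\mc C_{j-1}$'s kernel, which is supported on $P_{j-1}(V_j)\cup\{0\}$ for the fixed view $V_j$. Without this, the deleted coordinates could range over all of $[k]$ and would cost $\log k$ rather than $\log m$.
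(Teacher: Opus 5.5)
Your proof is correct and follows essentially the same route as the paper: your $b_j$ is the paper's $\Phi_{j+1}=H((\mc A_{j+1})_v\mid V_{j+1})$, and your gap $b_j-a_{j+1}$ is its truncation loss $L_{j+1}$. You telescope, drop $H((\mc C_0)_v\mid V_0)$, bound each piece by $x\log(em/x)$, and finish with Jensen's inequality for the concave map $x\mapsto x\log(em/x)$ where the paper normalizes the masses and bounds the entropy of an $r$-valued distribution by $\log r$, which is the same computation.
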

\begin{proof}
Write $\Phi_j=H((\mc A_j)_v\mid V_j)$ for the private
output entropy at stage $j$.
For $j\ge1$, the outputs of $\mc A_j$ given $V_j$ lie in
$P_{j-1}(V_j)\cup\{0\}$, a set of at most $m$ nonzero
coordinates. The same is true for $\mc D_j$.
Set
\[
 L_j=\Phi_j-H((\mc C_j)_v\mid V_j)\quad(j\ge1).
\]
The kept output is a function of $\mc A_j$ and $V_j$, while
the pair of kept and deleted outputs determines $\mc A_j$.
The chain rule and conditioning therefore give
\[
 0\le L_j
 =H((\mc A_j)_v\mid(\mc C_j)_v,V_j)
 \le H((\mc D_j)_v\mid V_j)
 \le \tau_j\log(em/\tau_j).
\]
By the definition of conditional resampling,
\[
 \Phi_{j+1}=H((\mc C_j)_v\mid V_{j+1})
           =H((\mc C_j)_v\mid V_j)+d_j.
\]
At $j=0$ this reads
$\Phi_1=H((\mc C_0)_v\mid V_0)+d_0$.
At $j\ge1$ it reads $\Phi_{j+1}=\Phi_j-L_j+d_j$.
Telescoping these equalities gives
\[
 \sum_{j<r}d_j
 =\Phi_r-H((\mc C_0)_v\mid V_0)+\sum_{j=1}^{r-1}L_j
 \le p_r\log(em/p_r)+\sum_{j=1}^{r-1}\tau_j\log(em/\tau_j).
\]
For the last step, $V_r$ is empty and the final output has at
most $m$ nonzero coordinates.

The $r$ masses $p_r,\tau_1,\ldots,\tau_{r-1}$ sum to $p_1$.
If $p_1>0$, normalize them to a probability vector
$\alpha_1,\ldots,\alpha_r$.
Then the right side equals
\[
 p_1\log(em/p_1)+p_1\sum_{i=1}^r\alpha_i\log(1/\alpha_i)
 \le p_1\log(emr/p_1).
\]
The bound on the entropy of an $r$-valued variable was proved
in Section~\ref{sec:model}.
If $p_1=0$, all terms vanish.
Finally $p_1\le p_0$ and $x\log(emr/x)$ is increasing on
$[0,1]$.
\end{proof}

\begin{prop}[A second bound with a quadratic term]
\label{prop:quadratic}
Under the assumptions of Theorem~\ref{thm:main},
\[
 p\le\frac{4200r\log D}{D}
         +\frac{532r^2\log D}{D}.
\]
\end{prop}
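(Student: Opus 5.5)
The plan mirrors Proposition~\ref{prop:cubic}. The only change is that the worst-case charge $d_j\le p\log(em/p)$ at every step is replaced by the telescoped entropy accounting of Lemma~\ref{lem:entropy-accounting}. Start from Lemma~\ref{lem:sequential-budget}, which applies because $\epsilon_0=\epsilon(\mc A)=0$, and use its second (Cauchy--Schwarz) form:
\[
 S\le\frac{2r}{D}\sum_{j<r}d_j\le\frac{2r}{D}\,p\log\frac{emr}{p},
\]
where the last step is Lemma~\ref{lem:entropy-accounting} with $p=p_0$. Compared with the cubic argument, this saves exactly one factor of $r$ inside $S$, giving $r$ instead of $r^2$.

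Next insert this into Lemma~\ref{lem:mass-from-budget}. Using $2(36r+2)\le 76r$ as before, we get
\[
 p\le rb+\sqrt{\frac{76r^2p}{D}\log\frac{emr}{p}} .
\]
Dispose of the trivial cases first. If $r>D$, the right side of the claim exceeds $1$. If $p<1/D$, the claim holds since $4200r\log D/D\ge 1/D$. Otherwise $1/D\le p$, $r\le D$ and $m\le D^4$ give
\[
 \log(emr/p)\le 1+4\log D+\log D+\log D\le 7\log D
\]
for large $D$. Then $\sqrt{Ap}$ with $A=532r^2\log D/D$ bounds the square root. Apply $\sqrt{Ap}\le(p+A)/2$ to get $p\le rb+(p+A)/2$, hence $p\le 2rb+A$. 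Finally $2b=4200\log D/D$, which gives exactly the stated bound.

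The step needing the most care is the logarithm estimate. The extra factor $r$ inside $\log(emr/p)$ must be absorbed, and this is why the case $r>D$ is discarded first, so that $\log r\le\log D$. The constant $7$ then still suffices: $76\cdot 7=532$ matches the stated coefficient. Everything else is a direct substitution of already-proved lemmas, and no new probabilistic input is needed.
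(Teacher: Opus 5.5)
Your proposal is correct and follows the paper's proof essentially step for step. You combine the second form of Lemma~\ref{lem:sequential-budget} with Lemma~\ref{lem:entropy-accounting} to get $S\le 2rp\log(emr/p)/D$, substitute this into Lemma~\ref{lem:mass-from-budget}, dispose of the cases $r>D$ and $p<1/D$, bound $\log(emr/p)\le 1+6\log D\le 7\log D$, and finish with $\sqrt{Ap}\le(p+A)/2$, where $A=532r^2\log D/D$.
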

\begin{proof}
Combine Lemmas~\ref{lem:sequential-budget} and
\ref{lem:entropy-accounting} to get
$S\le2rp\log(emr/p)/D$.
Lemma~\ref{lem:mass-from-budget} gives
\[
 p\le rb+\sqrt{\frac{76r^2p}{D}\log\frac{emr}{p}}.
\]
If $r>D$ or $p<1/D$, the claim is immediate.
Otherwise $m\le D^4$, $r\le D$, and $p\ge1/D$ imply
$\log(emr/p)\le1+6\log D\le7\log D$.
Apply $\sqrt{Ap}\le(p+A)/2$ with
$A=532r^2\log D/D$.
\end{proof}

This gives $p=o(1)$ for $r=o(\sqrt{D/\log D})$.
The only remaining extra factor in this accounting is the
$r$ in the conversion from $\sum_jd_j$ to $S$.
It comes from following the error of an intermediate rule
through every later resampling.
We now stop following that chain and instead compare each
deleted part directly with a restriction of the original
zero-conflict rule.

\Needspace{5\baselineskip}
\section{Exploiting Multi-Round Entropy Bounds}
\label{sec:joint}

There are two tasks left. First, show that hiding several
layers at once has a cost depending only on the uncertainty
remaining in the smaller view. Second, identify an appropriate
zero-conflict original rule to which this estimate can be
applied for each deletion.

\subsection{A Multi-Step Version of Proposition \ref{prop:one-step}}

The one-step proof in Section~\ref{sec:one-step} used
independence of the old endpoint outputs after fixing their
shared view. If many layers are hidden, that independence
need not hold. The conditional edge-entropy inequality from
Section~\ref{sec:tree-entropy} replaces it.

\begin{prop}[Direct comparison with a smaller-view rule]
\label{prop:direct-projection}
Let $\mc Z$ be a symmetric radius-$R$ rule, and let
$0\le t\le R$. Define the radius-$t$ rule $\mc Q$ by
independent private draws with probabilities
\[
 \Prob[\mc Q_v=a\mid N_t(v)=y]
       =\Prob[\mc Z_v=a\mid N_t(v)=y]
       \qquad(a=0,\ldots,k).
\]
Then $p(\mc Q)=p(\mc Z)$ and
\begin{equation}
 \sqrt{\epsilon(\mc Q)}
 \le\sqrt{\epsilon(\mc Z)}
 +\sqrt{\frac2D\left(
       H(\mc Z_v\mid N_t(v))-H(\mc Z_v\mid N_R(v))
                       \right)}.
 \label{eq:direct-projection}
\end{equation}
In particular, if $\epsilon(\mc Z)=0$, $z := p(\mc Z)$,
and at most $m$ nonzero coordinates are possible given each
view $N_t(v)$, then
\begin{equation}
 \epsilon(\mc Q)\le\frac{2z}{D}\log\frac{em}{z}.
 \label{eq:independent-projection}
\end{equation}
The last expression is zero at $z=0$.
\end{prop}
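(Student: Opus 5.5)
The plan is to mirror Proposition~\ref{prop:one-step}. The one change is that we no longer have conditional independence of the old outputs given the shared view; Lemma~\ref{lem:conditional-edge-entropy} stands in for it. Occupancy is preserved because $\mc Q$ averages the conditional probabilities of $\mc Z$ over the hidden data, so the marginal law at a single vertex is unchanged and $p(\mc Q)=p(\mc Z)$.

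Fix an edge $uv$ and put $F=F_t(uv)$, which contains $G_{uv}$, $N_t(u)$ and $N_t(v)$. Let $P$ be the actual joint law of $(F,\mc Z_u,\mc Z_v)$. Let $Q$ keep the law of $F$ and draw the two outputs independently from $\Prob[\mc Z_u=\cdot\mid N_t(u)]$ and $\Prob[\mc Z_v=\cdot\mid N_t(v)]$. This $Q$ is exactly the law that $\mc Q$ generates at the edge. The first identity of Lemma~\ref{lem:resampling-kl}, which does not need conditional independence, together with symmetry gives
\[
 \KL(P\Vert Q)=2H(\mc Z_v\mid N_t(v))-H(\mc Z_u,\mc Z_v\mid F).
\]
Now apply Lemma~\ref{lem:conditional-edge-entropy} with this $t$ and $K=H(\mc Z_v\mid N_R(v))$. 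It says $H(\mc Z_u,\mc Z_v\mid F)\ge\frac{2(D-1)}{D}H(\mc Z_v\mid N_t(v))+\frac2DK$. Substituting,
\[
 \KL(P\Vert Q)\le\frac2D\bigl(H(\mc Z_v\mid N_t(v))-K\bigr).
\]
Since $F$ contains $G_{uv}$, the conflict event is the same event in both spaces. Lemma~\ref{lem:sqrt-event} then yields \eqref{eq:direct-projection}. The case $t=R$ is trivial: there $\mc Q=\mc Z$ and the entropy difference is zero.

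For \eqref{eq:independent-projection}, set $\epsilon(\mc Z)=0$ and square. It remains to bound the bracket, which is at most $H(\mc Z_v\mid N_t(v))$ because $K\ge0$. By hypothesis, given each view $N_t(v)$ at most $m$ nonzero coordinates are possible. Lemma~\ref{lem:occupancy-entropy} therefore bounds this entropy by $z\log(em/z)$, which gives $\epsilon(\mc Q)\le\frac{2z}{D}\log\frac{em}{z}$. When $z=0$ the rule never selects, so both sides vanish.

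The main point requiring care is the first step, the use of Lemma~\ref{lem:resampling-kl} without independence. Its first identity uses only that the resampling probabilities for the two outputs are functions of $F$. Here $N_t(u)$ and $N_t(v)$ are coordinates of $F$, so the lemma applies with $Y_A=N_t(u)$ and $Y_B=N_t(v)$. One must also check that the joint support of $P$ lies in that of $Q$, which the lemma's proof handles. After that, the entire gain over Proposition~\ref{prop:one-step} comes from the telescoped edge-entropy inequality, which bounds the joint conditional entropy across all hidden layers at once.
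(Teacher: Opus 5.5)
Your proposal is correct and follows the paper's proof essentially step for step. You use the first identity of Lemma~\ref{lem:resampling-kl} without conditional independence, bound it by the telescoped edge-entropy inequality of Lemma~\ref{lem:conditional-edge-entropy}, apply Lemma~\ref{lem:sqrt-event} to the conflict event, and then drop $K\ge0$ and invoke Lemma~\ref{lem:occupancy-entropy}; your separate treatment of $t=R$ is unnecessary, since the general argument already covers it, but it is harmless.
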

\begin{proof}
Occupancy is preserved by the law of total probability.
Let $P$ be the actual distribution of
$(F_t,\mc Z_u,\mc Z_v)$, with
$F_t=\{G_{uv}\}\cup N_t(u)\cup N_t(v)$.
Let $Q$ keep the same $F_t$ distribution and draw the
two outputs independently from their smaller-view
probabilities. Because both smaller views are determined
by $F_t$, this is precisely the edge distribution of $\mc Q$.
By the first equality of Lemma~\ref{lem:resampling-kl},
\[
 \KL(P\Vert Q)
 =2H(\mc Z_v\mid N_t(v))
                -H(\mc Z_u,\mc Z_v\mid F_t).
\]
Unlike our other applications of Lemma~\ref{lem:resampling-kl},
this identity does not require conditional independence of
the old outputs. Lemma~\ref{lem:conditional-edge-entropy}
bounds it by
\[
 \frac2D\bigl(H(\mc Z_v\mid N_t(v))
                         -H(\mc Z_v\mid N_R(v))\bigr).
\]
Apply Lemma~\ref{lem:sqrt-event} to the conflict event.
The central matrix is part of $F_t$, so the event is
identical under the two distributions.
This proves \eqref{eq:direct-projection}.
When the old conflict is zero, square the inequality,
drop the nonnegative full-view entropy, and apply
Lemma~\ref{lem:occupancy-entropy}.
\end{proof}

There is no factor $R-t$ in this bound.
It is useful when a rule with zero conflict has few possible
nonzero coordinates after conditioning on the smaller view.
The original rule need not have that property, so we next
construct restrictions of it that do.

\subsection{Combining Truncation Steps}

Return to the sequence from Section~\ref{sec:sequential}.
Once $\mc A_0$ is fixed, every set-valued function
$P_j(V_{j+1})$ is fixed as well: it is computed from the
conditional output probabilities, not from a realized private
draw. We can therefore test the \emph{same original output}
against all these lists.

At a vertex where $\mc A_0$ outputs a nonzero coordinate $a$,
ask whether $a$ belongs to $P_0(V_1)$, then to $P_1(V_2)$,
and so on. Record the first rejection, or record that $a$
passes every list. Keep the coordinate unchanged whenever
it is recorded.

\begin{figure}[htbp]
\centering
\begin{tikzpicture}[x=1cm,y=1cm,>=Latex,
 test/.style={draw,rounded corners,fill=blue!5,align=center,
              minimum width=4cm,minimum height=8mm,font=\small},
 resultbox/.style={draw,rounded corners,fill=orange!8,align=center,
              minimum width=2.7cm,minimum height=8mm,font=\small}]
 \node[font=\small] (start) at (0,0.55) {original nonzero coordinate $a$};
 \node[test] (t0) at (0,-0.35) {$a\in P_0(V_1)$?};
 \node[test] (t1) at (0,-1.7) {$a\in P_1(V_2)$?};
 \node[font=\small] (dots) at (0,-2.8) {$\vdots$};
 \node[test] (tr) at (0,-3.8) {$a\in P_{r-1}(V_r)$?};
 \node[resultbox,fill=green!8] (last) at (0,-5.1) {final survivor $\widehat{\mc A}_r$};
 \node[resultbox] (d0) at (4.8,-0.35) {$\widehat{\mc D}_0$ keeps $a$};
 \node[resultbox] (d1) at (4.8,-1.7) {$\widehat{\mc D}_1$ keeps $a$};
 \node[resultbox] (dr) at (4.8,-3.8) {$\widehat{\mc D}_{r-1}$ keeps $a$};
 \draw[->] (start)--(t0);
 \draw[->] (t0)--node[left,font=\small] {yes}(t1);
 \draw[->] (t1)--(dots);
 \draw[->] (dots)--(tr);
 \draw[->] (tr)--node[left,font=\small] {yes}(last);
 \draw[->] (t0)--node[above,font=\small] {no}(d0);
 \draw[->] (t1)--node[above,font=\small] {no}(d1);
 \draw[->] (tr)--node[above,font=\small] {no}(dr);
\end{tikzpicture}
\caption{Every original nonzero output belongs to exactly one
first-failure part or to the final survivor. The coordinate
$a$ is unchanged throughout these tests. There are no new
output draws in this diagram. Each resulting part is a
restriction of the original independent-set selection.}
\label{fig:first-failure}
\end{figure}

\begin{lem}[Original survivors have the required conditional probabilities]
\label{lem:ancestors}
Use the lists and views of Section~\ref{sec:sequential}.
On the probability space of the original rule, define
\[
 (\widehat{\mc A}_j)_v=
 \begin{cases}
  (\mc A_0)_v,&
       (\mc A_0)_v\ne0\text{ and }
       (\mc A_0)_v\in P_i(V_{i+1}(v))\text{ for every }i<j,\\
  0,&\text{otherwise}.
 \end{cases}
\]
Let $\widehat{\mc D}_j$ retain the original nonzero coordinate
exactly when it passes every earlier list and fails
$P_j(V_{j+1})$. Then, for every $a\in\{0,\ldots,k\}$,
\begin{align}
 \Prob[(\mc A_j)_v=a\mid V_j]
    &=\Prob[(\widehat{\mc A}_j)_v=a\mid V_j],
      \label{eq:ancestor-law}\\
 \Prob[(\mc D_j)_v=a\mid V_j]
    &=\Prob[(\widehat{\mc D}_j)_v=a\mid V_j].
      \label{eq:deleted-ancestor-law}
\end{align}
All hatted rules are symmetric radius-$r$ restrictions of
$\mc A_0$. If $\epsilon(\mc A_0)=0$, they all have zero conflict.
Their first-failure parts and the final survivor partition
the original nonzero outputs.
For $j\ge1$, the possible nonzero outputs of
$\widehat{\mc A}_j$ and $\widehat{\mc D}_j$, given $V_j$,
belong to a list of at most $m$ coordinates.
\end{lem}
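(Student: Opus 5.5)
The plan is to prove \eqref{eq:ancestor-law} by induction on $j$, and then read off the deleted-part identity and the structural claims. For $j=0$ there are no earlier lists, so $\widehat{\mc A}_0=\mc A_0$ and the identity is trivial.

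\textbf{Inductive step.} Assume \eqref{eq:ancestor-law} holds at stage $j$, conditional on $V_j$. The truncation $\mc C_j$ keeps $(\mc A_j)_v=a$ exactly when $a\in P_j(V_{j+1})$. Since $V_{j+1}\subseteq V_j$, the list is $V_j$-measurable. Hence
\[
 \Prob[(\mc C_j)_v=a\mid V_j]=\ind_{\{a\in P_j(V_{j+1})\}}\Prob[(\mc A_j)_v=a\mid V_j]
\]
for $a\ne0$. The same formula with $\widehat{\mc A}_j$ in place of $\mc A_j$ gives $\Prob[(\widehat{\mc A}_{j+1})_v=a\mid V_j]$. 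Here the key point is that the original coordinate is unchanged by the tests, so the event ``passes $P_j$'' is the same indicator times the event $\{(\widehat{\mc A}_j)_v=a\}$. By the inductive hypothesis the two conditional laws of $\mc C_j$ and $\widehat{\mc A}_{j+1}$ given $V_j$ agree. The same argument with the complementary indicator gives \eqref{eq:deleted-ancestor-law} for $\mc D_j$ versus $\widehat{\mc D}_j$.

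\textbf{Passing to the smaller view.} Conditional resampling defines $\Prob[(\mc A_{j+1})_v=a\mid V_{j+1}]=\Prob[(\mc C_j)_v=a\mid V_{j+1}]$. Averaging the agreed $V_j$-conditional laws over $V_j$ given $V_{j+1}$ (tower property, since $V_{j+1}$ is a function of $V_j$) yields $\Prob[(\widehat{\mc A}_{j+1})_v=a\mid V_{j+1}]$, and this closes the induction. The only delicate point is making sure that the private draws of the resampled rules do not matter. They do not: both identities concern single-vertex conditional laws, and the new private draws are integrated out. The whole argument is bookkeeping of measurability, and that is where I expect the only real care to be needed, namely checking that each $P_i(V_{i+1})$ is measurable with respect to every view it is conditioned on.

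\textbf{Structural claims.} The hatted rules are defined from $(\mc A_0)_v$ together with the lists, which are functions of $N_r(v)$. Hence they have radius $r$, and they inherit symmetry because the lists are canonically defined with coordinate-order tie-breaking. Each hatted rule outputs either the original coordinate or $0$, so any conflict of a hatted rule is a conflict of $\mc A_0$; thus $\epsilon(\mc A_0)=0$ forces zero conflict. The first-failure index is unique, so the parts $\widehat{\mc D}_0,\ldots,\widehat{\mc D}_{r-1},\widehat{\mc A}_r$ partition the nonzero outputs. Finally, for $j\ge1$, the outputs of $\widehat{\mc A}_j$ and $\widehat{\mc D}_j$ passed $P_{j-1}(V_j)$. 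This list is determined by $V_j$ and has size at most $m$, which gives the last assertion.
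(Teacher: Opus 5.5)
Your proposal is correct and follows essentially the same route as the paper. You use induction on $j$, the $V_{j+1}$-measurability of $P_j(V_{j+1})$, the tower property to pass from $V_j$ to $V_{j+1}$, and the complementary indicator for the deleted parts, and you read off the structural claims exactly as the paper does. The only cosmetic difference is that you match the laws of $\mc C_j$ and $\widehat{\mc A}_{j+1}$ given $V_j$ before averaging, whereas the paper pulls the indicator out after conditioning on $V_{j+1}$; in both, the $a=0$ case follows by complementation.
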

\begin{proof}
Prove \eqref{eq:ancestor-law} for nonzero $a$ by induction.
It is immediate at $j=0$.
The next list depends only on $V_{j+1}\subset V_j$.
By the definition of the kept rule and of conditional
resampling,
\[
 \begin{split}
 \Prob[(\mc A_{j+1})_v=a\mid V_{j+1}]
 &=\ind_{\{a\in P_j(V_{j+1})\}}\,
    \E\!\left[\Prob[(\mc A_j)_v=a\mid V_j]\mid V_{j+1}\right]\\
 &=\ind_{\{a\in P_j(V_{j+1})\}}\,
    \Prob[(\widehat{\mc A}_j)_v=a\mid V_{j+1}]\\
 &=\Prob[(\widehat{\mc A}_{j+1})_v=a\mid V_{j+1}].
 \end{split}
\]
The middle equality is the induction hypothesis followed
by averaging over the larger view. This is the tower property
for conditional probabilities.
The probability of $0$ is the remaining probability.
Multiplying the stage-$j$ identity by
$\ind_{\{a\notin P_j(V_{j+1})\}}$ gives
\eqref{eq:deleted-ancestor-law} for nonzero $a$, and hence
also for zero.

Each hatted output is either the original coordinate or zero,
chosen by tests on radius-$r$ matrix data. The tests are
symmetric. Conditional on all matrices, the outputs at
different vertices still depend on independent original
private draws. Thus these are rules in our model.
Their conflict events are subsets of the original conflict
event. Also, an original nonzero coordinate has a unique
first failed test unless it passes them all.

Finally, for $j\ge1$, every nonzero output of either hatted
rule has passed $P_{j-1}(V_j)$. Once $V_j$ is fixed, this
is a fixed list of size at most $m$.
\end{proof}

The lemma asserts equality of conditional probabilities,
not equality between the realized outputs of a resampled
rule and the original rule.
That distinction is necessary: the intermediate rules make
new private draws. It is enough because a rule in our model
is specified by exactly these smaller-view probabilities.

\subsection{Improved Bound on $S$}

\begin{prop}[Joint deleted-conflict budget]
\label{prop:joint-budget}
Assume
$\epsilon(\mc A_0)=0$. Recall that
$p_j=p(\mc A_j)$, $\tau_j=p_j-p_{j+1}$,
$w_j=\epsilon(\mc D_j)$, and $\epsilon_r=\epsilon(\mc A_r)$.
Then
\begin{equation}
 \boxed{\quad
 \epsilon_r+\sum_{j<r}w_j
 \le \frac{2p_1\log(e m r/p_1)}D
 \le \frac{2p_0\log(e m r/p_0)}D.
 \quad} \label{eq:joint-budget}
\end{equation}
Zero-mass expressions (when $p_1=0$ or $p_0=0$) are interpreted by continuity.
\end{prop}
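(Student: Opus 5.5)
The plan is to bound each of the $r+1$ terms on the left of \eqref{eq:joint-budget} separately by Proposition~\ref{prop:direct-projection}, applied not to the intermediate rules but to their zero-conflict ancestors from Lemma~\ref{lem:ancestors}. Then the total is controlled by the same entropy sum that appeared in Lemma~\ref{lem:entropy-accounting}.

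\textbf{Identify each term as a projection.} For $j<r$, the deleted rule $\mc D_j$ is a radius-$(r-j)$ rule. By \eqref{eq:deleted-ancestor-law}, its conditional law given $V_j$ equals that of $\widehat{\mc D}_j$. Its own private draws are independent given $V_j$, because $\mc D_j$ is a deterministic truncation of $\mc A_j$, whose draws given $V_j$ are fresh; for $j=0$ the rule $\mc D_0=\widehat{\mc D}_0$ is itself a restriction of $\mc A_0$. Hence $\mc D_j$ is exactly the rule $\mc Q$ of Proposition~\ref{prop:direct-projection} obtained from $\mc Z=\widehat{\mc D}_j$ (radius $R=r$) with $t=r-j$. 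Likewise $\mc A_r$ is the radius-zero projection of $\widehat{\mc A}_r$ by \eqref{eq:ancestor-law}.

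\textbf{Apply the projection bound.} Each hatted rule has zero conflict by Lemma~\ref{lem:ancestors}. For $j\ge1$, its nonzero outputs given $V_j$ lie in a list of at most $m$ coordinates, so \eqref{eq:independent-projection} gives
\[
 w_j\le\frac{2\tau_j}{D}\log\frac{em}{\tau_j},\qquad
 \epsilon_r\le\frac{2p_r}{D}\log\frac{em}{p_r},
\]
using $p(\widehat{\mc D}_j)=\tau_j$ and $p(\widehat{\mc A}_r)=p_r$.

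The main obstacle is $j=0$. There $\widehat{\mc D}_0=\mc D_0$ has no short list given $V_0=N_r(v)$; its coordinates lie outside $P_0(V_1)$, so the list bound fails. However, $t=R=r$ in this case, so $\mc Q=\mc Z$ and the entropy difference in \eqref{eq:direct-projection} is zero. Thus $w_0\le\epsilon(\widehat{\mc D}_0)=0$, which is consistent with \eqref{eq:mixed}. I would check this degenerate case carefully, since it is exactly why $p_1$ rather than $p_0$ appears.

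\textbf{Sum the terms.} Adding the bounds gives
\[
 \epsilon_r+\sum_{j<r}w_j\le\frac2D\Bigl(p_r\log\frac{em}{p_r}+\sum_{j=1}^{r-1}\tau_j\log\frac{em}{\tau_j}\Bigr).
\]
The masses $p_r,\tau_1,\ldots,\tau_{r-1}$ sum to $p_1$. Normalizing them as in the end of Lemma~\ref{lem:entropy-accounting} and using that the entropy of an $r$-valued variable is at most $\log r$ bounds the bracket by $p_1\log(emr/p_1)$. The final inequality in \eqref{eq:joint-budget} then follows from $p_1\le p_0$ and the monotonicity of $x\log(emr/x)$ on $[0,1]$ (Lemma~\ref{lem:occupancy-entropy}), with zero-mass terms treated as zero.
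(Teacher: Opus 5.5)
Your proposal is correct and follows the paper's proof essentially step for step. You identify $\mc D_j$ ($1\le j<r$) and $\mc A_r$ as smaller-view projections of the zero-conflict hatted restrictions via Lemma~\ref{lem:ancestors}, apply \eqref{eq:independent-projection} using the $m$-coordinate list, set $w_0=0$, and sum with the entropy bound on the $r$ masses totalling $p_1$; your degenerate-case treatment of $j=0$ (taking $t=R$) is equivalent to the paper's direct remark that $\mc D_0=\widehat{\mc D}_0$ is a restriction of the original zero-conflict rule.
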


\begin{proof}
Use the construction of Lemma~\ref{lem:ancestors}. The first deleted rule is a restriction of the original rule,
so $w_0=0$.
For $1\le j<r$, Lemma~\ref{lem:ancestors} identifies $\mc D_j$
as the radius-$(r-j)$ conditional resampling of
$\widehat{\mc D}_j$.
The latter has zero conflict, occupancy $\tau_j$, and at most $m$
nonzero labels conditional on $V_j$.
Proposition~\ref{prop:direct-projection} gives
\[
 w_j\le\frac{2\tau_j\log(e m/\tau_j)}D.
\]
The final survivor $\widehat{\mc A}_r$ has zero conflict and at
most $m$ nonzero labels given the empty view $V_r$.
Its radius-zero resampling is $\mc A_r$, so the same proposition gives
\[
 \epsilon_r\le\frac{2p_r\log(e m/p_r)}D.
\]

The $r$ masses $p_r,\tau_1,\ldots,\tau_{r-1}$ sum to $p_1$.
If $p_1>0$, normalize them to a probability vector
$(\alpha_1,\ldots,\alpha_r)$.
The sum of their expressions $x\log(e m/x)$ equals
\[
 p_1\log(e m/p_1)
       +p_1\sum_{i=1}^r\alpha_i\log(1/\alpha_i)
 \le p_1\log(e m r/p_1).
\]
For $p_1=0$ every term is zero.
Summing the conflict estimates proves the first inequality.
The second follows from $p_1\le p_0$ and monotonicity of
$x\log(e m r/x)$ on $[0,1]$.
\end{proof}

The first deleted part is excluded from the entropy sum for
a specific reason. Before any resampling has taken place,
it already has zero conflict, so $w_0=0$.
It need not have a short conditional list.
Every later first-failure part has passed a previous list,
which is what supplies the $m$-coordinate condition for the
direct comparison.

Although the hatted parts are all restrictions of one
independent-set selection, they are not probabilistically
independent of one another. The proof uses their individual
zero-conflict property and the fact that their masses form
a partition. It makes no independence assumption across parts.

\Needspace{5\baselineskip}
\section{Proof of the Final Bound}
\label{sec:main-proof}

\begin{proof}[Proof of Theorem~\ref{thm:main}]
Choose an absolute $D_0$ large enough for the parameter
bounds, Proposition~\ref{prop:truncation}, and $\log D\ge1$.
All thresholds used there are uniform in the finite radius.
Start the reduction sequence with $\mc A_0=\mc A$, and put
$p=p_0$. If $p=0$, the claim is immediate.

Lemma~\ref{lem:mass-from-budget} and
Proposition~\ref{prop:joint-budget} give
\begin{equation}
 \begin{split}
 p
 &\le rb+\sqrt{36r+2}\,
             \sqrt{\epsilon_r+\sum_{j<r}w_j}\\
 &\le rb+\sqrt{\frac{76rp}{D}\log\frac{emr}{p}},
       \qquad b=\frac{2100\log D}{D},
 \end{split}
 \label{eq:final-self-bound}
\end{equation}
using $2(36r+2)\le76r$.
The right-hand occurrence of $p$ is the total mass of the
original outputs being classified. Thus this is a necessary
inequality for the unknown occupancy, not an assumption
that it is small.

If $r>D$, then $5000r\log D/D\ge1$, so the probability bound
$p\le1$ suffices. If $p<1/D$, the asserted bound also holds.
In the remaining case,
$m\le D^4$, $r\le D$, and $p\ge1/D$, whence
\[
 \log(emr/p)\le 1+6\log D\le7\log D.
\]
Set $A=532r\log D/D$. Inequality
\eqref{eq:final-self-bound} implies
\[
 p\le rb+\sqrt{Ap}
       \le rb+\frac p2+\frac A2.
\]
Subtract $p/2$ and multiply by $2$:
\[
 p\le 2rb+A
   =\frac{4732r\log D}{D}
   \le\frac{5000r\log D}{D}.
\]
Combine with $p\le1$ to finish.
\end{proof}

\section*{Acknowledgements}

We thank Shafi Goldwasser, Amir Abboud, Keren Censor-Hillel, Sepehr Assadi, Oded Goldreich, Jason Li, Christoph Grunau, Nikhil Srivastava, Antares Chen, and Peter Davies-Peck for helpful discussions.

\section{AI Usage}

Many of the key ideas in this work were developed without AI.
By December 2024, the authors had identified the hard instance discussed here
and shown that no 2-round algorithm for MIS exists on this graph.
This 2-round impossibility result used a slightly different strategy from
this work. Instead of bounding occupancy and conflict probabilities,
it bounded the support size of the algorithm's output distribution within a given clique.
The proof used Proposition \ref{prop:endpoint-tails}
and Lemma \ref{lem:branch-transfer}. The authors
wrote up this 2-round impossibility result along with an outline
for how to extend the result to $\Omega(\log \log n)$ rounds. All of this
was done without AI. Extending the approach seemed doable but tedious,
so the authors decided to take a break.

The authors decided to try again in September 2026.
They assembled one very long prompt containing both (a) the 2-round hardness result
and (b) the plan for generalizing to more rounds. After some back and forth, GPT-6
proved the $\tilde{\Omega}(D^{1/3})$-round hardness result present in this writeup.
In particular, GPT-6 came up with the key simplifying
idea: tracking occupancy and conflict probabilities instead of support,
with round elimination done via truncation and conditional resampling.
GPT-6 later produced the $\tilde{\Omega}(D^{1/2})$-round bound
and the final $\tilde{\Omega}(D)$-round bound with no additional assistance.
GPT-6 then produced a Lean 4 formalization of the proof for Theorem \ref{thm:local-main},
which is available at \url{https://github.com/aschild/mis}.

The authors also used GPT-6 Astra and GPT-6.1 Sol to draft the manuscript,
which they subsequently reviewed and revised carefully.
The authors take full responsibility for the contents of this paper.

\bibliographystyle{unsrtnat}
\bibliography{refs}

\appendix
\section{Details of the LOCAL Reduction}
\label{app:local-reduction}

\subsection{Extracting IDs}

\begin{lem}[Independent tentative identifiers]
\label{lem:local-identifiers}
Under the hypotheses of Proposition~\ref{prop:mis-obstruction},
there is a symmetric radius-one function of the edge matrices
that assigns tentative identifiers $J_{v,a}\in[n]$ to all
vertices of the clique blow-up. These identifiers are mutually
independent, and every point has probability at most $2/n$.
For any fixed clique $C_v$, the probability of a collision
among identifiers in $B_{T+2}^H(v)$ is at most $\delta_{\mathrm{ID}}$.
\end{lem}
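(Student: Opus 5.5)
We construct the identifiers exactly as sketched in the proof of Proposition~\ref{prop:mis-obstruction}, then verify symmetry, independence, the point-probability bound, and the collision bound in that order. Split the coordinates into $P=\{1,\ldots,k/2\}$ and $Q=\{k/2+1,\ldots,k\}$. For each oriented edge $v\to u$, read the block $G_{vu}[P,Q]$ in lexicographic order as a bit string $s_{v\to u}$ of length $\lfloor k^2/4\rfloor$. Put $s_v=\bigoplus_{u\sim v}s_{v\to u}$. Cut the first $kb_{\mathrm{ID}}$ bits of $s_v$ into $k$ consecutive chunks $c_{v,1},\ldots,c_{v,k}$ of $b_{\mathrm{ID}}$ bits each; this is possible by the hypothesis $kb_{\mathrm{ID}}\le\lfloor k^2/4\rfloor$. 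Finally set $J_{v,a}=1+(\mathrm{int}(c_{v,a})\bmod n)$.

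\textbf{Symmetry and radius.} The rule uses only the incident matrices, so it has radius one. XOR is commutative, so no port order is needed. Coordinate names are preserved by the allowed isomorphisms, so $P$, $Q$ and the chunk positions are canonical. Hence the rule is symmetric.

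\textbf{Independence.} The key point is that the blocks used are pairwise disjoint sets of fair bits. For $v\to u$ the block consists of the entries $G_{vu}(i,j)$ with $i\in P$, $j\in Q$. For $u\to v$ it consists of the entries $G_{uv}(i,j)=G_{vu}(j,i)$ with $i\in P$, $j\in Q$, which lie in the $Q\times P$ block of $G_{vu}$. These are disjoint from the first block, and different undirected edges carry independent matrices.

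So the family $\{s_{v\to u}\}$ consists of independent uniform strings. The map from this family to $(s_v)_v$ is a disjoint XOR in which each $s_{v\to u}$ enters only $s_v$. Therefore the $s_v$ are independent and uniform. Consequently the chunks $c_{v,a}$ are independent uniform elements of $\{0,1\}^{b_{\mathrm{ID}}}$.

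\textbf{Point probabilities.} Since $2^{b_{\mathrm{ID}}}\ge n$, each residue class mod $n$ has at most $\lceil 2^{b_{\mathrm{ID}}}/n\rceil\le 2\cdot 2^{b_{\mathrm{ID}}}/n$ preimages. Hence each point of $[n]$ has probability at most $2/n$. Mutual independence passes to the functions $J_{v,a}$.

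\textbf{Collisions.} The ball $B_{T+2}^H(v)$ has at most $1+D\sum_{i=0}^{T+1}(D-1)^i$ skeleton vertices, so it contains at most $L$ clique vertices. For two independent identifiers, the collision probability is $\sum_x \Prob[J=x]\Prob[J'=x]\le \max_x \Prob[J'=x]\le 2/n$. A union bound over the at most $L(L-1)/2$ pairs gives total collision probability at most $L(L-1)/n=\delta_{\mathrm{ID}}$.

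The only step needing care is the disjointness of the $P\times Q$ block of $G_{vu}$ and the $P\times Q$ block of $G_{uv}=G_{vu}^{\mathsf T}$, since this is what makes distinct vertices use disjoint random bits; the rest is routine counting.
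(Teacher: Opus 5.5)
Your proposal is correct and follows the paper's proof essentially step for step. It uses the same disjoint $P\times Q$ and $Q\times P$ block extraction, the same XOR over incident edges, the same chunking and reduction modulo $n$, and the same pairwise union bound; the only cosmetic difference is that you bound point probabilities via $\lceil 2^{b_{\mathrm{ID}}}/n\rceil\le 2\cdot 2^{b_{\mathrm{ID}}}/n$ instead of the paper's observation that $2^{b_{\mathrm{ID}}}<2n$ leaves at most two preimages.
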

\begin{proof}
Partition $[k]=P\sqcup Q$ with $|P|=\lfloor k/2\rfloor$.
For every oriented skeleton edge $v\to u$, flatten
$G_{vu}[P,Q]$ in coordinate order into a string $U_{v\to u}$
of length $q=\lfloor k^2/4\rfloor$.
On the same undirected edge, $U_{u\to v}$ is obtained from
$G_{vu}[Q,P]$, transposed. The two blocks are disjoint.
All $U_{v\to u}$ are therefore independent uniform strings.

Let $S_v=\bigoplus_{u\sim v}U_{v\to u}$.
Every $S_v$ is uniform, and different $S_v$ use disjoint
strings. Hence the $S_v$ are independent. XOR requires no
ordering of the branches and is invariant under
coordinate-preserving skeleton isomorphisms.
The capacity inequality $k b_{\mathrm{ID}}\le q$ permits splitting the
first $k b_{\mathrm{ID}}$ bits into $k$ independent uniform strings.
Interpret string $a$ as $X_{v,a}\in\{0,\ldots,2^{b_{\mathrm{ID}}}-1\}$ and put
$J_{v,a}=1+(X_{v,a}\bmod n)$.
Because $n\le2^{b_{\mathrm{ID}}}<2n$, every identifier has at most two
preimages and probability at most $2/2^{b_{\mathrm{ID}}}\le2/n$.
Applying separate functions to the independent strings
preserves their mutual independence.

The radius-$(T+2)$ ball has
$1+D\sum_{i=0}^{T+1}(D-1)^i$ skeleton vertices, by the girth
assumption, and thus $L$ blow-up vertices.
For two distinct vertices $x,y$,
$\Prob[J_x=J_y]\le2/n$ by independence and the point-probability
bound. Union bound over $\binom L2$ pairs gives $\delta_{\mathrm{ID}}$.
No independence between identifiers and graph edges is claimed.
\end{proof}

\subsection{Producing Rules}

\begin{lem}[The repaired simulation]
\label{lem:local-simulation}
For every fixed identifier-indexed tape table $\mc W$,
the candidate and repair construction in
Proposition~\ref{prop:mis-obstruction} defines a symmetric
radius-$(T+3)$ edge-input rule on the infinite $D$-regular
tree. The rule is independent on every matrix realization.
Its occupancy at every fixed clique of the finite skeleton
$H$ is at most
$\min\{1,5000(T+3)\log D/D\}$.
\end{lem}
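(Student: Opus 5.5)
The plan is to check, in order, that the candidate-and-repair construction is a legitimate rule in the sense of Section~\ref{sec:model}: it depends only on radius-$(T+3)$ matrix data and is symmetric. Then verify it is an independent-set rule, and finally invoke Theorem~\ref{thm:main} with $r=T+3$.

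\emph{Locality.} Tentative identifiers at a clique $C_w$ are radius-one functions of the matrices, by Lemma~\ref{lem:local-identifiers}. Whether all identifiers in the radius-$(T+1)$ ball around $w$ are distinct is therefore a radius-$(T+2)$ function. Simulating $\mc A$ for $T$ rounds at a vertex of $C_w$ needs the graph structure, identifiers and tapes of all blow-up vertices within graph distance $T$. Graph distance $T$ in the blow-up stays within skeleton distance $T$. Its identifiers are read from radius $T+1$, and its tapes are the fixed table entries indexed by those identifiers. Port numbers are obtained by sorting neighbors by identifier, so no external port numbering is needed.

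\emph{Candidates and repair.} Deciding whether $w$ simulates uses radius $T+2$, so the candidate set at $C_w$ is a radius-$(T+2)$ function. Retaining a candidate requires knowing the candidacy of its graph neighbors, which lie in neighboring cliques, so the retention step costs one more layer and gives radius $T+3$. Because the tape table is fixed, the output at each vertex is a deterministic function of the matrix data. Conditional independence across roots is then trivial, and the rule is deterministic in the sense of Section~\ref{sec:model}. Each clique outputs at most one coordinate: within a clique all vertices are adjacent, so two retained candidates would neighbor each other, and the retention rule forbids this. Across neighboring cliques, two retained candidates joined by a matrix edge would also be neighboring candidates, so the rule has $\epsilon=0$ on every realization.

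\emph{Symmetry and the finite skeleton.} This is the main obstacle: the construction must be invariant under rooted isomorphisms preserving matrix entries and coordinate names, and must transfer from the infinite tree to $H$. Invariance holds because every ingredient (XOR over incident edges, block flattening in coordinate order, identifier-sorted ports, table lookup) is defined without reference to skeleton names or a branch order. Each step should be checked to commute with such isomorphisms. For the transfer, I would use girth at least $2T+8$: the radius-$(T+3)$ ball of $H$ around $v$ is a tree, isomorphic to the tree ball. On that ball the matrix distributions coincide, so the occupancy of $C_v$ in $H$ equals $p$ of the rule on the tree. Theorem~\ref{thm:main} with $r=T+3$ then gives the bound $\min\{1,5000(T+3)\log D/D\}$, using $D\ge D_0$ and $D\le\log k\le D^2$ from the hypotheses.
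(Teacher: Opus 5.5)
Your proposal is correct and follows the paper's proof essentially step by step: the identifier gate and simulation give candidates at radius $T+2$, and repair adds one layer. Determinism comes from the fixed tape table, symmetry holds because every ingredient is invariant, and girth $2(T+3)+2$ lets you transfer to $H$ before invoking Theorem~\ref{thm:main}.

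Two small points that the paper makes explicit are missing from your account.

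\begin{itemize}
\item A boundary vertex at skeleton distance $T$ has a port number toward its predecessor, and it may know its neighbors' identifiers initially. Both depend on identifiers at skeleton distance $T+1$, so the simulation itself needs radius $T+2$, not $T+1$. Your gate already uses radius $T+2$, so your final count still holds.
\item $\mc A$'s transition functions must be extended arbitrarily to malformed views, which is harmless since correctness of $\mc A$ is never used here. Alternatively, note that when the gate passes, the view completes to a legal $n$-vertex input.
\end{itemize}
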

\begin{proof}
At clique $C_w$, first inspect the tentative identifiers in
$B_{T+1}^H(w)$. If any collide, output no candidates.
Otherwise simulate the $T$ rounds of $\mc A$ at every vertex
of $C_w$, using the tape indexed by its identifier and
port numbers given by the increasing order of neighbor
identifiers. All candidate bits are defined before repair.

A graph path of length $T$ traverses at most $T$ skeleton
edges. Matrices internal to $B_{T+1}^H(w)$ give complete
adjacency lists and degrees for vertices over $B_T^H(w)$.
Together with the identifiers there, they also determine
all incident port numbers needed by the simulation, including
boundary vertices' ports toward their predecessors.
This supplies the full initial data at graph distance at
most $T$, even when neighbor identifiers are given initially.
Each tentative identifier requires one additional skeleton
layer, so candidates depend only on a radius-$(T+2)$ matrix
view. Retaining a candidate only if it has no candidate
neighbor costs one more layer.

Adjacent candidates cannot both survive, and a clique contains
at most one survivor. Thus return its coordinate, or $0$
when there is no survivor. This argument does not use
correctness of $\mc A$ on the tentative input.
All operations are invariant under permutations of skeleton
branches. The tape table is a fixed part of the rule, with
no skeleton vertex distinguished by it.

The algorithm's local transition functions may be extended
arbitrarily to malformed inputs if necessary. Whenever a
candidate gate passes, the local data relevant to its output
can instead be completed to a legal $n$-vertex input: this
tree-shaped neighborhood appears in $H$, and $kN\le n$.
Locality makes the simulated output independent of that
completion. In particular, the proof assumes no correctness
on duplicate identifiers or infinite graphs.

Apply Theorem~\ref{thm:main} on the infinite tree.
Since the girth of $H$ is at least $2(T+3)+2$, every
radius-$(T+3)$ skeleton ball is an induced tree, with the
same matrix distribution as its infinite-tree counterpart.
The same occupancy bound follows at every fixed clique of $H$.
\end{proof}

\subsection{Extracting a Contradictory ID Assignment}

\begin{lem}[Local completion and the tape table]
\label{lem:local-completion}
Fix $C_v$. For every matrix realization, one can choose a legal
identifier assignment $I^v$ on the padded $n$-vertex graph,
independently of the tape table, such that on the event of
distinct tentative identifiers in $B_{T+2}^H(v)$,
the repaired edge-input output at $C_v$ agrees with the
repaired output of $\mc A$ on this legal input.
Conditional on each matrix realization, the assigned private
tapes have the distribution of the original randomized algorithm.
\end{lem}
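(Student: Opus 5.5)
We fix $C_v$ and a matrix realization. On the event $\mathcal E_v$ that the tentative identifiers $J_x$ for $x$ in the blow-up of $B_{T+2}^H(v)$ are all distinct, we build the legal assignment $I^v$ as follows. Keep $I^v(x)=J_x$ for every $x$ in that ball. Then assign the remaining $n-L$ values of $[n]$ bijectively to the remaining $n-L$ vertices of the padded graph, using a canonical rule (say, increasing identifier values to vertices in a fixed enumeration of the padded graph). Off $\mathcal E_v$, let $I^v$ be any fixed bijection. In both cases $I^v$ depends only on the matrices, because the $J_x$ are functions of the matrices (Lemma~\ref{lem:local-identifiers}), and the completion rule uses nothing else. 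So $I^v$ is chosen independently of the tape table $\mc W$. Port numbers are induced by sorting neighbors by $I^v$, as fixed at the start of Section~\ref{sec:local-reduction}, so $(G,I^v)$ is a legal input.

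Next we check agreement on $\mathcal E_v$. By Lemma~\ref{lem:local-simulation}, the repaired edge-input output at $C_v$ depends only on three things: the candidates in cliques at skeleton distance at most $1$ from $v$, the adjacency between those cliques, and the gate checks. Each such candidate at a clique $C_w$ is computed from the tentative identifiers on $B_{T+1}^H(w)\subseteq B_{T+2}^H(v)$, the matrices there, and the tapes $W_{J_x}$. On $\mathcal E_v$ every gate for $w\sim v$ or $w=v$ passes, since $B_{T+1}^H(w)\subseteq B_{T+2}^H(v)$. On this region $I^v$ coincides with $J$, and so do the induced port numbers. Here we must check the boundary vertices' ports: a vertex at skeleton distance $T+1$ from $w$ has neighbors at distance $T+2$ from $v$, whose identifiers are also preserved. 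By locality of a $T$-round algorithm, the simulated candidate equals the $T$-round output of $\mc A$ on $(G,I^v)$ with tapes $W_{I^v(x)}$. The conflict-removal step is identical in both settings, so the repaired outputs at $C_v$ agree.

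Finally we handle the tape distribution. Conditional on the matrices, $I^v$ is a fixed bijection $V\to[n]$ and $\mc W=(W_i)_{i\in[n]}$ is an i.i.d.\ family independent of the matrices. The tapes assigned to the vertices, $(W_{I^v(x)})_x$, are therefore a bijective reindexing of an i.i.d.\ family, hence i.i.d.\ with the tape law. This is exactly the randomness of the original randomized algorithm on the fixed legal input $(G,I^v)$. The step I expect to need the most care is the agreement claim: one must verify that the radius bookkeeping (tentative IDs needing one extra layer, plus one layer for repair) really places every identifier and port used by the simulation inside $B_{T+2}^H(v)$, where $I^v$ preserves $J$. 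The remaining steps are routine.
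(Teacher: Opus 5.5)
Your proposal is correct and follows the paper's proof step for step. You preserve the tentative identifiers on $B_{T+2}^H(v)$ and complete them to a bijection using only the matrices; you get agreement from $B_{T+1}^H(w)\subseteq B_{T+2}^H(v)$ for $w=v$ or $w\sim v$ together with locality; and you observe that $(W_{I^v(x)})_x$ is a permutation of the independent tape table.

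One small slip in your boundary check, though nothing depends on it. The relevant boundary vertices are those at skeleton distance $T$ from $w$, whose neighbors (needed for degrees and sorted ports) lie in $B_{T+1}^H(w)$. Ports of vertices at distance $T+1$ from $w$ are never used, and for $w\sim v$ such vertices can have neighbors at distance $T+3$ from $v$, so your sentence as written is not literally true.
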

\begin{proof}
On the stated event, preserve all identifiers in
$B_{T+2}^H(v)$ and assign the unused identifiers bijectively
to the remaining vertices. Outside the event, use any fixed
bijection. Both choices may use a fixed external enumeration
of vertices; this enumeration is never supplied to the
edge-input rule. Order all ports by the completed identifiers.
The assignment depends only on the matrices.

For $w=v$ or $w\sim v$,
$B_{T+1}^H(w)\subseteq B_{T+2}^H(v)$.
Consequently all candidate gates relevant to repair at $C_v$
pass, and the necessary initial data and tapes agree with
those on the completed input. Locality gives the asserted
agreement of outputs.

The table entries $W_i$, $i\in[n]$, are independent with the
original tape distribution and independent of the matrices.
The completed identifier assignment is bijective, so,
conditional on every matrix realization,
$(W_{I^v(x)})_x$ is a permutation of this independent table.
It therefore has exactly the private-tape distribution of
$\mc A$. No bound on tape length is needed.
\end{proof}

\subsection{Maximality and Finite Skeletons}

\begin{lem}[External domination]
\label{lem:local-domination}
For a fixed clique $C_v$, the probability that some choice
of at most one vertex from each neighboring clique dominates
all of $C_v$ is at most $\gamma_{\mathrm{EXT}}$.
Outside this graph event, every MIS occupies $C_v$.
\end{lem}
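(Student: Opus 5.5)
The plan is to prove Lemma~\ref{lem:local-domination} in two parts: a first-moment bound over all possible dominating patterns, and then a deterministic argument that no MIS can avoid $C_v$ once domination is impossible.

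For the probability bound, fix $C_v$ and its $D$ neighboring cliques $C_{u_1},\ldots,C_{u_D}$; the girth assumption makes these distinct. A \emph{choice} selects, for each $i$, either nothing or one coordinate $b_i\in[k]$ of $C_{u_i}$. There are at most $(k+1)^D$ choices. Fix one choice, and let $s\le D$ be the number of selected vertices. For each $a\in C_v$, the event that $a$ is adjacent to some selected vertex depends only on the entries $G_{vu_i}(a,b_i)$. These are $s$ independent fair bits, so the event has probability $1-2^{-s}\le1-2^{-D}$. For distinct $a$ the relevant entries lie in distinct rows of the matrices $G_{vu_i}$, so the $k$ events are independent. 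Hence the choice dominates all of $C_v$ with probability at most $(1-2^{-D})^k\le\exp(-k2^{-D})$. (When $s=0$ the probability is $0$, which is consistent.) A union bound over all choices gives $\gamma_{\mathrm{EXT}}$.

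For the second part, suppose the graph event fails, and let $I$ be any MIS of the padded graph that misses $C_v$. Every vertex of $C_v$ has all its neighbors inside $C_v$ or inside the cliques $C_{u_i}$, since padding vertices are isolated. Each $C_{u_i}$ is a clique, so $I$ contains at most one vertex from it. Therefore $I\cap\bigcup_i C_{u_i}$ is a choice of at most one vertex per neighboring clique. Maximality of $I$ forces every vertex of $C_v$ to have a neighbor in $I$, and that neighbor must lie in some $C_{u_i}$. So this choice dominates $C_v$, contradicting the failed event. Hence every MIS occupies $C_v$.

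The only step needing care is the independence claim: for a fixed choice, the events for different $a$ must use disjoint matrix entries, and the neighboring cliques must be distinct. The first holds because the entries involve distinct rows $a$. The second follows from girth at least $2T+8\ge3$, since $H$ is simple with no short cycles.
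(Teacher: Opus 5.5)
Your proposal is correct and matches the paper's proof: a union bound over the $(k+1)^D$ external choices, using independence across the $k$ rows to get $(1-2^{-j})^k\le\exp(-k2^{-D})$, followed by the observation that an independent set meets each neighboring clique at most once, so an MIS missing $C_v$ would give a dominating choice. As a minor point, distinctness of the neighboring cliques already follows from $H$ being simple; the girth assumption is not needed there.
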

\begin{proof}
There are $(k+1)^D$ choices, including the empty choice
in each neighboring clique. For a fixed choice of $j\le D$
vertices, every vertex of $C_v$ is undominated with probability
$2^{-j}$, independently across the $k$ rows. Thus the chance
that this choice dominates every row is at most
$(1-2^{-j})^k\le\exp(-k2^{-D})$.
Union bound gives $\gamma_{\mathrm{EXT}}$.
An independent set takes at most one vertex from each
neighboring clique. If it leaves $C_v$ empty outside this
event, a vertex of $C_v$ has no selected neighbor, contradicting
maximality. The graph event quantifies over every external
choice, so the selected vertices may depend on the matrices.
\end{proof}

\begin{lem}[Regular skeletons of prescribed size; folklore]
\label{lem:local-girth}
For integers $D,g\ge3$, put
$B=1+D\sum_{i=0}^{g-3}(D-1)^i$.
Every even integer $N>4B$ admits a simple $D$-regular
graph with girth at least $g$. Moreover, $B\le2D^{g-2}$.
\end{lem}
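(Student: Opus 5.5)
We argue by an extremal construction in the style of Erd\H{o}s--Sachs. Among all simple graphs on the $N$ vertices with maximum degree at most $D$ and girth at least $g$, choose one, $G$, with the maximum number of edges. Such a graph exists because the edgeless graph qualifies, having no cycles. We show that $G$ is $D$-regular by turning any degree deficiency into a modification that adds an edge, which contradicts maximality. Throughout, $B$ is an upper bound on the size of a closed ball of radius $g-2$ in a graph of maximum degree $D$. This holds because at most $D(D-1)^{i}$ vertices lie at distance $i+1$ from the center.

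\textbf{Step 1: deficient vertices are close together.} Let $S$ be the set of vertices of degree less than $D$. Suppose $x,y\in S$ are distinct and $d_G(x,y)\ge g-1$, including the case of different components. Adding the edge $xy$ then creates only cycles of length at least $g$ and violates no degree bound. This contradicts maximality, so all of $S$ lies within distance $g-2$ of any fixed $x\in S$. The total deficiency $DN-2|E(G)|$ is even. Hence if $S\neq\emptyset$, either some $x\in S$ has deficiency at least $2$, or there are two distinct vertices $x,x'\in S$, necessarily with $d(x,x')\le g-2$.

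\textbf{Step 2: find a far edge.} Let $W$ be the ball of radius $g-2$ around $x$, or the union of the balls around $x$ and $x'$ in the second case. Then $|W|\le 2B$ and $S\subseteq W$. The set $V\setminus W$ has more than $N-2B>2B$ vertices, each of full degree $D$. At most $D|W|\le 2BD$ edges leave $W$. The degree sum over $V\setminus W$ exceeds $2BD$, so some edge $yz$ has both endpoints outside $W$. Both endpoints are then at distance at least $g-1$ from $x$, and from $x'$ in the second case.

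\textbf{Step 3: the switch.} In the first case, delete $yz$ and add $xy,xz$. In the second case, delete $yz$ and add $xy,x'z$. Degrees stay at most $D$, the graph stays simple since the distances exceed $1$, and the edge count rises by one. For girth, let $G'=G-yz$ and consider any new cycle.
\begin{itemize}
\item A cycle using exactly one new edge, say $xy$, contains a $y$--$x$ path in $G'$. That path has length at least $g-1$, so the cycle has length at least $g$.
\item A cycle using both new edges either contains a $y$--$z$ path in $G'$, of length at least $g-1$ because $G$ has girth at least $g$, or, in the second case, a $y$--$x'$ path of length at least $g-1$. Either way it has length at least $g$.
\end{itemize}
This contradicts maximality, so $S=\emptyset$ and $G$ is $D$-regular.

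\textbf{Step 4: the bound on $B$.} For $D\ge3$ the ratio $D-1\ge2$, so $\sum_{i=0}^{g-3}(D-1)^i\le 2(D-1)^{g-3}$. Hence $B\le 1+2D(D-1)^{g-3}$. For $g=3$ we have directly $B=1+D\le 2D$. For $g\ge4$,
\[
2D^{g-2}-2D(D-1)^{g-3}=2D\bigl(D^{g-3}-(D-1)^{g-3}\bigr)\ge 2D\ge 1,
\]
which gives $B\le2D^{g-2}$.

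The main obstacle is the parity bookkeeping in Step 1. One needs two units of deficiency near a common far edge so that the switch nets exactly one extra edge. The evenness of $N$ handles this through the even total deficiency, and the two-vertex case needs the union of two balls, which is why the hypothesis is $N>4B$ rather than $N>2B$.
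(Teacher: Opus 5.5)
Your proof is correct and follows the paper's argument essentially step for step. Both take an edge-maximal graph with maximum degree $D$ and girth at least $g$, observe that all deficient vertices lie within distance $g-2$ of one another, find a far edge by counting incidences outside the union of at most two radius-$(g-2)$ balls, and apply the switch that deletes that edge and adds two edges at the deficient vertices. The only differences are cosmetic: you split cases on whether some vertex has deficiency at least two (the paper splits on whether there is only one deficient vertex), and you verify $B\le 2D^{g-2}$ via $\sum_{i\le g-3}(D-1)^i\le 2(D-1)^{g-3}$ rather than via $B\le\sum_{i=0}^{g-2}D^i$.
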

\begin{proof}
Among simple graphs on $N$ fixed vertices with maximum degree
at most $D$ and girth at least $g$, choose one with the largest
number of edges. Suppose it is not regular. Total deficiency
$DN-\sum_x\deg(x)$ is a positive even integer.
Choose deficient vertices $u,v$; if there is only one, take
$u=v$, whose deficiency is then at least two.
Every deficient vertex other than $u$ is at distance at most
$g-2$ from $u$, since otherwise its edge to $u$ could be added.

Let $U=B_{g-2}(u)\cup B_{g-2}(v)$, so $|U|\le2B$.
Every vertex outside $U$ has degree $D$.
If there were no edge $xy$ with both endpoints outside $U$,
then all $D|V\setminus U|$ incidences outside would lead into
$U$, giving $N\le2|U|\le4B$, a contradiction.
Delete such an edge $xy$ and add $ux,vy$; when $u=v$, add
$ux,uy$. This preserves the degree bound and adds one edge.
These new edges are not loops or parallel edges.

A cycle containing only one new edge would give an old path
from $u$ or $v$ to $x$ or $y$ of length below $g-1$, contrary
to their distances. A cycle using both new edges either
contains an old $x$--$y$ path in the graph with $xy$ deleted,
or contains two old paths from $\{u,v\}$ to $\{x,y\}$.
In the first case that path has length at least $g-1$ by the
old girth; in the second case both paths do by distance.
The same first case covers $u=v$.
Hence no cycle of length below $g$ is introduced, contradicting
maximality. The graph was regular.
Finally, the maximum-degree ball bound gives
$B\le\sum_{i=0}^{g-2}D^i\le2D^{g-2}$.
\end{proof}

\subsection{Reduction Parameters}

\begin{lem}[Parameter checks for the LOCAL lower bound]
\label{lem:local-parameters}
Fix $0<\sigma\le1$. Let $n$ tend to infinity, and let $T$ be
any nonnegative integer satisfying
$T<\sigma\log n/(100000\log\log n)$.
For the parameters $D,k,N,g$ in the proof of
Theorem~\ref{thm:local-main}, all the graph and parameter hypotheses of
Proposition~\ref{prop:mis-obstruction} hold, for a suitable
skeleton $H$, for sufficiently large $n$. Its error terms satisfy
$\delta_{\mathrm{ID}}=o(1)$ and $n\gamma_{\mathrm{EXT}}=o(1)$,
uniformly over these choices of $T$.
\end{lem}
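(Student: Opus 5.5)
We check each hypothesis of Proposition~\ref{prop:mis-obstruction} directly from the explicit choices $D=\lfloor(\log n)/4\rfloor$, $k=2\lceil e^D/2\rceil$, $N=2\lfloor n/(2k)\rfloor$, $g=2T+8$. Afterwards we bound the two error terms. All estimates are asymptotic in $n$ with constants depending only on $\sigma$. Every bound on $T$ enters only through the upper bound $T<\sigma\log n/(100000\log\log n)$, so uniformity over admissible $T$ is automatic.

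\textbf{Step 1: the elementary parameter inequalities.} Since $k\approx e^D$, we have $\log k=D+O(e^{-D})$. Hence $D\le\log k\le D^2$ for large $D$, and $D\ge D_0$ once $n$ is large. The integer $k$ is even by construction. Also $kN\le n$, because $N\le n/k$. The capacity condition $k b_{\mathrm{ID}}\le\lfloor k^2/4\rfloor$ reduces to $\lceil\log_2 n\rceil\le k/4-1$. This holds because $k\ge e^D\ge n^{1/4-o(1)}$ is far larger than $\log n$.

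\textbf{Step 2: the skeleton $H$ via Lemma~\ref{lem:local-girth}.} We need $N$ even, which holds by construction, and $N>4B$ with $B\le 2D^{g-2}$. It therefore suffices to show $N>8D^{2T+6}$. We have
\[
N\ge n/k-2\ge n^{3/4}/3.
\]
On the other side, $\log(D^{2T+6})\le(2T+6)\log\log n\le 2\sigma\log n/100000+O(\log\log n)$. Thus $8D^{2T+6}\le n^{2\sigma/100000+o(1)}$, which is eventually below $n^{3/4}/3$. We also need $D,g\ge3$, which is trivial. The resulting $H$ is $D$-regular on $N$ vertices with girth at least $g=2T+8$, as required.

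\textbf{Step 3: the error terms.} For $\delta_{\mathrm{ID}}$, note that $L\le k(1+2D^{T+2})\le 3kD^{T+2}$. Hence
\[
\delta_{\mathrm{ID}}\le 9k^2D^{2T+4}/n\le n^{-1/2+2\sigma/100000+o(1)}=o(1),
\]
using $k^2\le 4e^{2D}\le 4n^{1/2}$.

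For $\gamma_{\mathrm{EXT}}$ we need the stronger claim $n\gamma_{\mathrm{EXT}}=o(1)$. Take logarithms:
\[
\log(n\gamma_{\mathrm{EXT}})\le \log n+D\log(k+1)-k2^{-D}.
\]
The first two terms are at most $5D+D(D+1)=O(D^2)$. The subtracted term is at least $e^{(1-\log2)D}$, which grows exponentially in $D$. So the whole expression tends to $-\infty$.

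\textbf{Main obstacle.} The only delicate point is Step 2's girth-versus-size comparison, since the exponent $2T+6$ grows with $n$. It goes through because the constant $100000$ makes $D^{g}$ a tiny polynomial power of $n$, while $N$ is $n^{3/4-o(1)}$. I would write this out carefully with the $\log D\le\log\log n$ bound. Everything else is routine asymptotics.
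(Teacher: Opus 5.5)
Your proposal is correct and follows essentially the same route as the paper. You verify the elementary inequalities from $D=\lfloor(\log n)/4\rfloor$ and $k\approx e^D$, compare $N=n^{3/4+o(1)}$ against $4B\le 8D^{2T+6}=n^{2\sigma/100000+o(1)}$ to invoke Lemma~\ref{lem:local-girth}, and bound $\delta_{\mathrm{ID}}$ and $\log(n\gamma_{\mathrm{EXT}})$ just as the paper does. Your constants are only slightly more explicit, and your reduction of the capacity condition to $b_{\mathrm{ID}}\le k/4-1$ is merely sufficient rather than equivalent, since $k$ even gives $\lfloor k^2/4\rfloor=k^2/4$.
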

\begin{proof}
We have $D=(\log n)/4+O(1)$ and $k=n^{1/4+o(1)}$.
In particular $D\ge D_0$, $D\le\log k\le D^2$, and
$kN\le n$ for large $n$. Since
$b_{\mathrm{ID}}=O(\log n)$ while $k$ is a positive power
of $n$, also $k b_{\mathrm{ID}}\le\lfloor k^2/4\rfloor$.

The integer $N$ is even and $N=n^{3/4+o(1)}$.
For the ball bound $B$ in Lemma~\ref{lem:local-girth},
\[
 \log(4B)\le\log 8+(2T+6)\log D
 \le(2\sigma/100000+o(1))\log n.
\]
Thus $N>4B$, and that lemma supplies the required
$D$-regular skeleton of girth at least $g=2T+8$.
The local ball contains at most $2D^{T+2}$ skeleton vertices,
so $L\le2kD^{T+2}$ and
\[
 \delta_{\mathrm{ID}}\le L^2/n
 \le n^{-1/2+2\sigma/100000+o(1)}=o(1).
\]
Finally, $k+1\le e^{D+1}$ and $k\ge e^D$ for large $D$.
Hence
\[
 \log(n\gamma_{\mathrm{EXT}})
 \le\log n+D(D+1)-e^{(1-\log2)D}\longrightarrow-\infty.
\]
All remainders and thresholds above depend only on $\sigma$
and the absolute constants. The stronger estimate
$n\gamma_{\mathrm{EXT}}=o(1)$ also justifies the introductory
claim that, with high probability, every MIS intersects every
clique simultaneously: union bound over $N\le n$ cliques.
\end{proof}
\end{document}